\renewcommand{\phi}{\varphi}
\tikzset{lab/.style={fill=none, draw=none}}
\tikzset{cluster/.style={color=Red, fill=LightRed, dashed, scale=1,
    cloud, draw, cloud puffs=10, cloud puff arc=120, aspect=2, inner
    ysep=1em}}
\tikzset{cliq/.style={color=Red, fill=LightRed, scale=1, draw, inner
    ysep=1em}}
\tikzset{indep/.style={color=Red, fill=white, dashed, scale=1, draw,
    inner ysep=1em}}
\newtheorem{theorem}{Theorem}
\newtheorem{lemma}{Lemma}[section]
\newtheorem{claim}[lemma]{Claim}
\newtheorem{corollary}[lemma]{Corollary}
\newtheorem{proposition}[lemma]{Proposition}
\newtheorem{redrule}{Rule}
\newtheorem{observation}[lemma]{Observation}
\theoremstyle{definition}
\newtheorem{definition}[lemma]{Definition}
\theoremstyle{remark}
\newcommand{\defproblem}[3]{
  \hfill\\\smallskip\noindent%
  \begin{tabularx}{\textwidth}{|l X|}%
    \hline%
    \multicolumn{2}{|l|}{\pname{#1}}\\%
    \textbf{Input:}&#2\\%
    \textbf{Question:}&#3\\\hline%
  \end{tabularx}%
  \smallskip%
}%
\newcommand{\pname}[1]{\textnormal{\textsc{#1}}}
\newcommand{\cclass}[1]{\textnormal{\textsf{#1}}}
\newcommand{\NP}{\cclass{NP}}
\newcommand{\TE}{\pname{Threshold Editing}}
\newcommand{\STE}{\pname{Split Threshold Editing}}
\newcommand{\TC}{\pname{Threshold Completion}}
\newcommand{\TD}{\pname{Threshold Deletion}}
\newcommand{\CE}{\pname{Chain Editing}}
\newcommand{\CCE}{\pname{Cobipartite Chordal Editing}}
\newcommand{\BCE}{\pname{Bipartite Chain Editing}}
\newcommand{\modulator}{threshold-modulator}
\newcommand{\cmodulator}{chain-modulator}
\newcommand{\I}{\mathcal{I}}
\newcommand{\C}{\mathcal{C}}
\newcommand{\V}{\mathcal{V}}
\DeclareMathOperator{\poly}{poly} 
\DeclareMathOperator{\tc}{tc} 
\DeclareMathOperator{\level}{lev} 
\DeclareMathOperator{\lev}{lev} 
\DeclareMathOperator{\ttc}{ttc} 
\DeclareMathOperator{\ftc}{ftc} 
\DeclareMathOperator{\en}{en}    
\DeclareMathOperator{\lex}{lex}    
\newcommand{\lexlt}{<_{\lex}}
\newcommand{\order}{\pi_{\phi}} 
\title{On the Threshold of Intractability}
\author{%
  Pål Grønås Drange\thanks{Dept.~Informatics, Univ.~Bergen, Norway,
    \texttt{pal.drange@ii.uib.no}, \texttt{markus.dregi@ii.uib.no},
    \texttt{daniello@ii.uib.no}}
  \and Markus Sortland Dregi\footnotemark[1]
  \and Daniel Lokshtanov\footnotemark[1]
  \and Blair D.\ Sullivan\thanks{Dept.~Computer Science, North Carolina State
    University, Raleigh, NC, USA, \texttt{blair\_sullivan@ncsu.edu}}
}
\begin{document}
\maketitle
\begin{abstract}
  We study the computational complexity of the graph modification problems
  \pname{Threshold Editing} and \pname{Chain Editing}, adding and deleting as
  few edges as possible to transform the input into a threshold (or chain)
  graph.  In this article, we show that both problems are \NP-hard, resolving a
  conjecture by Natanzon, Shamir, and Sharan~(Discrete Applied Mathematics,
  113(1):109--128, 2001).
  On the positive side, we show the problem admits a quadratic vertex kernel.
  Furthermore, we give a subexponential time parameterized algorithm solving
  \pname{Threshold Editing} in $2^{O(\sqrt k \log k)} + \poly(n)$ time, making
  it one of relatively few natural problems in this complexity class on general
  graphs.
  These results are of broader interest to the field of social network analysis,
  where recent work of Brandes~(ISAAC, 2014) posits that the minimum edit
  distance to a threshold graph gives a good measure of consistency for node
  centralities.
  Finally, we show that all our positive results extend to the related problem of
  \pname{Chain Editing}, as well as the completion and deletion variants of both
  problems.
\end{abstract}

\section{Introduction}
\label{sec:introduction}

In this paper we study the computational complexity of two edge modification
problems, namely editing to threshold graphs and editing to chain graphs.  Graph
modification problems ask whether a given graph~$G$ can be transformed to have a
certain property using a small number of edits (such as deleting/adding vertices
or edges), and have been the subject of significant previous
work~\cite{shamir2004cluster, chen2003computing, damaschke2006parameterized,
  dehne2006cluster, nastos2013familial}.

In the \pname{Threshold Editing} problem, we are given as input an $n$-vertex
graph $G = (V,E)$ and a non-negative integer $k$.  The objective is to find a
set $F$ of at most~$k$ pairs of vertices such that $G$ minus any edges in $F$
plus all non-edges in $F$ is a threshold graph.  A graph is a \emph{threshold
  graph} if it can be constructed from the empty graph by repeatedly adding
either an isolated vertex or a universal vertex~\cite{brandstadt1999graph}.

\defproblem{Threshold Editing}
{A graph $G$ and a non-negative integer $k$}
{Is there a set $F \subseteq [V]^2$ of size at most~$k$ such that $G \triangle F$ is a threshold
  graph.}
The computational complexity of \pname{Threshold Editing} has repeatedly been
stated as open, starting from Natanzon et al.~\cite{natanzon2001complexity}, and
then more recently by Burzyn et al.~\cite{burzyn2006np}, and again very recently
by Liu, Wang, Guo and Chen~\cite{liu2012complexity}.  We resolve this by showing
that the problem is indeed \cclass{NP}-hard.

\begin{theorem}
  \label{thm:te-np-complete}
  \TE{} is \cclass{NP}-complete, even on split graphs.
\end{theorem}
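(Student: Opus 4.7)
Membership in \NP{} is immediate: given a candidate edit set $F\subseteq[V]^2$ of size at most $k$, one can verify in linear time that $G\triangle F$ is a threshold graph, e.g.\ by iteratively stripping off an isolated or a universal vertex. The hardness proof is where the substance lies. I would first exploit a well-known structural fact: threshold graphs are exactly the $(2K_2, C_4, P_4)$-free graphs, while split graphs are exactly the $(2K_2, C_4, C_5)$-free graphs. Consequently, on a split input $G$ the only forbidden induced subgraph of a threshold graph that can still appear is an induced $P_4$, so \TE{} restricted to split inputs is equivalent to destroying all induced $P_4$'s in $G$ using at most $k$ edge edits, with the resulting graph automatically remaining split once it is threshold.

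The plan is then to reduce from a suitable \cclass{NP}-hard problem---most naturally a bipartite edge-modification problem, a graph layout problem, or a classical combinatorial problem such as a variant of Max Cut---to \TE{} on split graphs. For each atomic unit of the source instance (an element, clause, or variable), I would design a small split-graph gadget whose obstructions to being threshold consist of a controlled family of induced $P_4$'s. The gadgets would then be glued into one large split graph $H=(K\cup I,E)$ by assigning some of their vertices to the clique side $K$ and the rest to the independent side $I$, and the budget $k$ set to the optimum of the source instance, so that feasibility in \TE{} corresponds to feasibility in the source problem.

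The main obstacle will be establishing a \emph{normalization lemma} stating that some minimum edit set for $H$ uses only pairs in $K\times I$. Without such a lemma, the correspondence between edit sets and source-problem solutions is hard to pin down, because a priori an edit could flip an edge inside $K$ or a non-edge inside $I$, possibly destroying $P_4$'s in unforeseen ways. Normalization would be achieved by padding each side with sufficiently many twin vertices, so that any edit inside $K\times K$ or $I\times I$ can be re-routed to a bipartite edit of equal or smaller cost without creating new induced $P_4$'s. Once this lemma is in place, translating an optimal edit set to a source-problem solution and back becomes a routine gadget-by-gadget verification, and \cclass{NP}-hardness of \TE{} on split graphs follows.
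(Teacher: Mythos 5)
Your proposal is a strategy outline rather than a proof: the entire substance of the hardness argument is deferred. You do not fix the source problem (you list three candidate families), you do not construct any gadget, and you do not give the budget accounting that would make the back-and-forth translation work. In the paper this is exactly where the difficulty lies: the reduction is from \pname{3Sat}, each variable is encoded by six clique vertices $v^x_a,v^x_b,v^x_\bot,v^x_\top,v^x_c,v^x_d$ whose relative order is rigidly enforced by $k_\phi+1$ enforcer vertices in the independent set, each clause becomes a single independent-set vertex with exactly three neighbors per variable gadget, and the budget is set to $k_\phi=|\C_\phi|\cdot(3|\V_\phi|-1)$. Correctness then rests on a counting argument (Lemmata~\ref{lem:budget-lower} and~\ref{lem:save-satisfy}): every clause vertex must be charged at least $3|\V_\phi|-1$ edits, and meeting this bound with equality forces the clause vertex's ``cut-off point'' in the nested ordering to land inside the gadget of a variable whose orientation ($v^x_\top$ before $v^x_\bot$ or vice versa) satisfies the clause. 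None of this is routine gadget verification; it is the core of the proof, and your proposal contains no substitute for it.

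There is also a smaller inaccuracy in your framing. On a split input the obstructions \emph{present in $G$} are only induced $P_4$'s, but \TE{} is not thereby equivalent to ``destroy all induced $P_4$'s with at most $k$ edits'': an edit set can create new $C_4$'s or $2K_2$'s (equivalently, move vertices across the split partition), and then the edited graph is $P_4$-free without being threshold. Ruling this out is precisely the role of your proposed normalization step, which in the paper is achieved by adding $4(k_\phi+1)$ padding vertices so that any near solution must retain the original split partition (Observation~\ref{obs:g-is-split}); so that one ingredient of your plan does align with the paper, but on its own it does not close the gap above.
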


Graph editing problems are well-motivated by problems arising in the applied
sciences, where we often have a predicted model from domain knowledge, but
observed data fails to fit this model exactly.  In this setting, edge
modification corresponds to correcting false positives (and/or false negatives)
to obtain data that is consistent with the model.  \TE{} has specifically been
of recent interest in the social sciences, where Brandes et al.\ are using
distance to threshold graphs in work on axiomatization of centrality
measures~\cite{brandes2014invited,schoch2015stars}.  More generally, editing to
threshold graphs and their close relatives \emph{chain graphs} arises in the
study of sparse matrix multiplications~\cite{yannakakis1981computing}.  Chain
graphs are the bipartite analogue of threshold graphs (see
Definition~\ref{def:chain}), and here we also establish hardness of \CE{}.

\begin{theorem}
  \label{thm:ce-np-complete}
  \pname{Chain Editing} is \cclass{NP}-complete, even on bipartite graphs.
\end{theorem}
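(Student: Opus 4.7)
The plan is to reduce from \TE{} on split graphs, $\cclass{NP}$-complete by Theorem~\ref{thm:te-np-complete}. Membership in $\cclass{NP}$ is immediate via the edit-set certificate. Given a split instance $(G, k)$ with split partition $(K, I)$, I would construct the bipartite graph $H$ on bipartition $(K, I)$ whose edge set is exactly the set of $K$--$I$ edges of $G$, keeping $k' = k$.

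The correctness rests on a structural lemma that I would prove first: a split graph with partition $(K, I)$ is threshold if and only if the bipartite graph on $(K, I)$ formed by its crossing edges is chain. This follows from the forbidden subgraph characterizations. Split graphs already exclude $\{2K_2, C_4\}$, so the only additional obstruction to being threshold is an induced $P_4$; a case analysis on the sides of a $P_4$'s four vertices shows that any induced $P_4$ in a split graph has pattern $I$--$K$--$K$--$I$, which corresponds precisely to an induced $2K_2$---the unique chain obstruction---in the bipartite restriction, and conversely. Given this lemma, the backward direction of the reduction is immediate: any chain edit $F$ of $H$ is a set of $K$--$I$ pairs, so $G \triangle F$ keeps $K$ a clique and $I$ independent, its bipartite restriction equals $H \triangle F$ (chain by assumption), and the lemma then gives $G \triangle F$ threshold.

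The main obstacle will be the forward direction: an arbitrary threshold edit of $G$ may perform within-$K$ deletions or within-$I$ insertions, and we must convert it into a cross-only edit of no greater size. My plan is to prepend a padding step to the reduction, attaching sufficiently many true twins to each $K$-vertex and false twins to each $I$-vertex so that each twin class has size $\Omega(k)$. A standard twin-exchange argument then restricts attention to twin-respecting edits, under which any within-part edit either has to be replicated across a large twin class (making it prohibitively expensive) or leaves replicated $P_4$ obstructions among the many untouched twins that would demand further edits to repair, contradicting optimality. This forces optimal edits on the padded instance to be cross-only, and the bipartite lemma then converts the optimal threshold edit of the padded $G^+$ into a chain edit of the correspondingly padded bipartite graph $H^+$ of the same size.

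The overall construction runs in polynomial time, establishing $\cclass{NP}$-completeness of \CE{} on bipartite graphs. Since the padding and bipartite transformations do not depend on whether edits are additions or deletions, the same scheme also yields hardness for the completion and deletion variants discussed later in the paper.
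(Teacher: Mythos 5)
Your structural lemma (a split graph with partition $(K,I)$ is threshold if and only if the bipartite graph of its crossing edges is a chain graph) is correct and is exactly the paper's bridge between \STE{} and \BCE{} via Proposition~\ref{prop:chain-defs}. The gaps are in the two steps surrounding it. First, and most critically, your backward direction asserts that ``any chain edit $F$ of $H$ is a set of $K$--$I$ pairs.'' Nothing justifies this: \CE{} does not prescribe the bipartition, so an edit set of size at most $k$ may add pairs inside $K$ or inside $I$, with $H \triangle F$ bipartite under a completely different bipartition; such an $F$ does not translate into a threshold edit of $G$. This is precisely the point where the paper inserts a dedicated forcing gadget --- adding $k+1$ new vertices to each side, each universal to the opposite side --- so that any chain graph within distance $k$ must retain the intended bipartition, after which cross-only edits are forced and the lemma applies. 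Your twin padding does not substitute for this: large twin classes in $G^+$ do not give every vertex of $H^+$ many untouched edges to the opposite side (clique vertices with few or no independent-set neighbors are low-degree or isolated in $H^+$), so side-switching is not made expensive.

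Second, your padding step itself lacks an equivalence argument. You reduce from \TE{} on split graphs (Theorem~\ref{thm:te-np-complete}) and pad with $\Omega(k)$ twins per vertex to force partition-respecting threshold edits, but you never show that $(G,k)$ and the padded $(G^+,k)$ are equivalent. Adding true/false twins of a vertex $v$ can strictly increase the editing cost: if every cheap solution edits edges incident to $v$, each new twin of $v$ inherits $v$'s original (``wrong'') neighborhood and may require its own edits, so $\mathrm{OPT}(G^+) > \mathrm{OPT}(G)$ is possible. The paper's twin argument (Lemma~\ref{lemma:twin-reduction-valid}) only goes the safe direction, removing a vertex from an already large class; it does not license growing small classes at unchanged budget. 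The paper avoids this entirely: its hardness construction already forces the split partition of any solution (Observation~\ref{obs:g-is-split}), so \STE{} is \NP-complete directly (Corollary~\ref{cor:split-te-np-hard}), no padding of an arbitrary split instance is needed. To repair your proof you should either reduce from \STE{} as the paper does, or prove the padded-instance equivalence with a correspondingly adjusted budget, and in either case add a bipartition-forcing gadget before invoking your structural lemma.
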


Our final complexity result is for \pname{Chordal Editing} --- a problem whose
\NP-hardness is well-known and widely used.  This result also follows from our
techniques, and as the authors were unable to find a proof in the literature, we
include this argument for the sake of completeness.

Having settled the complexity of these problems, we turn to studying ways of
dealing with their intractability.  Cai's theorem~\cite{cai1996fixed} shows that
\TE{} and \CE{} are \emph{fixed parameter tractable}, i.e., solvable in~$f(k)
\cdot \poly(n)$ time where~$k$ is the edit distance from the desired model
(graph class); However, the lower bounds we prove when showing \NP-hardness are
on the order of $2^{o(\sqrt k)}$ under ETH, and thus leave a gap.  We show that
it is in fact the lower bound which is tight (up to logarithmic factors in the
exponent) by giving a subexponential time algorithm for both problems.

\begin{theorem}
  \label{thm:te-subept}
  \TE{} and \CE{} admit ~$2^{O(\sqrt k \log k)} + \poly(n)$ subexponential time
  algorithms.
\end{theorem}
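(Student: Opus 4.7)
\emph{Proof proposal.} My plan is to design the algorithm by the familiar ``kernelize then enumerate a small structure'' template used for subexponential editing algorithms on structured graph classes. I focus on \TE{}; the algorithm for \CE{} is entirely analogous, replacing the split structure of threshold graphs by the bipartite nested-neighborhood structure of chain graphs. First I would apply the quadratic vertex kernel established earlier in the paper, reducing to an equivalent instance on $n = O(k^2)$ vertices; from here on any $\poly(n)$ factor is $\poly(k)$ and may be absorbed into the claimed $2^{O(\sqrt{k} \log k)}$ term.

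The core structural step is to prove that every YES-instance $(G,k)$ admits a \modulator{} of size $O(\sqrt{k})$, that is, a vertex set $M$ with $G - M$ threshold. The intuition is that after the kernel's twin and low-degree reductions an edit set $F$ of size $k$ cannot remain ``spread out'' as a near-matching; its vertex cover, which is automatically a \modulator{}, is forced to shrink to $O(\sqrt{k})$. To find $M$ I would use the standard branching procedure that locates an induced $2K_2$, $P_4$, or $C_4$ (the forbidden subgraphs for threshold graphs) and recursively tries to add one of its four vertices to a tentative modulator; combined with the size bound $|M| = O(\sqrt{k})$, this search runs in $4^{O(\sqrt{k})} \cdot \poly(n) = 2^{O(\sqrt{k})} \cdot \poly(n)$ time.

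Given a \modulator{} $M$, the remaining task is to find the cheapest threshold graph $T$ on $V(G)$ whose induced structure on $V \setminus M$ is essentially inherited from $G - M$. I would enumerate the \emph{profile} of $M$ in $T$: for each $v \in M$, a position in a threshold construction sequence of $T$ together with its type (isolated or universal at the time of insertion). There are $n^{|M|} \cdot 2^{|M|} = 2^{O(\sqrt{k}\log k)}$ such profiles. For each fixed profile, the vertices of $M$ partition the construction sequence of $T - M$ into at most $|M| + 1$ intervals, and within each interval the contribution to $|F|$ factorises along the threshold levels of $G - M$ and can be minimised by a polynomial-time dynamic program, producing the optimal completion.

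I expect the main obstacle to be the $O(\sqrt{k})$ bound on the \modulator{} size: generic branching only delivers $O(k)$, and the improvement must genuinely exploit the kernel's reductions, arguing that any $k$-edit set whose vertex cover exceeds $c\sqrt{k}$ forces configurations (twin classes, pendants, or low-degree structures) that the kernel has already eliminated. Once that structural lemma is in place, the enumeration of profiles and polynomial-time completion inside each interval are routine consequences of the well-known structure of threshold graphs, and the same outline transfers to \CE{} via the chain analogues of the construction sequence and \cmodulator{}.
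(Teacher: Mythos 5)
Your outline breaks at its core structural step: the claim that every (kernelized) yes-instance admits a \modulator{} $M$ of size $O(\sqrt{k})$ with $G-M$ a threshold graph, obtained as a vertex cover of the edit set. This is false, and no amount of exploiting the kernel's reductions can rescue it. Take a threshold graph with many levels whose twin classes all have size at most $2k+2$, and add $k$ extra edges forming an induced matching inside the independent set, with the $2k$ endpoints spread over distinct levels. Neither the twin rule nor the irrelevant-vertex rule removes this structure, the unique optimal solution is to delete the $k$ matching edges (so the vertex cover of the optimal $F$ has size $k$, not $O(\sqrt{k})$), and any two surviving matching edges form an induced $2K_2$, so \emph{every} vertex set whose deletion leaves a threshold graph has size at least $k-1$. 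Hence the object your second step searches for simply need not exist, and the subsequent profile enumeration also silently relies on $M$ covering all edits (so that $T-M$ is inherited from $G-M$), which fails for the same reason. The paper's definition of a \modulator{} is also weaker than yours (obstructions fixable by edits inside $[X]^2$, not vertex deletion) and is only guaranteed to have size $O(k)$; it is used for the kernel, not for the subexponential algorithm.

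The actual source of the $\sqrt{k}$ in the paper is a different dichotomy: at most $2\sqrt{k}$ vertices are \emph{expensive} (incident to more than $2\sqrt{k}$ edits of an optimal solution), while all other vertices are cheap. The algorithm first enumerates, in $2^{O(\sqrt{k}\log k)}$ time, all candidate split partitions (they differ from the degree-ordered partition in only $O(\sqrt{k})$ vertices), reducing to \STE{}; it then runs a divide-and-conquer over ``splitting pairs'' of cheap vertices, where the segments without splitting pairs are shown to have only $O(\sqrt{k})$ levels and are solved by guessing the placement of the expensive vertices, the small expensive middle part is brute-forced, and the rest is handled recursively with memoization. For \CE{} the analogous enumeration of potential bipartitions is proved separately. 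If you want to salvage your write-up, you would need to replace the $O(\sqrt{k})$-modulator lemma with this cheap/expensive decomposition (or some genuinely different mechanism), since the structural statement you plan to prove is simply not true.
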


Since our results also hold for the \emph{completion} and \emph{deletion}
variants of both problems (when $F$ is restricted to be a set of non-edges or
edges, respectively), this also answers a question of Liu et
al.~\cite{liu2015edge} by giving a subexponential time algorithm for
\pname{Chain Edge Deletion}.

A crucial first step in our algorithms is to preprocess the instance, reducing
to a kernel of size polynomial in the parameter.  We give quadratic kernels for
all three variants (of both \TE{} and \CE{}).

\begin{theorem}
  \label{thm:te-tc-td-kernel}
  \TE{}, \TC{}, and \TD{} admit polynomial kernels with~$O(k^2)$ vertices.
\end{theorem}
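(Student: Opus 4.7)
The plan is to obtain the kernel through two complementary mechanisms: a twin reduction rule that bounds the size of each equivalence class of similarly placed vertices, followed by a modulator-based structural bound that limits the number of such classes to $O(k)$. I focus on \TE{}; the one-sided variants \TC{} and \TD{} use the same rules with minor modifications to the correctness proofs.

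First I would apply a \emph{twin reduction rule}: whenever either a true-twin class or a false-twin class of $G$ has more than $k+2$ vertices, delete one arbitrary vertex from the class. Correctness rests on the observation that any solution $F$ of size at most $k$ touches at most $2k$ vertices, so a sufficiently large twin class must contain a vertex incident to no edit in some optimal $F$. Such an untouched vertex plays a role indistinguishable from another vertex of its class, and hence can be removed without changing the answer. After exhaustive application, every true- and false-twin class of $G$ has at most $k+2$ vertices.

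Next I would construct a \modulator{} $M$ by greedily packing vertex-disjoint copies of the forbidden induced subgraphs $P_4$, $C_4$, and $2K_2$, adding their vertex sets to $M$ as they are found. If the packing ever exceeds $k$ obstructions the instance is NO, since each obstruction requires at least one edit and disjoint obstructions require disjoint edits; otherwise $|M| \leq 4k$ and $G - M$ is a threshold graph. The heart of the proof then bounds $|V \setminus M|$. Assign each vertex $v$ outside $M$ a \emph{type} consisting of its twin class in $G - M$ together with its neighborhood $N(v) \cap M$. Since $G - M$ is threshold, its twin classes are linearly ordered by neighborhood inclusion; walking along this order I would show that if two consecutive classes share the same $M$-neighborhood, they actually merge into a single twin class of $G$ (enabling further twin reduction), whereas every change of $M$-neighborhood across the order can be charged to a fresh obstruction that could have been added to the packing. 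This limits the number of types to $O(|M|) = O(k)$, and combined with the per-class bound of $k+2$, yields $|V| = O(k^2)$.

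The main obstacle is this structural lemma in the final step: arguing that the number of distinct types is linear in $k$ requires a careful case analysis combining the linear order of threshold graphs, the nesting of neighborhoods, and the fact that type transitions along the order correspond to obstructions that must be packable disjointly. I expect adapting this lemma to \TC{} and \TD{} to require the additional observation that each obstruction produced by a type transition is destroyable by a one-sided edit (either a deletion or an insertion depending on the variant); these checks are local to the argument, so the $O(k^2)$-vertex bound carries over uniformly to both one-sided problems.
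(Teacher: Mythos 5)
There is a genuine gap, and it sits exactly where you flagged ``the main obstacle.'' Your counting scheme bounds the number of \emph{types}, where a type is a pair (twin class of $G-M$, neighborhood in $M$), by $O(|M|)=O(k)$. That cannot work: the number of twin classes of the threshold graph $G-M$ is not bounded by any function of $k$. Take a threshold graph with $\Theta(n)$ levels, each level a single clique vertex and a single independent vertex (so every twin class of $G$ has size one and your twin rule never fires), and plant one obstruction; then $k=1$, $|M|\le 4$, almost all vertices have the same $M$-neighborhood, yet there are $\Theta(n)$ types. Your two repair mechanisms both fail here: (a) consecutive twin classes of $G-M$ with identical $M$-neighborhoods do \emph{not} merge into one twin class of $G$, because they already differ in their neighborhoods inside $G-M$ (that is why they are distinct classes of $G-M$); and (b) charging a change of $M$-neighborhood to ``a fresh obstruction that could have been added to the packing'' fails because such an obstruction necessarily uses vertices of $M$ (the distinguishing neighbors), so it is not vertex-disjoint from the packed obstructions and cannot extend the packing. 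What the paper does instead is (i) use a stronger modulator notion --- a set $X$ such that \emph{every} obstruction can be destroyed by edits with both endpoints in $X$, built by a different greedy procedure with a budget-based no-instance test --- which yields that the $G$-neighborhoods (including into $X$) of vertices outside $X$ are nested, so only $|X|+1$ subsets of $X$ are realized; and, crucially, (ii) an \emph{irrelevant vertex rule}: inside a long run (``strip'') of consecutive levels whose clique and independent vertices have constant $X$-neighborhoods, a central vertex can be deleted. Proving that rule safe (via a solution that preserves nestedness of neighborhoods and the fact that regular vertices cannot switch sides of the split partition) is the technical heart of the kernel, and nothing in your outline substitutes for it; a packing-based modulator plus twin reduction alone leaves instances of unbounded size.

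Two smaller points. Your twin-reduction threshold $k+2$ is too weak for your own correctness argument: a solution of size $k$ can touch $2k$ vertices, so a class of size $k+3$ need not contain an untouched vertex; the paper uses the threshold $2k+2$ precisely to guarantee two untouched class members (one of which can be assumed outside the hypothetical obstruction). Finally, for the one-sided variants the paper does not re-verify ``one-sided destroyability'' of obstructions; it observes that threshold graphs are self-complementary as a class, so \TD{} on $(G,k)$ is equivalent to \TC{} on $(\overline{G},k)$, and the reduction rules are proved simultaneously for \TC{} and \TE{} (they only delete vertices, never edit edges, which also makes the kernels proper).
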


This answers (affirmatively) a recent question of Liu, Wang and
Guo~\cite{liu2014overview}---whether the previously known kernel, which has
$O(k^3)$ vertices, for \pname{Threshold Completion} (equivalently
\pname{Threshold Deletion}) can be improved.

\section{Preliminaries}
\label{sec:preliminaries}

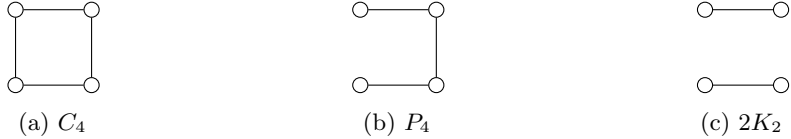
\begin{figure}[t]
  \centering
  \begin{subfigure}[t]{0.25\textwidth}
    \centering
    \begin{tikzpicture}[every node/.style={circle, draw, scale=.6},
      scale=1]
      \node (1) at (0,0) {};
      \node (2) at (1,0) {};
      \node (3) at (0,1) {};
      \node (4) at (1,1) {};
      
      \draw (1) -- (2);
      \draw (2) -- (4);
      \draw (3) -- (4);
      \draw (1) -- (3);
    \end{tikzpicture}
    \caption{$C_4$}
  \end{subfigure}
  \hspace{.02\textwidth}
  \begin{subfigure}[t]{0.25\textwidth}
    \centering
    \begin{tikzpicture}[every node/.style={circle, draw, scale=.6},
      scale=1]
      \node (1) at (0,0) {};
      \node (2) at (1,0) {};
      \node (3) at (0,1) {};
      \node (4) at (1,1) {};
      
      \draw (1) -- (2);
      \draw (2) -- (4);
      \draw (3) -- (4);
    \end{tikzpicture}
    \caption{$P_4$}
  \end{subfigure}
  \hspace{.02\textwidth}
  \begin{subfigure}[t]{0.25\textwidth}
    \centering
    \begin{tikzpicture}[every node/.style={circle, draw, scale=.6},
      scale=1]
      \node (1) at (0,0) {};
      \node (2) at (1,0) {};
      \node (3) at (0,1) {};
      \node (4) at (1,1) {};
      
      \draw (1) -- (2);

      \draw (3) -- (4);

    \end{tikzpicture}
    \caption{$2K_2$}
  \end{subfigure}
  \caption{\emph{Threshold graphs} are $\{C_4, P_4, 2K_2\}$-free.
    \emph{Chain graphs} are bipartite graphs that are $2K_2$-free.}
  \label{fig:forbidden-graphs}
\end{figure}


\paragraph{Graphs.}
We will consider only undirected simple finite graphs.  For a graph~$G$, let
$V(G)$ and~$E(G)$ denote the vertex set and the edge set of~$G$, respectively.
For a vertex $v \in V(G)$, by $N_G(v)$ we denote the open neighborhood of~$v$,
i.e.  $N_G(v)=\{u \in V(G) \mid uv \in E(G)\}$.  The closed neighborhood of~$v$,
denoted by~$N_G[v]$, is defined as $N_G(v)\cup \{v\}$.  These notions are
extended to subsets of vertices as follows: $N_G[X]=\bigcup_{v\in X} N_G[v]$ and
$N_G(X)=N_G[X]\setminus X$.  We omit the subscript whenever~$G$ is clear from
context.

When $U \subseteq V(G)$ is a subset of vertices of~$G$, we write~$G[U]$ to denote the
\emph{induced subgraph} of~$G$, i.e., the graph $G' = (U,E_U)$ where~$E_U$
is~$E(G)$ restricted to~$U$.  The degree of a vertex $v \in V(G)$, denoted
$\deg_G(v)$, is the number of vertices it is adjacent to, i.e., $\deg_G(v) =
|N_G(v)|$.  We denote by~$\Delta(G)$ the maximum degree in the graph, i.e.,
$\Delta(G) = \max_{v \in V(G)}\deg(v)$.  For a set~$A$, we write $\binom{A}{2}$
to denote the set of unordered pairs of elements of~$A$; thus $E(G) \subseteq
\binom{V(G)}{2}$.  By $\overline{G}$ we denote the \emph{complement} of
graph~$G$, i.e., $V(\overline{G})=V(G)$ and $E(\overline{G})=[V(G)]^2\setminus E(G)$.

For two sets~$A$ and~$B$ we define the \emph{symmetric difference} of~$A$
and~$B$, denoted $A \triangle B$ as the set $(A \setminus B) \cup (B \setminus
A)$.  For a graph $G = (V,E)$ and $F \subseteq [V]^2$ we define~$G \triangle F$
as the graph $(V, E \triangle F)$.

For a graph~$G$ and a vertex~$v$ we define the \emph{true twin class} of~$v$,
denoted $\ttc(v)$ as the set
$\ttc(v) = \left\{ u \in V(G) \mid N[u] = N[v] \right\}$.  Similarly, we define
the \emph{false twin class} of~$v$, denoted~$\ftc(v)$ as the set
$\ftc(v) = \left\{ u \in V(G) \mid N(u) = N(v) \right\}$.  Observe that either
$\ttc(v) = \left\{ v \right\}$ or $\ftc(v) = \left\{ v \right\}$.  From this we
define the \emph{twin class} of~$v$, denoted~$\tc(v)$ as~$\ttc(v)$ if
%
%
$\left| \ttc(v) \right| > \left| \ftc(v) \right|$
and~$\ftc(v)$ otherwise.


\paragraph{Split and threshold graphs.}
A split graph is a graph~$G = (V,E)$ whose vertex set can be partitioned into
two sets~$C$ and~$I$ such that~$G[C]$ is a complete graph and~$G[I]$ is
edgeless, i.e.,~$C$ is a clique and~$I$ an independent
set~\cite{brandstadt1999graph}.
For a split graph~$G$ we say that a partition $(C,I)$ of $V(G)$ forms a
\emph{split partition} of~$G$ if~$G[C]$ induces a clique and~$G[I]$ an
independent set.  A split partition $(C,I)$ is called a \emph{complete} split
partition if for every vertex $v \in I$, $N(v) = C$.  If $G$ admits a complete
split partition, we say that $G$ is a complete split graph.

We now give two useful characterizations of threshold graphs:

\begin{proposition}[\cite{mahadev1995threshold}]
  \label{prop:threshold-nested-neighborhoods}
  A graph~$G$ is a threshold graph if and only if~$G$ has a split
  partition~$(C,I)$ such that the neighborhoods of the vertices in~$I$ are
  nested, i.e., for every pair of vertices $v$ and $u$, either $N(v) \subseteq
  N[u]$ or $N(u) \subseteq N[v]$.
\end{proposition}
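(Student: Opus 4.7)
The plan is to prove both directions by induction on $|V(G)|$, exploiting the recursive construction of threshold graphs for the forward direction and a ``peeling'' argument for the converse.

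For the forward direction, suppose $G$ is built by the sequence of operations given in the definition of a threshold graph. I would proceed by induction on the number of operations, maintaining as invariant that the current graph has a split partition $(C,I)$ with nested neighborhoods among the vertices in $I$, where all vertices added as universal have been placed in $C$ and all vertices added as isolated have been placed in $I$. The empty graph is the trivial base case. For the inductive step, if the next vertex $v$ is isolated, place it in $I$: then $N(v) = \emptyset$ which is nested inside every other neighborhood, and $I \cup \{v\}$ stays independent. If $v$ is universal, place it in $C$: then $C \cup \{v\}$ stays a clique, and every $u \in I$ has its neighborhood augmented by the single vertex $v$, so any previously nested pair $N(u) \subseteq N[w]$ becomes $N(u)\cup\{v\} \subseteq N[w]\cup\{v\}$, preserving the property.

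For the converse, let $G$ have a split partition $(C,I)$ with nested neighborhoods. The key claim is that $G$ always contains either an isolated vertex or a universal vertex. To see this, let $w \in I$ be a vertex minimizing $|N(w)|$; by nestedness, $N(w) \subseteq N(u)$ for every $u \in I$. If $N(w) = \emptyset$, then $w$ is isolated. Otherwise pick any $v \in N(w) \subseteq C$: then $v$ is adjacent to every vertex of $I$ (since it lies in everyone's neighborhood) and adjacent to every other vertex of $C$ (since $C$ is a clique), so $v$ is universal. If $I = \emptyset$, any vertex in $C$ is universal; if $C = \emptyset$, any vertex in $I$ is isolated. Remove this isolated-or-universal vertex $v$ to obtain $G - v$, apply the inductive hypothesis, and prepend the corresponding operation to recover $G$.

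The one technical point to verify in the inductive step is that the split partition with nested neighborhoods is preserved under deletion. Deleting an isolated vertex from $I$ clearly keeps the partition valid and keeps neighborhoods nested. Deleting a universal vertex $v$ from $C$ yields partition $(C \setminus \{v\}, I)$, in which every vertex $u \in I$ has new neighborhood $N(u) \setminus \{v\}$; since $v$ belonged to all such neighborhoods uniformly, removing it from all of them preserves nestedness. The main obstacle in writing this out is the bookkeeping for these corner cases and confirming the nestedness invariant transfers cleanly; the rest is a straightforward induction.
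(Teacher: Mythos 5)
Your proof is correct. Note that the paper does not prove this proposition at all---it is quoted from Mahadev and Peled's book---so there is no in-paper argument to compare against; the closest relative is the proof of Proposition~\ref{prop:threshold-decomposition}, which instead works with the levelled threshold partition and the forbidden-subgraph characterization. Your route is the standard elementary one and it buys self-containedness: the forward induction on the construction sequence (isolated vertices into $I$, universal vertices into $C$, nestedness preserved because a new universal vertex is added uniformly to every $N(u)$, $u \in I$) is sound, and the converse ``peeling'' argument is the right key claim---every graph with such a partition has an isolated or a universal vertex. Two small points you should spell out when writing it up, though neither is a gap: first, for $u,w \in I$ the hypothesis gives $N(w) \subseteq N[u]$ or $N(u) \subseteq N[w]$, and it is independence of $I$ (so $u \notin N(w)$) that lets you drop the closed neighborhoods and conclude $N(w) \subseteq N(u)$ after using minimality of $|N(w)|$ (in the case $N(u) \subseteq N(w)$ minimality forces $N(u)=N(w)$); second, in the forward step for a universal vertex you should also record that $I$ stays independent, which holds because the new edges all end at the new clique vertex. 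With those remarks the induction closes exactly as you describe, including the deletion bookkeeping showing that removing an isolated vertex of $I$ or a universal vertex of $C$ preserves both the split partition and the nestedness.
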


\begin{proposition}[\cite{brandstadt1999graph}]
  \label{prop:c4p42k2-free}
  A graph~$G$ is a threshold graph if and only if~$G$ does not have
  a~$C_4$,~$P_4$ nor a~$2K_2$ as an induced subgraph.  Thus, the threshold
  graphs are exactly the \emph{$\{ C_4, P_4, 2K_2 \}$-free graphs} (see
  Figure~\ref{fig:forbidden-graphs}).
\end{proposition}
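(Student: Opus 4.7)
The plan is to prove both directions of the equivalence. For the forward direction, that every threshold graph is $\{C_4, P_4, 2K_2\}$-free, I would use the constructive definition of threshold graphs as those built from the empty graph by successively adding isolated or universal vertices. A short induction on this construction shows that the class is closed under induced subgraphs: the restriction of a construction sequence for $G$ to any subset $U \subseteq V(G)$ still produces $G[U]$, since each retained vertex is isolated or universal among the previously retained vertices as well. It then suffices to verify that none of $C_4$, $P_4$, $2K_2$ is itself threshold, which is immediate because in each of them every vertex has degree $1$ or $2$ and hence cannot serve as the last vertex added in such a construction.

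For the reverse direction I would proceed by induction on $|V(G)|$, reducing to the following key claim: every nonempty $\{C_4, P_4, 2K_2\}$-free graph has an isolated or a universal vertex. Granting the claim, removing such a vertex yields a smaller induced subgraph that remains $\{C_4, P_4, 2K_2\}$-free (the property is hereditary) and hence is threshold by the induction hypothesis; re-adding the isolated or universal vertex at the end produces the desired threshold construction of $G$.

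The main obstacle is the key claim, and I would prove it by exploiting self-complementarity of the forbidden family: since $\overline{C_4} = 2K_2$, $\overline{P_4} = P_4$, and $\overline{2K_2} = C_4$, the class is closed under complementation, so it suffices to prove that either $G$ or $\overline{G}$ is disconnected. Disconnectedness of $G$ combined with $2K_2$-freeness forces all but at most one component to have no edge at all and hence to consist of a single isolated vertex, while disconnectedness of $\overline{G}$ dually produces a universal vertex in $G$. To establish this dichotomy I would assume both $G$ and $\overline{G}$ are connected and derive a contradiction: pick $v$ of maximum degree, take a non-neighbor $u$ (existing since $v$ is not universal), and let $w$ be a neighbor of $u$ (existing since $u$ is not isolated, as $G$ is connected with $|V(G)| \geq 2$). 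A short case analysis on whether $w \in N(v)$ and on the interactions between $N(u) \setminus N(v)$ and $N(v) \setminus N(u)$ (using $\deg(v) \geq \deg(u)$) will, in every configuration, exhibit four vertices inducing one of $C_4$, $P_4$, or $2K_2$. I expect the most delicate subcase to be when $N(u) \subsetneq N(v)$ and $w \in N(v)$, where the contradiction has to be extracted from a vertex outside $N[v]$ guaranteed by connectedness of $\overline{G}$.
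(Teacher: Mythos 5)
Your proposal is essentially correct, and since the paper only cites this proposition from Brandst\"adt, Le, and Spinrad~\cite{brandstadt1999graph} without reproving it, there is no proof of record here to compare against (the closest in-paper argument is the proof of Proposition~\ref{prop:threshold-decomposition}, which establishes the same forward implication via threshold partitions). Your forward direction is clean: threshold graphs are hereditary since restricting a construction sequence to a vertex subset $U$ still realizes $G[U]$, and none of $C_4$, $P_4$, $2K_2$ contains a vertex of degree $0$ or $3$, so none can end such a construction. Your reverse direction also has the right shape: the key claim that every $\{C_4,P_4,2K_2\}$-free graph on at least one vertex has an isolated or universal vertex is true and is exactly what the induction requires, and the self-complementarity of the family $\{C_4,P_4,2K_2\}$ is the right structural observation.

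The one place that needs more care is the final case analysis. Routing through ``$G$ or $\overline{G}$ is disconnected'' is a detour, since ``both connected'' is equivalent to ``no isolated or universal vertex'', so that framing restates what you want to contradict rather than supplying new leverage. A direct argument that does close, in the spirit of your sketch: take $v$ of maximum degree and suppose $G$ has no isolated vertex and $v$ is not universal. First show $N(u)\subseteq N(v)$ for every $u\notin N[v]$: if $z\in N(u)\setminus N[v]$, then $N(v)\not\subseteq N(u)$ (else $\deg(u)\ge\deg(v)+1$), so pick $x\in N(v)\setminus N(u)$ and observe that $\{v,x,u,z\}$ induces a $2K_2$ or a $P_4$ according to whether $xz$ is an edge. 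Then take $u\notin N[v]$ with a neighbor $w$; by the previous step $w\in N(v)$, and if some $y\in N(v)\setminus N[w]$ exists then $\{y,v,w,u\}$ induces a $P_4$ or a $C_4$ according to whether $yu$ is an edge, while if no such $y$ exists then $\deg(w)\ge\deg(v)+1$, a contradiction. Your sketch mentions the right ingredients (the sets $N(u)\setminus N(v)$ and $N(v)\setminus N(u)$, the inequality $\deg(v)\ge\deg(u)$), but the subcase $w\in N(v)$ is left genuinely open, and the remark that the contradiction there is ``extracted from a vertex outside $N[v]$ guaranteed by connectedness of $\overline{G}$'' points in the wrong direction: the witness that closes this subcase lives inside $N(v)$, not outside $N[v]$.
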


%
%
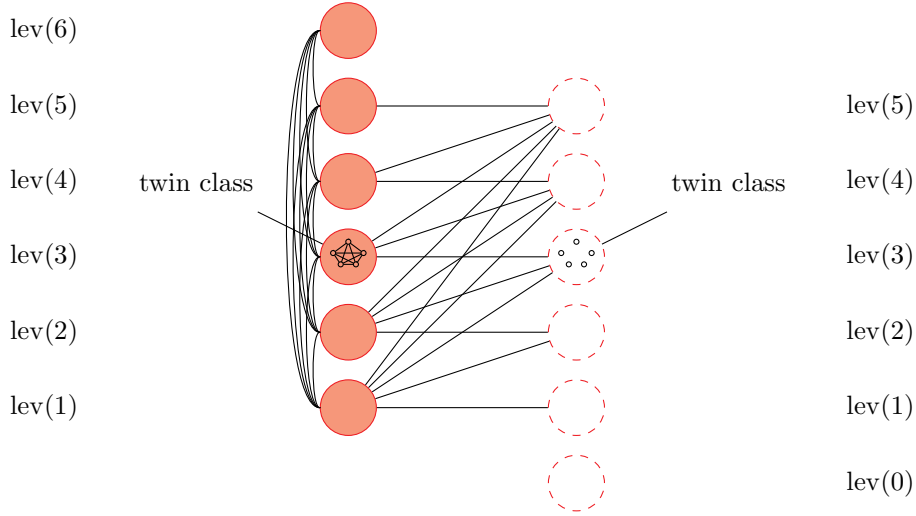
\begin{figure}[t]
  \centering
  \begin{tikzpicture}[every node/.style={circle,draw,scale=1},scale=1]
    
    \node[cliq] (c0) at (0,6) {};
    \node[cliq] (c1) at (0,5) {};
    \node[cliq] (c2) at (0,4) {};
    \node[cliq] (c3) at (0,3) {};
    \node[cliq] (c4) at (0,2) {};
    \node[cliq] (c5) at (0,1) {};
    
    \node[indep] (i1) at (3,5) {};
    \node[indep] (i2) at (3,4) {};
    \node[indep] (i3) at (3,3) {};
    \node[indep] (i4) at (3,2) {};
    \node[indep] (i5) at (3,1) {};
    \node[indep] (i6) at (3,0) {};

    \foreach \k in {1,...,5} {
      \foreach \j in {\k,...,5} {
        \draw (c\j) -- (i\k);
      }
    }
    
    \foreach \k in {0,...,5} {
      \foreach \j in {\k,...,5} {
        \draw (c\j) to[looseness=.3,out=180,in=180] (c\k);
      }
    }
    
    \node[lab] (il) at (5,4) {twin class};
    \draw (il) -> (i3);

    \node[scale=0.2] (ix1) at (2.90,2.90) {};
    \node[scale=0.2] (ix2) at (3.10,2.90) {};
    \node[scale=0.2] (ix3) at (2.80,3.05) {};
    \node[scale=0.2] (ix4) at (3.20,3.05) {};
    \node[scale=0.2] (ix5) at (3.00,3.20) {};
    
    \node[lab] (cl) at (-2,4) {twin class};
    \draw (cl) -> (c3);
    \node[scale=0.2] (cx1) at (-0.1,2.90) {};
    \node[scale=0.2] (cx2) at (0.10,2.90) {};
    \node[scale=0.2] (cx3) at (-0.2,3.05) {};
    \node[scale=0.2] (cx4) at (0.20,3.05) {};
    \node[scale=0.2] (cx5) at (0.00,3.20) {};
    
    \foreach \k in {1,...,5} {
      \foreach \j in {\k,...,5} {
        \draw (cx\j) -- (cx\k);
      }
    }

    \foreach \k in {1,...,6} {
      \node[lab] (labc\k) at (-4,\k) {$\lev(\k)$};
    }

    \foreach \k in {0,...,5} {
      \node[lab] (labi\k) at ( 7,\k) {$\lev(\k)$};
    }

  \end{tikzpicture}
  \caption{A threshold partition---the left hand side is the clique and the right
    hand side is an independent set, each bag contains a twin class.  All bags
    are non-empty, otherwise two twin classes on the opposite side would
    collapse into one, except possibly the two extremal bags.}
  \label{fig:threshold-partition}
\end{figure}

\begin{definition}[Threshold partition, $\level(v)$]
  \label{def:threshold-partition}
  We say that $(\mathcal{C},\mathcal{I})=(\langle C_1,\dots,C_t\rangle,\langle I_1,\dots, I_t\rangle)$ forms a
  \emph{threshold partition} of $G$ if the following holds (see
  Figure~\ref{fig:threshold-partition} for an illustration):
  \begin{itemize}
  \item $(C,I)$ is a split partition of $G$, where $C = \bigcup_{i \leq t}C_i$ and
    $I = \bigcup_{i \leq t}I_i$,
  \item $C_i$ and $I_i$ are twin classes in $G$ for every $i$
  \item $N[C_j] \subset N[C_i]$ and $N(I_i) \subset N(I_j)$ for every $i < j$.
  \item Finally, we demand that for every $i \leq t$, $(C_i, I_{\geq i})$ form a
    complete split partition of the graph induced by $C_i \cup I_{\geq i}$.
  \end{itemize}
  We furthermore define, for every vertex $v$ in $G$, $\level(v)$ as the
  number~$i$ such that $v \in C_i \cup I_i$ and we denote each level $L_i = C_i
  \cup I_i$.
\end{definition}

In a threshold decomposition we will refer to $C_i$ for every $i$ as a
\emph{clique fragment} and $I_i$ as a \emph{independent fragment}. Furthermore,
we will refer to a vertex in $\cup \mathcal{C}$ as a \emph{clique vertex} and a
vertex in $\cup \mathcal{I}$ as an \emph{independent vertex}.

\begin{proposition}[Threshold decomposition]
  \label{prop:threshold-decomposition}
  A graph~$G$ is a threshold graph if and only if~$G$ admits a threshold
  partition.
\end{proposition}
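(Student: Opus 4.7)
The plan is to prove the two implications separately; the backward direction is essentially immediate from Proposition~\ref{prop:threshold-nested-neighborhoods}, whereas the forward direction calls for an explicit construction read off from the nesting order.

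\emph{From partition to threshold graph.} Suppose $G$ admits a threshold partition $(\mathcal{C},\mathcal{I})$. The first bullet of Definition~\ref{def:threshold-partition} already supplies the split partition $(C,I)$ required by Proposition~\ref{prop:threshold-nested-neighborhoods}, so all that remains is the nested-neighbourhood property on $I$. For any $u \in I_i$ and $v \in I_j$ with $i \leq j$, the third bullet gives $N(u) = N(I_i) \subseteq N(I_j) = N(v)$, so any two independent vertices have nested open neighbourhoods. Hence $G$ is threshold.

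\emph{From threshold graph to partition.} Conversely, assume $G$ is threshold and use Proposition~\ref{prop:threshold-nested-neighborhoods} to fix a split partition $(C,I)$ with nested open neighbourhoods on $I$; if needed, we may additionally assume $I$ is inclusion-maximal, which collapses all false twins of $I$-vertices back into $I$ and avoids the mild ambiguity a split graph has on its boundary. Now enumerate the distinct values of $N(v)$ for $v \in I$ as $W_1 \subsetneq W_2 \subsetneq \cdots \subsetneq W_s$ (a total order, since the neighbourhoods are pairwise nested), define $I_j := \{v \in I : N(v) = W_j\}$ (a false-twin class by construction), set $W_0 := \emptyset$, and on the clique side define $C_j := W_j \setminus W_{j-1}$. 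Strict nesting of the $W_j$'s guarantees non-emptiness of every $C_j$ (with the possible exception of $C_1$ when $W_1 = \emptyset$). Any vertex $v \in C_j$ is adjacent to $C \setminus \{v\}$ inside the clique and to exactly $I_j \cup \cdots \cup I_s$ on the independent side, so $C_j$ is a true-twin class.

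The four bullets of Definition~\ref{def:threshold-partition} are now mostly mechanical: the split partition is the one we started from; twins were just established; for $i < j$, the inclusion $W_i \subsetneq W_j$ immediately yields both $N(I_i) \subsetneq N(I_j)$ and $N[C_i] = C \cup (I_i \cup \cdots \cup I_s) \supsetneq C \cup (I_j \cup \cdots \cup I_s) = N[C_j]$; and $(C_i, I_{\geq i})$ is a complete split partition because $C_i \subseteq W_j$ for every $j \geq i$, so every vertex of $I_{\geq i}$ sees all of $C_i$.

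The only place where I expect real care to be needed is the bookkeeping around the extremal layers: independent vertices with empty neighbourhood or clique vertices in $C \setminus W_s$ (if either set is non-empty) must be absorbed as extremal classes paired with an empty counterpart on the opposite side, which the remark accompanying Figure~\ref{fig:threshold-partition} explicitly permits. Aside from this indexing nuisance, the argument is entirely structural and reduces the proposition to Proposition~\ref{prop:threshold-nested-neighborhoods}.
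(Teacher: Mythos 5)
Your proof is correct, and for the forward direction it is essentially the paper's own argument made explicit: the paper partitions a threshold graph into degree/twin classes and checks the bullets of Definition~\ref{def:threshold-partition}, which is exactly your construction via the chain $W_1 \subsetneq \cdots \subsetneq W_s$ of distinct $I$-neighbourhoods; your caveats about empty extremal classes and the boundary ambiguity of the split partition are glossed over in the paper as well (it only addresses them in the caption of Figure~\ref{fig:threshold-partition}), so you are not missing anything the paper supplies. Where you genuinely diverge is the reverse direction: the paper shows that a graph admitting a threshold partition contains no induced $C_4$, $P_4$ or $2K_2$ and invokes Proposition~\ref{prop:c4p42k2-free}, via a short case analysis on how four vertices can meet the clique and independent sides, whereas you read nestedness of the $I$-neighbourhoods directly off the third bullet and invoke Proposition~\ref{prop:threshold-nested-neighborhoods}. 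Your route is shorter and avoids the case analysis entirely, at the price of resting both directions on the nested-neighbourhood characterization; the paper's route instead exercises the forbidden-subgraph characterization, which is the viewpoint its later kernelization arguments (obstructions, modulators) actually use. Since both characterizations are quoted in the paper as known facts, either derivation is equally legitimate here.
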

\begin{proof}
  Suppose that~$G$ is a threshold graph and therefore admits a nested ordering
  of the neighborhoods of vertices of each side~\cite{hammer1981threshold}.
  We show that partitioning the graph into partitions depending only on their
  degree yields the levels of a threshold partition.  The clique side is
  naturally defined as the maximal set of highest degree vertices that form a
  clique.  Suppose now for contradiction that this did not constitute a
  threshold partition.  By definitions, every level consists of twin classes,
  and also, for two twin classes $I_i$ and $I_j$, since their neighborhoods are
  nested in the threshold graph, their neighborhoods are nested in the threshold
  partition as well.  So what is left to verify is that $(C_i , I_{\geq i})$ is
  a complete split partition of $G[C_i \cup I_{\geq i}]$.  But that follows
  directly from the assumption that $G$ admitted a nested ordering and $C_i$ is
  a true twin class.

  For the reverse direction, suppose $G$ admits a threshold partition
  $(\mathcal{C},\mathcal{I})$.  Consider any four connected vertices $a,b,c,d$.
  We will show that they can not form any of the induced obstructions (see
  Figure~\ref{fig:forbidden-graphs}).  For the $2K_2$ and $C_4$, it is easy to
  see that at most two of the vertices can be in the clique part of the
  decomposition---and they must be adjacent since it is a clique---and hence
  there must be an edge in the independent set part of the decomposition, which
  contradicts the assumption that $\mathcal{C},\mathcal{I}$ was a threshold
  partition.  So suppose now that $a,b,c,d$ forms a $P_4$.  Again with the same
  reasoning as above, the middle edge $b,c$ must be contained in the clique
  part, hence $a$ and $d$ must be in the independent set part.  But since the
  neighborhoods of $a$ and $d$ should be nested, they cannot have a private
  neighbor each, hence either $ac$ or $bd$ must be an edge, which contradicts
  the assumption that $a,b,c,d$ induced a $P_4$.  This concludes the proof.
\end{proof}

\begin{lemma}
  \label{lemma:no-switching-of-nestedness}
  For every instance~$(G,k)$ of \TE{} or \TC{} it holds that there exists an
  optimal solution~$F$ such that for every pair of vertices~$u,v \in V(G)$,
  if~$N_G(u) \subseteq N_G[v]$ then~$N_{G \triangle F}(u) \subseteq N_{G \triangle F}[v]$.
\end{lemma}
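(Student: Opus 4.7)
The plan is an exchange argument driven by an auxiliary potential. Fix any linear extension $\pi : V(G) \to \{1, \ldots, n\}$ of the preorder $u \preceq_G v \iff N_G(u) \subseteq N_G[v]$, and for a candidate solution $F$ define $\Phi(F) = \sum_{v \in V(G)} \pi(v) \cdot \deg_{G \triangle F}(v)$. Among all optimal solutions to $(G,k)$, I would pick an $F$ that additionally maximizes $\Phi$, set $H = G \triangle F$, and argue by contradiction that no pair of distinct vertices $u,v$ can witness a violation.

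Suppose there exist $u \neq v$ with $N_G(u) \subseteq N_G[v]$ but $N_H(u) \not\subseteq N_H[v]$. Since $H$ is threshold, Proposition~\ref{prop:threshold-nested-neighborhoods} forces $N_H(v) \subseteq N_H[u]$. I then define $H'$ by transposing $u$ and $v$ in $H$: for $x \notin \{u,v\}$ set $ux \in E(H') \iff vx \in E(H)$ and $vx \in E(H') \iff ux \in E(H)$, leaving all other adjacencies (including $uv$) unchanged. Since the transposition is a graph automorphism from $H$ to $H'$, the graph $H'$ is threshold, and I compare $F$ with $F' := E(G) \triangle E(H')$.

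The heart of the argument is to show $|F'| \leq |F|$ while simultaneously $\Phi(F') > \Phi(F)$. Partitioning $V(G) \setminus \{u,v\}$ by adjacency to $\{u,v\}$ in both $G$ and in $H$ yields a $3 \times 3$ case check once one observes that $G$-nestedness kills the class of vertices private to $u$ in $G$, and the switched condition in $H$ kills the class private to $v$ in $H$. Each of the remaining nine cases contributes $0$ to $|F'| - |F|$ except vertices that are $G$-private to $v$ and $H$-private to $u$, each of which saves $2$ edits; hence $|F'| \leq |F|$ and optimality of $F$ forces equality. For the potential, the swap exchanges $\deg_H(u)$ with $\deg_H(v)$ and fixes all other degrees. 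The linear extension gives $\pi(u) < \pi(v)$, and inspecting the two subcases $uv \in E(H)$ and $uv \notin E(H)$ shows that the switched condition $N_H(v) \subseteq N_H[u]$ together with $N_H(u) \not\subseteq N_H[v]$ forces $\deg_H(u) > \deg_H(v)$ strictly. Therefore $\Phi(F') - \Phi(F) = (\pi(v) - \pi(u))(\deg_H(u) - \deg_H(v)) > 0$, contradicting the maximality of $\Phi(F)$.

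The main obstacle is handling the completion variant \TC, where I must additionally verify that $F'$ remains a completion, i.e., $E(G) \subseteq E(H')$. The delicate case is an edge $vy \in E(G)$ with $y \notin \{u,v\}$, for which $vy \in E(H')$ reduces to $uy \in E(H)$; this holds because $vy \in E(G) \subseteq E(H)$ puts $y \in N_H(v) \subseteq N_H[u]$. The edges $uy \in E(G)$ are easier: the $G$-nestedness $N_G(u) \subseteq N_G[v]$ yields $vy \in E(G) \subseteq E(H)$, and $uy \in E(H')$ simply means $vy \in E(H)$. All remaining $G$-edges lie outside $\{u,v\}$ and are unaffected by the transposition.
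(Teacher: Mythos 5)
Most of your exchange argument is sound, and it is a genuinely different route from the paper's (the paper reorders the flips of $F$ so that $u$ and $v$ become twins at an intermediate stage and then swaps the roles of $u$ and $v$ in the remaining flips): your $3\times 3$ case analysis giving $|F'|\le |F|$, the strict inequality $\deg_H(u)>\deg_H(v)$, the thresholdness of $H'$, and the completion check for \TC{} are all correct. The gap is the step ``the linear extension gives $\pi(u)<\pi(v)$.'' The hypothesis is only $N_G(u)\subseteq N_G[v]$; when in addition $N_G(v)\subseteq N_G[u]$ — i.e.\ $u,v$ are true or false twins in $G$, exactly the equivalent elements of your vicinal preorder — an injective $\pi$ must break the tie arbitrarily and may put $\pi(v)<\pi(u)$. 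The lemma still demands the conclusion for the ordered pair $(u,v)$ in that situation: $G$-twins must remain mutually nested (hence twins) in $G\triangle F$. For such a pair your argument yields nothing: in the twin case every vertex contributes $0$ to $|F'|-|F|$, so the swap is cost-neutral, but $\Phi(F')-\Phi(F)=(\pi(v)-\pi(u))(\deg_H(u)-\deg_H(v))<0$, so maximality of $\Phi$ is not contradicted. Your extremal $F$ is only guaranteed to satisfy $N_{G\triangle F}(v)\subseteq N_{G\triangle F}[u]$ (from the pair $(v,u)$), not the required reverse containment, so the statement you actually prove is the lemma restricted to pairs that are strictly comparable in $G$.

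This missing case is not cosmetic, and it is not fixed by an obvious local move: the natural patch of cloning one twin's $H$-neighborhood onto the other can destroy thresholdness (e.g.\ giving a vertex the same neighborhood as the center of an induced $P_3$ creates a $C_4$), and nothing in the $\Phi$-maximization pushes the solution toward keeping $G$-twins together — if anything it rewards giving the higher-$\pi$ twin the larger degree. So you need an additional argument (a secondary extremal criterion, a separate exchange tailored to twin classes, or an argument in the spirit of the paper's flip-reordering that works without strict comparability, cf.\ Proposition~\ref{prop:threshold-nested-neighborhoods}) to handle pairs with $N_G(u)\subseteq N_G[v]$ and $N_G(v)\subseteq N_G[u]$ simultaneously; as written, the proof does not establish the lemma in the generality stated.
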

\begin{proof}
  Let us define, for any editing set $F$ and two vertices $u$ and $v$, the set
  \[
  F_{v \leftrightarrow u} = \{e \mid e' \in F \text{ and $e$ is $e'$ with $u$ and $v$ switched}
  \}.
  \]
  Suppose $F$ is an optimal solution for which the above statement does not
  hold.  Then $N_G(u) \subseteq N_G[v]$ and $N_{G \triangle F}(v) \subseteq N_{G
    \triangle F}[u]$ (see
  Proposition~\ref{prop:threshold-nested-neighborhoods}).  But then it is easy
  to see that we can flip edges in an ordering such that at some point, say
  after flipping $F^0$, $u$ and $v$ are twins in this intermediate graph $G
  \triangle F^0$.  Let $F^1 = F \setminus F^0$.  It is clear that for $G' = G
  \triangle (F^0 \cup F^1_{v \leftrightarrow u})$, $N_{G'}(u) \subseteq N_{G'}[v]$.
  Since $|F| \geq |F^0 \cup F^1_{v \leftrightarrow u}|$, the claim holds.
\end{proof}

\paragraph{Chain graphs.}

Chain graphs are the bipartite graphs whose neighborhoods of the vertices on one
of the sides form an inclusion chain.  It follows that the neighborhoods on the
opposite side form an inclusion chain as well.  If this is the case, we say that
the neighborhoods are \emph{nested}.  The relation to threshold graphs is
obvious, see Figure~\ref{fig:chain-threshold} for a comparison.  The problem of
completing edges to obtain a chain graph was introduced by
Golumbic~\cite{golumbic-book} and later studied by
Yannakakis~\cite{yannakakis1981computing}, Feder, Mannila and
Terzi~\cite{feder2009approximating} and finally by Fomin and
Villanger~\cite{fomin2012subexponential} who showed that \pname{Chain
  Completion} when given a bipartite graph whose bipartition must be respected
is solvable in subexponential time.

\begin{definition}[Chain graph]
  \label{def:chain}
  A bipartite graph $G = (A,B,E)$ is a chain graph if there is an ordering of
  the vertices of $A$, $a_1, a_2, \dots, a_{|A|}$ such that $N(a_1) \subseteq
  N(a_2) \subseteq \cdots \subseteq N(a_{|A|})$.
\end{definition}

From the following proposition, it follows that chain graphs are characterized
by a finite set of forbidden induced subgraphs and hence are subject to Cai's
theorem~\cite{cai1996fixed}.

\begin{proposition}[\cite{brandstadt1999graph}]
  \label{prop:chain-defs}
  Let $G$ be a graph.  The following are equivalent:
  \begin{itemize}
  \item $G$ is a chain graph.
  \item $G$ is bipartite and $2K_2$-free.
  \item $G$ is $\{2K_2, C_3, C_5\}$-free.
  \item $G$ can be constructed from a threshold graph by removing all the edges
    in the clique partition.
  \end{itemize}
\end{proposition}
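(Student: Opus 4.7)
The plan is to establish the three pairwise equivalences (1) $\Leftrightarrow$ (2), (2) $\Leftrightarrow$ (3), and (1) $\Leftrightarrow$ (4); together these yield the full four-way equivalence.

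For (1) $\Leftrightarrow$ (2), chain graphs are bipartite by definition, so only $2K_2$-freeness needs to be related to nestedness. If two edges $a_i b_j$ and $a_k b_\ell$ with $a_i, a_k \in A$ and $b_j, b_\ell \in B$ induced a $2K_2$, then $b_j \in N(a_i) \setminus N(a_k)$ and $b_\ell \in N(a_k) \setminus N(a_i)$, violating the nested order on $A$. Conversely, if a bipartite $2K_2$-free graph had two $A$-vertices with incomparable neighborhoods, picking a private neighbor on each side exhibits a forbidden $2K_2$.

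For (2) $\Leftrightarrow$ (3), bipartiteness forbids every odd cycle, so $C_3$- and $C_5$-freeness are immediate from (2). For the reverse I would assume $\{2K_2, C_3, C_5\}$-freeness and suppose for contradiction that $G$ contains an odd cycle; take a shortest such cycle $C$, which is necessarily induced. Its length is not $3$ or $5$ by hypothesis, so it has length $2k+1$ with $k \geq 3$; then the two edges $v_1 v_2$ and $v_4 v_5$ along $C$ form an induced $2K_2$, since by minimality $C$ has no chords and the four chosen vertices are pairwise non-adjacent in $C$. This last step---checking that every induced $C_{2k+1}$ with $k\ge 3$ contains an induced $2K_2$---is the main (though routine) technical point.

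For (1) $\Leftrightarrow$ (4), given a chain graph $G = (A, B, E)$ I would form $G'$ by turning $A$ into a clique and verify via Proposition~\ref{prop:c4p42k2-free} that $G'$ is $\{C_4, P_4, 2K_2\}$-free by a short case analysis on how four vertices split between $A$ and $B$. Splits of $0+4$, $1+3$, $3+1$, $4+0$ cannot produce any forbidden induced subgraph because $A$ is a clique and $B$ is independent; the only nontrivial case is $2+2$, and each forbidden induced subgraph there would force two $A$-vertices of $G$ to have incomparable neighborhoods, contradicting (1). Conversely, starting from a threshold graph $H$ with a threshold partition $(\mathcal{C}, \mathcal{I})$ provided by Proposition~\ref{prop:threshold-decomposition} and deleting every edge inside $\bigcup \mathcal{C}$ yields a bipartite graph with sides $\bigcup \mathcal{C}$ and $\bigcup \mathcal{I}$; the condition $N(I_i) \subset N(I_j)$ for $i<j$ from the threshold partition is exactly the nested-neighborhood property required of a chain graph.
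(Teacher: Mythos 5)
Your proof is correct, but note that the paper does not actually prove this proposition---it is quoted from Brandst\"adt, Le and Spinrad with only a citation---so there is no in-paper argument to compare against; what you have written is a reasonable self-contained verification. Your three equivalences are all sound: the private-neighbor argument for (1)$\Leftrightarrow$(2), the shortest-odd-cycle argument for (3)$\Rightarrow$(2) (a chordless odd cycle of length at least $7$ indeed contains an induced $2K_2$, e.g.\ on $v_1,v_2,v_4,v_5$), and the split-by-sides case analysis for (1)$\Leftrightarrow$(4) using Propositions~\ref{prop:c4p42k2-free} and~\ref{prop:threshold-decomposition}. One small imprecision in the $2{+}2$ case of (1)$\Rightarrow$(4): not every forbidden configuration there is excluded by incomparable neighborhoods. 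An induced $C_4$ on $a,a',b,b'$, and a $2K_2$ whose edges would be $aa'$ and $bb'$, are ruled out simply because $aa'$ is an edge of the clique and $bb'$ a non-edge of the independent set; only the $P_4$ of the form $b\,a\,a'\,b'$ (and the cross-paired $2K_2$ with edges $ab$, $a'b'$) genuinely contradicts nestedness. This is a one-line fix and does not affect the validity of the argument.
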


Since they have the same structure as threshold graphs, it is natural to talk
about a \emph{chain decomposition}, $(\mathcal{A},\mathcal{B})$ of a bipartite
graph~$G$ with bipartition $(A,B)$.  We say that $(\mathcal{A},\mathcal{B})$ is
a chain decomposition for a chain graph~$G$ if and only if
$(\mathcal{A},\mathcal{B})$ is a threshold decomposition for the corresponding
threshold graph~$G'$ where~$A$ is made into a clique.

%
%
%
%
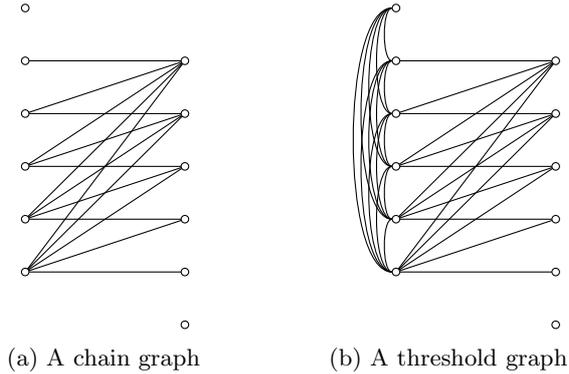
\begin{figure}[t]
  \centering
  \begin{subfigure}[t]{0.25\textwidth}
    \centering
    \begin{tikzpicture}[every node/.style={circle, draw, scale=.3},
      scale=.7]
      \node (c1) at (0,0) {};
      \node (c2) at (0,1) {};
      \node (c3) at (0,2) {};
      \node (c4) at (0,3) {};
      \node (c5) at (0,4) {};
      \node (c6) at (0,5) {};

      \node (i0) at (3,-1) {};
      \node (i1) at (3,0) {};
      \node (i2) at (3,1) {};
      \node (i3) at (3,2) {};
      \node (i4) at (3,3) {};
      \node (i5) at (3,4) {};

      \draw (c1) -- (i1);
      \draw (c1) -- (i2);
      \draw (c1) -- (i3);
      \draw (c1) -- (i4);
      \draw (c1) -- (i5);

      \draw (c2) -- (i2);
      \draw (c2) -- (i3);
      \draw (c2) -- (i4);
      \draw (c2) -- (i5);

      \draw (c3) -- (i3);
      \draw (c3) -- (i4);
      \draw (c3) -- (i5);

      \draw (c4) -- (i4);
      \draw (c4) -- (i5);
      
      \draw (c5) -- (i5);
    \end{tikzpicture}
    \caption{A chain graph}
  \end{subfigure}
  \hspace{.02\textwidth}
  \begin{subfigure}[t]{0.25\textwidth}
    \centering
    \begin{tikzpicture}[every node/.style={circle, draw, scale=.3},
      scale=.7]
      \node (c1) at (0,0) {};
      \node (c2) at (0,1) {};
      \node (c3) at (0,2) {};
      \node (c4) at (0,3) {};
      \node (c5) at (0,4) {};
      \node (c6) at (0,5) {};

      \node (i0) at (3,-1) {};
      \node (i1) at (3,0) {};
      \node (i2) at (3,1) {};
      \node (i3) at (3,2) {};
      \node (i4) at (3,3) {};
      \node (i5) at (3,4) {};

      \draw (c1) -- (i1);
      \draw (c1) -- (i2);
      \draw (c1) -- (i3);
      \draw (c1) -- (i4);
      \draw (c1) -- (i5);

      \draw (c2) -- (i2);
      \draw (c2) -- (i3);
      \draw (c2) -- (i4);
      \draw (c2) -- (i5);

      \draw (c3) -- (i3);
      \draw (c3) -- (i4);
      \draw (c3) -- (i5);

      \draw (c4) -- (i4);
      \draw (c4) -- (i5);
      
      \draw (c5) -- (i5);

      \draw (c1) to[out=180,in=180,looseness=0.5] (c2);
      \draw (c1) to[out=180,in=180,looseness=0.5] (c3);
      \draw (c1) to[out=180,in=180,looseness=0.5] (c4);
      \draw (c1) to[out=180,in=180,looseness=0.5] (c5);
      \draw (c1) to[out=180,in=180,looseness=0.5] (c6);

      \draw (c2) to[out=180,in=180,looseness=0.5] (c3);
      \draw (c2) to[out=180,in=180,looseness=0.5] (c4);
      \draw (c2) to[out=180,in=180,looseness=0.5] (c5);
      \draw (c2) to[out=180,in=180,looseness=0.5] (c6);

      \draw (c3) to[out=180,in=180,looseness=0.5] (c4);
      \draw (c3) to[out=180,in=180,looseness=0.5] (c5);
      \draw (c3) to[out=180,in=180,looseness=0.5] (c6);
      
      \draw (c4) to[out=180,in=180,looseness=0.5] (c5);
      \draw (c4) to[out=180,in=180,looseness=0.5] (c6);
      
      \draw (c5) to[out=180,in=180,looseness=0.5] (c6);

    \end{tikzpicture}
    \caption{A threshold graph}
  \end{subfigure}
  \caption{Illustration of the similarities between chain and threshold
    graphs.  Note that the nodes drawn can be replaced by twin classes
    of any size, even empty.  However, if on one side of a level there
    is an empty class, the other two levels on the opposite side will
    collapse to a twin class.  See
    Proposition~\ref{prop:threshold-decomposition}.}
  \label{fig:chain-threshold}
\end{figure}

%
%

\paragraph{Parameterized complexity.}
The running time of an algorithm in classical complexity analysis is described
as a function of the length of the input.  To refine the analysis of
computationally hard problems, especially \NP-hard problems, parameterized
complexity introduced the notion of an extra ``parameter''---an additional part
of a problem instance used to measure the problem complexity when the parameter
is taken into consideration.  To simplify the notation, here we consider inputs
to problems to be of the form~$(G,k)$---a pair consisting of a graph~$G$ and a
nonnegative integer~$k$.  We will say that a problem is \emph{fixed parameter
  tractable} whenever there is an algorithm solving the problem in time $f(k)
\cdot \poly(|G|)$, where~$f$ is any function, and $\poly \colon \mathbb{N} \to
\mathbb{N}$ any polynomial function.  In the case when $f(k) = 2^{o(k)}$ we say
that the algorithm is a subexponential parameterized algorithm.  When a problem
$\Pi \subseteq \mathcal{G} \times \mathbb{N}$ is fixed-parameter tractable,
where~$\mathcal{G}$ is the class of all graphs, we say that~$\Pi$ belongs to the
complexity class \cclass{FPT}.  For a more rigorous introduction to
parameterized complexity we refer to the book of Flum and
Grohe~\cite{flum2006parameterized}.

Given a parameterized problem $\Pi$, we say two instances $(G,k)$ and $(G',k')$
are {\em equivalent} if $(G,k)\in \Pi$ if and only if $(G',k')\in \Pi$.  A
\emph{kernelization algorithm} (or \emph{kernel}) is a polynomial-time algorithm
for a parameterized problem~$\Pi$ that takes as input a problem instance $(G,k)$
and returns an equivalent instance $(G',k')$, where both~$|G'|$ and~$k'$ are
bounded by~$f(k)$ for some function~$f$.  We then say that~$f$ is the \emph{size
  of the kernel}.  When $k' \leq k$, we say that the kernel is a \emph{proper
  kernel}.  Specifically, a proper polynomial kernelization algorithm for~$\Pi$
is a polynomial time algorithm which takes as input an instance $(G,k)$ and
returns an equivalent instance~$(G',k')$ with $k' \leq k$ and $|G'| \leq p(k)$
for some polynomial function~$p$.

\begin{definition}[Laminar set system, \cite{drange2014apolynomial}]
  \label{def:laminar}
  A set system $\mathcal{F}\subseteq 2^U$ over a ground set $U$ is called \emph{laminar}
  if for every $X_1$ and $X_2$ in $\mathcal{F}$ with $x_1 \in X_1 \setminus X_2$
  and $x_2 \in X_2 \setminus X_1$, there is no $Y \in \mathcal{F}$ with $\{x_1,
  x_2\} \subseteq Y$.
\end{definition}

An equivalent way of looking at a laminar set system $\mathcal{F}$ is that every
two sets $X_1$ and $X_2$ in $\mathcal{F}$ are either disjoint or nested, that
is, for every $X_1, X_2 \in \mathcal{F}$ either $X_1 \cap X_2 = \emptyset$, or
$X_1 \subseteq X_2$ or $X_2 \subseteq X_1$.

\begin{lemma}[\cite{drange2014apolynomial}]
  \label{lem:laminar-bounded}
  Let~$\mathcal{F}$ be a laminar set system over a finite ground set~$U$.  Then
  the cardinality of~$\mathcal{F}$ is at most~$|U| + 1$.
\end{lemma}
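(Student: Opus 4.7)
The plan is to prove the bound by induction on $|\mathcal{F}|$. The inductive step rests on a single key claim: whenever $\mathcal{F}$ contains a non-empty set, any inclusion-maximal member $M \in \mathcal{F}$ admits a \emph{private element}---some $u \in M$ lying in no other set of $\mathcal{F}$. Given this, removing $M$ and also deleting $u$ from the ground set leaves a laminar system $\mathcal{F} \setminus \{M\}$ over $U \setminus \{u\}$, and the inductive hypothesis yields $|\mathcal{F}| - 1 \le (|U|-1)+1$, closing the induction. The base case where $\mathcal{F} \subseteq \{\emptyset\}$ is immediate, and the empty set (if present) contributes only the ``$+1$'' in the final bound.

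The main obstacle is establishing the private-element claim. I would argue by contradiction: suppose every $u \in M$ belongs to some $X_u \in \mathcal{F} \setminus \{M\}$. By maximality of $M$, each such $X_u$ is either a proper subset of $M$ or is incomparable with $M$. Let $\mathcal{X}^{\subset}$ denote the proper subsets of $M$ that lie in $\mathcal{F}$. If two of them $Y_1, Y_2$ were incomparable, both would sit inside $M$, so $M$ itself would contain witnesses $y_1 \in Y_1 \setminus Y_2$ and $y_2 \in Y_2 \setminus Y_1$, violating Definition~\ref{def:laminar} for the pair $(Y_1, Y_2)$. Hence $\mathcal{X}^{\subset}$ is a chain; call its maximum $Y^* \subsetneq M$ (with the convention $Y^* = \emptyset$ if the chain is empty).

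If $Y^*$ alone covered $M$ then $Y^* \supseteq M$, contradicting $Y^* \subsetneq M$; so some $u \in M \setminus Y^*$ must be covered by a set $X^* \in \mathcal{F}$ incomparable with $M$. The same bridging trick constrains any minimal cover of $M \setminus Y^*$ by sets incomparable with $M$: two incomparable members of such a cover would, by minimality, each carry a witness inside $M \setminus Y^*$ from its own exclusive side, and $M$ would bridge them. Hence this minimal cover is also a chain, with maximum $X^{\max}$ containing all of $M \setminus Y^*$. Finally I compare $X^{\max}$ with $Y^*$: if $Y^* \subseteq X^{\max}$ then $X^{\max} \supseteq Y^* \cup (M \setminus Y^*) = M$, violating incomparability; if $X^{\max} \subseteq Y^*$ then $X^{\max} \subsetneq M$, the same violation; and if $X^{\max}, Y^*$ are incomparable then $M$ bridges them, since $Y^* \setminus X^{\max} \subseteq M$ and $M \setminus Y^* \subseteq X^{\max} \setminus Y^*$ each supply witnesses inside $M$. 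Every branch yields a contradiction, so the private element exists and the induction goes through.
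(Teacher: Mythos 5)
Your proof is correct, and since the paper states Lemma~\ref{lem:laminar-bounded} only by citation to \cite{drange2014apolynomial}, there is no in-paper argument to compare it with; your write-up stands as a valid self-contained proof. The essential point is that you work with Definition~\ref{def:laminar} as literally stated: each of your chain arguments (for the proper subsets of $M$, and for a minimal cover of $M \setminus Y^*$ by sets incomparable with $M$) and the final three-way comparison of $X^{\max}$ with $Y^*$ invokes precisely the forbidden pattern of a third member of $\mathcal{F}$ (namely $M$) containing witnesses of an incomparable pair. This is exactly what any proof must exploit, because for families that are merely pairwise disjoint-or-nested the bound $|U|+1$ is false (e.g.\ $\{\emptyset,\{1\},\{2\},\{1,2\}\}$ over $U=\{1,2\}$), so no argument relying only on that weaker reading could succeed. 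The bookkeeping is also in order: when the chain of proper subsets of $M$ is empty your convention $Y^*=\emptyset$ routes the contradiction through the first case, which does not require laminarity of the pair $(Y^*,X^{\max})$; the private element $u$ guarantees that $\mathcal{F}\setminus\{M\}$ lives over $U\setminus\{u\}$, and laminarity is hereditary under removing a member; and the induction on $|\mathcal{F}|$ then yields $|\mathcal{F}| \leq |U|+1$, which is tight for a maximal chain of sets.
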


\section{Hardness}
\label{sec:hardness}

In this section we show that \pname{Threshold Editing} is \NP-complete.
Recalling (see Figure~\ref{fig:chain-threshold}) that chain graphs are bipartite
graphs with structure very similar to that of threshold graphs, it should not be
surprising that we obtain as a corollary that \pname{Chain Editing} is
\NP-complete as well.

We will also conclude the section by giving a proof for the fact that
\pname{Chordal Editing} is \NP-complete; Although this has been known for a long
time (Natanzon~\cite{natanzon1999complexity}, Natanzon et
al.~\cite{natanzon2001complexity}, Sharan~\cite{sharan2002graph}),
the authors were unable to find a proof in the literature for the
\NP-completeness of \pname{Chordal Editing} and therefore include the
observation.
The problem was recently shown to be \cclass{FPT} by Cao and
Marx~\cite{cao2014chordal}, however we would like to point out that the more
general problem studied there is indeed well-known to be \NP-complete as it is a
generalized version of \pname{Chordal Vertex Deletion}.

%
%

\subsection{NP-completeness of Threshold Editing}
Recall that a boolean formula~$\phi$ is in 3-CNF-SAT if it is in conjunctive normal
form and each clause has at most three variables.
Our hardness reduction is from the problem \pname{3Sat}, where we are given a
3-CNF-SAT formula~$\phi$ and asked to decide whether~$\phi$ admits a satisfying
assignment.
We will denote by~$\C_\phi$ the set of clauses, and by~$\V_\phi$ the set of variables
in a given 3-CNF-SAT formula~$\phi$.
An \emph{assignment} for a formula~$\phi$ is a function~$\alpha \colon \V_\phi \to
\{\mathtt{true},\mathtt{false}\}$.
Furthermore, we assume we have some natural lexicographical ordering~$\lexlt$ of
the clauses $c_1, \dots, c_{|\C_\phi|}$ and the same for the variables $v_1,
\dots, v_{|\V_\phi|}$, hence we may write, for some variables~$x$ and~$y$,
that~$x \lexlt y$.  To immediately get an impression of the reduction we aim
for, the construction is depicted in Figure~\ref{fig:hardness-gadget}.

\begin{figure}[t]
  \centering
  \begin{tikzpicture}[every node/.style= {circle,draw,scale=.8}, scale=.7]

    \node (xa) at ( 0.0,  3.0) {};
    \node (xb) at ( 1.0,  3.0) {};
    \node (xf) at ( 2.0,  3.0) {};
    \node (xt) at ( 3.0,  3.0) {};
    \node (xc) at ( 4.0,  3.0) {};
    \node (xd) at ( 5.0,  3.0) {};
    
    \node[draw=none] (xat) at ( 0.0,  4.0) {$v^x_a$};
    \node[draw=none] (xbt) at ( 1.0,  4.0) {$v^x_b$};
    \node[draw=none] (xft) at ( 2.0,  4.0) {$v^x_\bot$};
    \node[draw=none] (xtt) at ( 3.0,  4.0) {$v^x_\top$};
    \node[draw=none] (xct) at ( 4.0,  4.0) {$v^x_c$};
    \node[draw=none] (xdt) at ( 5.0,  4.0) {$v^x_d$};
    
    \draw (-.3,3.3) -- (5.3,3.3) -- (5.3,2.7) -- (-.3,2.7) -- (-.3,3.3);
    
    \node (ya) at ( 6.0,  3.0) {};
    \node (yb) at ( 7.0,  3.0) {};
    \node (yf) at ( 8.0,  3.0) {};
    \node (yt) at ( 9.0,  3.0) {};
    \node (yc) at (10.0,  3.0) {};
    \node (yd) at (11.0,  3.0) {};

    \node[draw=none] (yat) at ( 6.0,  4.0) {$v^y_a$};
    \node[draw=none] (ybt) at ( 7.0,  4.0) {$v^y_b$};
    \node[draw=none] (yft) at ( 8.0,  4.0) {$v^y_\bot$};
    \node[draw=none] (ytt) at ( 9.0,  4.0) {$v^y_\top$};
    \node[draw=none] (yct) at (10.0,  4.0) {$v^y_c$};
    \node[draw=none] (ydt) at (11.0,  4.0) {$v^y_d$};
    
    \draw (5.7,3.3) -- (11.3,3.3) -- (11.3,2.7) -- (5.7,2.7) -- (5.7,3.3);
    
    \node (za) at (12.0,  3.0) {};
    \node (zb) at (13.0,  3.0) {};
    \node (zf) at (14.0,  3.0) {};
    \node (zt) at (15.0,  3.0) {};
    \node (zc) at (16.0,  3.0) {};
    \node (zd) at (17.0,  3.0) {};

    \node[draw=none] (zat) at (12.0,  4.0) {$v^z_a$};
    \node[draw=none] (zbt) at (13.0,  4.0) {$v^z_b$};
    \node[draw=none] (zft) at (14.0,  4.0) {$v^z_\bot$};
    \node[draw=none] (ztt) at (15.0,  4.0) {$v^z_\top$};
    \node[draw=none] (zct) at (16.0,  4.0) {$v^z_c$};
    \node[draw=none] (zdt) at (17.0,  4.0) {$v^z_d$};
    
    \draw (11.7,3.3) -- (17.3,3.3) -- (17.3,2.7) -- (11.7,2.7) -- (11.7,3.3);

    \node (c1) at ( 5.5,  0.0) {};
    \node[draw=none] (c1t) at ( 5.0,  -1) {$v_{c_1}$};
    \node[draw=none] (c1c) at ( 5.0, -2) {$c_1 = \overline x \lor y$};

    \draw (c1) to[in=-90, out=90, looseness=0] (xb);
    \draw (c1) to[in=-90, out=90, looseness=0] (xf);
    \draw (c1) to[in=-90, out=90, looseness=0] (xd);

    \draw (c1) to[in=-90, out=90, looseness=0] (yb);
    \draw (c1) to[in=-90, out=90, looseness=0] (yt);
    \draw (c1) to[in=-90, out=90, looseness=0] (yd);

    \draw (c1) to[in=-90, out=90, looseness=0] (zb);
    \draw (c1) to[in=-90, out=90, looseness=0] (zc);
    \draw (c1) to[in=-90, out=90, looseness=0] (zd);

    \node (c2) at ( 11.5,  0.0) {};
    \node[draw=none] (c2t) at ( 12.0,  -1) {$v_{c_2}$};
    \node[draw=none] (c2c) at ( 12, -2) {$c_2 =  x \lor z$};

    \draw (c2) to[in=-90, out=90, looseness=0] (xb);
    \draw (c2) to[in=-90, out=90, looseness=0] (xt);
    \draw (c2) to[in=-90, out=90, looseness=0] (xd);
                                             
    \draw (c2) to[in=-90, out=90, looseness=0] (yb);
    \draw (c2) to[in=-90, out=90, looseness=0] (yc);
    \draw (c2) to[in=-90, out=90, looseness=0] (yd);
                                             
    \draw (c2) to[in=-90, out=90, looseness=0] (zb);
    \draw (c2) to[in=-90, out=90, looseness=0] (zt);
    \draw (c2) to[in=-90, out=90, looseness=0] (zd);

  \end{tikzpicture}
  \caption{The connections of a clause and a variable.  All the vertices on the top
    (the variable vertices) belong to the clique, while the vertices on the
    bottom (the clause vertices) belong to the independent set.
    The vertices in the left part of the clique has higher degree than the
    vertices of the right part of the clique, whereas all the clause vertices
    (in the independent set) will all have the same degree, namely $3 \cdot
    |\V_\phi|$.}
  \label{fig:hardness-gadget}
\end{figure}
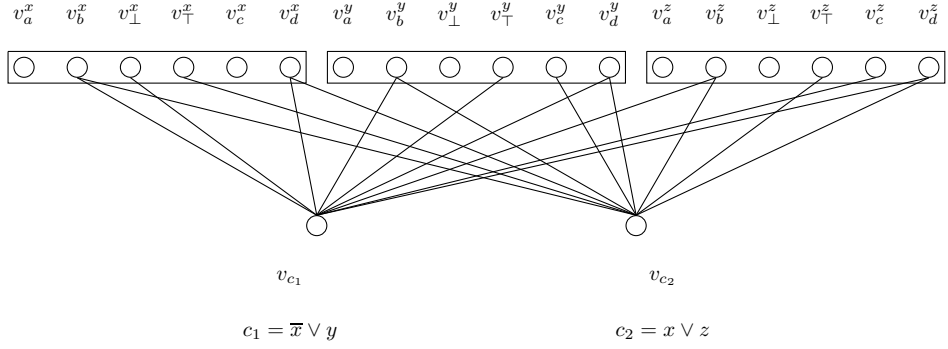

\subsubsection{Construction}
Recall that we want to form a graph~$G_\phi$ and pick an integer~$k_\phi$ so
that~$(G_\phi,k_\phi)$ is a yes-instance of \pname{Threshold Editing} if and
only if~$\phi$ is satisfiable.  We will design~$G_\phi$ to be a split graph, so
that the split partition is forced to be maintained in any threshold graph
within distance~$k_\phi$ of~$G_\phi$, where $k_\phi = |\C_\phi| \cdot (3 |\V_\phi| - 1)$.

Given~$\phi$, we first create a clique of size~$6|\V_\phi|$; To each variable $x \in
\V_\phi$, we associate six vertices of this clique, and order them in the following
manner
\[
v^x_a, v^x_b, v^x_\bot, v^x_\top, v^x_c, v^x_d .
\]
We will throughout the reduction refer to this ordering as $\order$: $\order$ is
a partial order which has 
\[
v^x_a <_{\order} v^x_b <_{\order} v^x_\top, v^x_\bot <_{\order} v^x_c <_{\order}
v^x_d,
\]
and for every two vertex $v^x_\star$ and $v^y_\star$ with $x \lexlt y$, we have
$v^x_\star <_{\order} v^y_\star$.  Observe that we do not specify which comes
first of $v^x_\top$ and $v^x_\bot$---this is the choice that will result in the
assignment~$\alpha$ for~$\phi$.

We enforce this ordering by adding~$O(k_\phi^2)$ vertices in the independent set;
Enforcing that~$v_1$ comes before~$v_2$ in the ordering is done by
adding~$k_\phi+1$ vertices in the independent set incident to all the vertices
coming before~$v_1$, including~$v_1$.  Since swapping the position of~$v_1$
and~$v_2$ would demand at least~$k_\phi+1$ edge modifications and~$k_\phi$ is
the intended budget, in any yes instance,~$v_1$ ends up before~$v_2$ in the
ordering of the clique.

We proceed adding the clause gadgets; For every clause~$c \in \C_\phi$, we add one
vertex~$v_c$ to the independent set.  Hence, the size of the independent set is
$O(|\C_\phi| + k_\phi^2)$.  For a variable~$x$ occurring in~$c$, we add an edge
between~$v_c$ and~$v^x_\bot$ if it occurs negatively, and between~$v_c$
and~$v^x_\top$ otherwise.  In addition, we make~$v_c$ incident to~$v^x_b$
and~$v^x_d$.

For a variable~$z$ which does not occur in a clause~$c$, we make $v_c$ adjacent
to $v^z_b$, $v^z_c$, and $v^z_d$.  To complete the reduction, we
add~$4(k_\phi+1)$ isolated vertices;~$k_\phi+1$ vertices to the left in the
independent set,~$k_\phi+1$ vertices to the right in the independent set,
and~$k_\phi+1$ to the left and~$k_\phi+1$ to the right in the clique.
This ensures that no vertex will move from the clique to the independent set
partition, and vice versa.

\subsubsection{Properties of the Constructed Instance}
\label{sec:properties-construction}

Before proving the Theorem~\ref{thm:te-np-complete}, and specifically
Lemma~\ref{lem:reduc-correct},
we may observe the following, which may serve as an intuition for the idea of
the reduction.  When we consider a fixed permutation of the variable gadget
vertices (the clique side), the only thing we need to determine for a clause
vertex~$v_c$, is the \emph{cut-off point}: the point in $\order$ at which the
vertex~$v_c$ will no longer have any neighbors.  Observing that no
vertex~$v^x_i$ swaps places with any other~$v^x_j$ for $i,j \in \{a,b,c,d\}$,
and that no~$v^x_\star$ changes with~$v^y_\star$ for $x,y \in \V_\phi$, consider
a fixed permutation of the variable vertices.  We charge the clause vertices
with the edits incident to the clause vertex.  Since the budget is $k_\phi =
|\C| \cdot (3 |\V_\phi| - 1)$, and every clause needs at least~$3|\V_\phi|-1$,
to obtain a solution (upcoming Lemma~\ref{lem:budget-lower}) we need to charge
every clause vertex with exactly~$3 |\V_\phi| - 1$ edits.
Figure~\ref{fig:plot} illustrates the charged cost of a clause vertex.
\begin{figure}[t]
  \centering
  \begin{tikzpicture}[every node/.style={circle,draw,scale=.7},
    scale=.6]
    
    \draw (0,0) -- (23,0);
    
    \draw (0,0) -- (0,5);
    
    \node[draw=none] (k1) at (-1,2) {$3 |\V| - 1$};
    \node[draw=none] (k2) at (-1,3) {$3 |\V|$};
    \node[draw=none] (k3) at (-1,4) {$3 |\V| + 1$};
    
    \draw (1,3) -- (2,4) -- (3,3) -- (4,4) -- (5,3) -- (6,2) -- (7,3) -- (8,4) -- (9,3) -- (10,4);
    \draw (10,4) -- (11,3) -- (12,4) -- (13,3) -- (14,2) -- (15,3) -- (16,4);
    \draw (16,4) -- (17,5) -- (18,6) -- (19,5) -- (20,4) -- (21,3) --
    (22,4) -- (23,3);
    
    \draw[dashed] (-1em,3) -- (23,3);

    \draw (6,1.9) to[->,out=-90,in=90] (8,1);
    \draw (14,1.9) to[->,out=-90,in=90] (8,1);

    \node[draw=none] (tx) at (8,0.5) {Satisfying literals};

  \end{tikzpicture}
  \caption{The cost with which we charge a clause vertex depends on the cut-off
    point; The $x$-axis denotes the point in the lexicographic ordering which
    separates the vertices adjacent to the clause vertex from the vertices not
    adjacent to the clause vertex.}
  \label{fig:plot}
\end{figure}
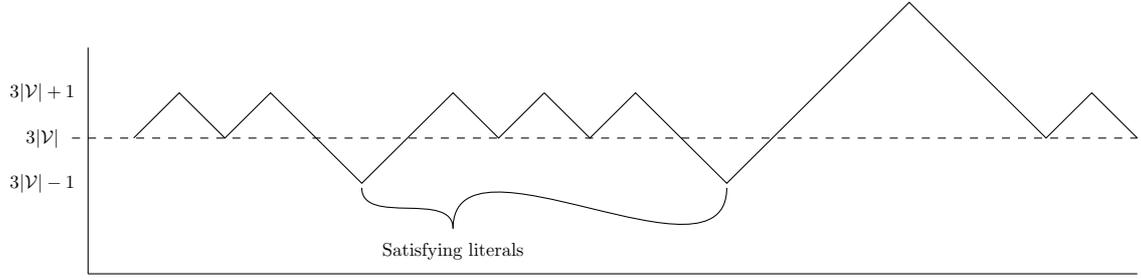

\begin{observation}
  \label{obs:g-is-split}
  The graph~$G_\phi$ resulting from the above procedure is a split graph and when
  $k_\phi = |\C| \cdot (3 |\V_\phi| - 1)$, if~$H$ is a threshold graph within
  distance~$k_\phi$ of~$G_\phi$,~$H$ must have the same clique-maximizing split
  partition as~$G_\phi$.
\end{observation}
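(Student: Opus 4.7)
The plan for the first claim is immediate: I will verify directly from the construction that the $6|\V_\phi|$ variable-gadget vertices together with the $2(k_\phi+1)$ clique-side padding vertices form a clique $C_{G_\phi}$, while the remaining vertices (clause vertices, ordering-enforcing vertices, and independent-set-side padding) form the independent set $I_{G_\phi}$.

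For the second claim I will let $(C_H,I_H)$ denote the clique-maximizing split partition of the threshold graph $H = G_\phi\triangle F$ with $|F|\leq k_\phi$, and show that no vertex of $G_\phi$ can switch sides. By Lemma~\ref{lemma:no-switching-of-nestedness} I may assume that $F$ preserves every twin class of $G_\phi$. The construction introduces several large twin classes: each of the four padding groups is a twin class of size $k_\phi+1$, and each ordering-enforcement constraint contributes another twin class of the same size. Flipping such a twin class of size $k_\phi+1$ between the two sides would demand $\binom{k_\phi+1}{2}>k_\phi$ intra-class edits --- additions to turn an independent class into a clique, or deletions to turn a clique into an independent set --- exceeding the budget. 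Hence every padding and every ordering-enforcement twin class is pinned on its intended side.

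It remains to control the singleton twin classes, namely the clause vertices in $I_{G_\phi}$ and the variable-gadget vertices in $C_{G_\phi}$. Since each edge deletion decreases $\omega(\cdot)$ by at most one while additions never decrease it,
\[
|C_H|=\omega(H)\geq\omega(G_\phi)-k_\phi\geq |C_{G_\phi}|-k_\phi = 6|\V_\phi|+k_\phi+2.
\]
A clause vertex $v_c$ placed in $C_H$ would have to be adjacent in $H$ to all $|C_H|-1\geq 6|\V_\phi|+k_\phi+1$ other vertices of $C_H$; since $\deg_{G_\phi}(v_c)=3|\V_\phi|$, this forces strictly more than $k_\phi$ edge additions incident to $v_c$. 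Dually, a variable-gadget vertex $v$ placed in $I_H$ must be non-adjacent in $H$ to every ordering-enforcing twin class that is anchored in $I_H$ and adjacent to $v$ in $G_\phi$; because any such twin class contributes $k_\phi+1$ mandatory deletions, this alone exceeds the budget.

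The main obstacle will be verifying that the anchoring twin classes truly are of size $k_\phi+1$ and that each remaining singleton vertex collides with at least one anchored class in more than $k_\phi$ places --- this is precisely what the $k_\phi+1$ redundant copies of every padding group and ordering-enforcement constraint are engineered to guarantee. Once that bookkeeping is in place, the edit-budget collision in each case forces $(C_H,I_H)=(C_{G_\phi},I_{G_\phi})$.
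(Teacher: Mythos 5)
Your overall plan---pin the large padding and ordering-enforcement classes by a budget argument and then handle the remaining vertices by degree counting---is the argument the construction is designed to support (the paper states this as an unproved observation), and your clause-vertex count via $|C_H|=\omega(H)\geq\omega(G_\phi)-k_\phi$ is correct. However, there are genuine gaps in the steps that do the real work. First, the appeal to Lemma~\ref{lemma:no-switching-of-nestedness} is illegitimate here: that lemma only asserts the \emph{existence} of one optimal solution preserving nestedness (and says nothing about preserving twin classes per se), whereas the observation must hold for \emph{every} threshold graph $H$ within distance $k_\phi$ of $G_\phi$ --- in particular for the arbitrary (merely minimal, not optimal) editing set to which the paper later applies it --- so you are not free to restrict attention to a structured $F$. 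Second, the cost $\binom{k_\phi+1}{2}$ only rules out an entire twin class switching sides as a block; nothing prevents a few members of a padding or enforcement class from crossing, so the later claim that a class ``anchored in $I_H$'' yields $k_\phi+1$ mandatory deletions incident to a variable vertex is unjustified as written. The standard repair is a counting bound (at most $O(\sqrt{k_\phi})$ vertices of $I_{G_\phi}$ can enter $C_H$, since the missing pairs among them must all be added, and symmetrically for $C_{G_\phi}\cap I_H$), after which the degree/budget estimates must be redone with these weaker guarantees; note that with $m$ class members crossing, the naive count $(k_\phi+1-m)+\binom{m}{2}$ only reaches $k_\phi$, so one must additionally charge such a crossing vertex for its non-adjacency to the right-hand clique padding inside $C_H$ to push past the budget.

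Third, the case analysis is incomplete. Besides clause vertices, the independent side contains the ordering-enforcement vertices (degree up to roughly $6|\V_\phi|+k_\phi+1$) and the independent-side padding, for which your ``$|C_H|-1$ exceeds the degree by more than $k_\phi$'' count simply does not apply; you cover them only through the flawed block-pinning step. Likewise, the claim that every variable-gadget vertex meets some anchored enforcement class in more than $k_\phi$ places is exactly the bookkeeping you defer in your last paragraph, and it is not automatic (e.g.\ the $\order$-maximal variable vertex needs an enforcement constraint or padding interaction of its own). As it stands, the proposal proves the statement only for clause vertices and otherwise records the intended strategy; the pinning of the large classes and the treatment of the high-degree independent vertices --- the heart of the observation --- still need to be carried out.
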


\begin{lemma}
  \label{lem:budget-lower}
  Let $(G_\phi,k_\phi)$ be a yes instance to \TE{} constructed from a 3-CNF-SAT
  formula~$\phi$ with $|F| \leq k_\phi$ a solution.
  For any clause vertex~$v_c$, at least $3 |\V_\phi| - 1$ edges in~$F$ are incident
  to~$v_c$.
\end{lemma}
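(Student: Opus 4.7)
The plan is to characterize $N_H(v_c)$ in the resulting threshold graph $H = G_\phi \triangle F$ as a prefix of a fixed ordering of the clique vertices, and then to bound the symmetric difference with the original neighborhood $N_{G_\phi}(v_c)$ by a short case check on where that prefix ends.

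First I would invoke Observation~\ref{obs:g-is-split}: the split partition of $G_\phi$ is preserved in $H$, so $v_c$ lies on the independent side of $H$ and every edit of $F$ incident to $v_c$ is an edge between $v_c$ and some clique vertex. Applying Proposition~\ref{prop:threshold-decomposition} to $H$ then yields a threshold decomposition $(\mathcal{C},\mathcal{I})$ in which $N_H(v_c)$ is an initial segment $C_1 \cup \cdots \cup C_j$ of the clique listed in order of decreasing level. Since $|F| \leq k_\phi$, the budget is too small to flip the $k_\phi+1$ enforcer edges that separate any two $\order$-comparable variable-gadget vertices, so the resulting clique ordering refines $\order$; the only residual freedom is whether $v^x_\top$ precedes or follows $v^x_\bot$ inside each variable gadget.

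Writing $n = |\V_\phi|$, $S = N_{G_\phi}(v_c)$, and $S' = N_H(v_c)$, I would parameterize $S'$ by the number $j \in \{0, \ldots, n\}$ of complete variable gadgets it contains plus a local position $p \in \{0, \ldots, 6\}$ inside the next gadget. Let $q$ denote the number of original neighbors of $v_c$ among the first $p$ vertices of that gadget, and let $\pi \geq 0$ count the padding and isolated vertices that happen to lie in $S'$ (none of which belong to $S$). A direct calculation then gives
\[
|S \triangle S'| = |S| + |S'| - 2|S \cap S'| = 3n + p - 2q + \pi.
\]

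The heart of the proof is a finite case check establishing $p - 2q \geq -1$ in every configuration. Tabulating $q(p)$ for $p = 0, 1, \ldots, 6$ under each of the three relations the partial gadget's variable $x$ can have with clause $c$ (positive, negative, absent) and each of the two admissible $\top/\bot$ orders, one sees that $p - 2q = -1$ is attained only at $p = 3$ when the gadget's ordering places the satisfying literal in the third slot---that is, $v^x_\top$ before $v^x_\bot$ for a positive occurrence, or $v^x_\bot$ before $v^x_\top$ for a negative occurrence---and $p - 2q \geq 0$ in every other case. Combined with $\pi \geq 0$, this yields $|S \triangle S'| \geq 3n - 1 = 3|\V_\phi| - 1$, as required. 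The only subtle point, and the step that demands the most care, is verifying that no edit outside the prefix structure can shortcut the bound: an edit to a padding vertex is absorbed into $\pi$ and only raises the count, while Observation~\ref{obs:g-is-split} precludes edits between $v_c$ and any other independent-side vertex.
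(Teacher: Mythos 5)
Your proof is correct and follows essentially the same route as the paper: both use Observation~\ref{obs:g-is-split} and the enforced ordering $\order$ to conclude that $N_{G_\phi\triangle F}(v_c)$ is a prefix of the clique ordering (up to the $\top/\bot$ swap), then charge $3$ edits to every gadget lying entirely inside or outside the prefix and at least $2$ to the straddled gadget. The only difference is presentational: the paper gets the ``$\geq 2$'' for the straddled gadget directly from the added edge $v_cv^x_a$ and deleted edge $v_cv^x_d$, whereas you obtain it (plus the extra information about where equality holds) from the symmetric-difference formula $3|\V_\phi|+p-2q+\pi$ and a finite tabulation.
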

\begin{proof}
  By the properties of $\order$, we know that the only vertices we may change
  the order of are those corresponding to $v^\star_\top$ and $v^\star_\bot$.
  Pick any index in $\order$ for which we know that~$v_c$ is adjacent to all
  vertices on the left hand side and non-adjacent to all vertices on the right
  hand side.  Let~$L_c$ be the set of variables whose vertices are completely
  adjacent to~$v_c$ and~$R_c$ the corresponding set completely non-adjacent
  to~$v_c$.  By construction, $v_c$ has exactly three neighbors in each variable
  and thus these variable gadgets contribute $3 (|L_c| + |R_c|)$ to the budget.
  If $L_c \cup R_c = \V_\phi$, we are done, as~$v_c$ needs at least~$3|\V_\phi|$
  edits here.
  
  Suppose therefore that there is a variable~$x$ whose vertex~$v^x_a$ is
  adjacent to~$v_c$ and~$v^x_d$ is non-adjacent to~$v_c$.  But then we have
  already deleted the existing edge $v_cv^x_d$ and added the non-existing edge
  $v_cv^x_a$.  This immediately gives a lower bound on $3(|\V_\phi|-1)+2 =
  3|\V_\phi|-1$ edits.
\end{proof}

\subsubsection{Proof of Correctness}
\label{sec:proof-correctness}

\begin{lemma}
  \label{lem:save-satisfy}
  If there is an editing set~$F$ of size at most~$k_\phi$ for an
  instance~$(G_\phi,k_\phi)$ constructed from a 3-CNF-SAT formula~$\phi$,
  and~$|F(v_c)| = 3|\V_\phi|-1$, then the~$\lexlt$-highest vertex connected
  to~$v_c$ corresponds to a variable satisfying the clause~$c$.
\end{lemma}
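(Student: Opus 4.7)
The plan is to carefully account for the edits incident to $v_c$, turning the equality $|F(v_c)| = 3|\V_\phi|-1$ into a sharp constraint on where the neighborhood of $v_c$ in $H = G_\phi \triangle F$ cuts off; this cutoff must land exactly on a literal that satisfies $c$.

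First, $v_c$ remains in the independent side of the split partition (Observation~\ref{obs:g-is-split}), and the $k_\phi+1$-vertex enforcement gadgets force the clique ordering of $H$ to agree with $\order$ on all gadget vertices -- the padding clique vertices necessarily sit after every gadget vertex since they have no independent-side neighbors in $G_\phi$. The nested-neighborhood characterization (Proposition~\ref{prop:threshold-nested-neighborhoods}) then implies that $N_H(v_c) = P$ is a prefix of this ordering. Writing $p = |P|$ and $m = |P \cap N_{G_\phi}(v_c)|$,
\[
|F(v_c)| = |N_{G_\phi}(v_c) \triangle N_H(v_c)| = (3|\V_\phi|-m) + (p-m) = 3|\V_\phi| - (2m-p),
\]
so the hypothesis becomes the equation $2m - p = 1$.

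Next, I would decompose $2m-p$ block by block. The gadget region consists of one block of six vertices per variable, with the block of $x$ preceding the block of $y$ in $\order$ exactly when $x \lexlt y$. Since $v_c$ has exactly three neighbors in every such six-vertex block, any block that $P$ either fully contains or fully avoids contributes $0$ to $2m-p$; any padding clique vertex included in $P$ contributes $-1$. Hence $2m-p=1$ forces $P$ to avoid all padding and to terminate strictly inside a single variable block, say that of $x$, at some local position $j \in \{1,\ldots,5\}$; the remaining constraint is $2m'-j=1$, where $m'$ is the number of neighbors of $v_c$ among the first $j$ positions of that block.

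Finally, a finite case analysis pins down $(x,j)$. The neighbors of $v_c$ inside block $x$ are $\{v^x_b, v^x_d\}$ together with $v^x_\top$, $v^x_\bot$, or $v^x_c$, according as $x$ occurs in $c$ positively, negatively, or not at all, while the free choice in $\order$ decides whether $v^x_\top$ or $v^x_\bot$ occupies local position $3$. Enumerating these five patterns against $j \in \{1,\ldots,5\}$, the equation $2m'-j=1$ admits exactly two solutions, both with $j=3$: either $x$ appears positively in $c$ with $v^x_\top$ in position $3$, or $x$ appears negatively in $c$ with $v^x_\bot$ in position $3$. In both cases the $\lexlt$-highest neighbor of $v_c$ in $H$ is precisely this position-$3$ vertex: it is $v^x_\top$ in the first case, encoding $x = \mathtt{true}$ and thus satisfying the positive literal $x$ in $c$, or $v^x_\bot$ in the second, encoding $x = \mathtt{false}$ and thus satisfying $\bar x$ in $c$. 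Either way, the highest neighbor of $v_c$ witnesses a variable that satisfies $c$.

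The main obstacle is the bookkeeping in the middle step: one must confirm that every alternative placement of the cutoff -- inside padding, at a block boundary, at any $j \in \{1,2,4,5\}$ within a block, or at $j=3$ inside a non-satisfying block -- yields $2m-p \le 0$ rather than $1$. This is a short but finicky finite calculation, essentially sharpening the argument behind Lemma~\ref{lem:budget-lower} so that the single-edit slack $3|\V_\phi|-1$ is realized only when the cutoff vertex corresponds to a satisfying literal.
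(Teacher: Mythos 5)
Your proof is correct and takes essentially the same route as the paper: both arguments sharpen the budget accounting of Lemma~\ref{lem:budget-lower} to show that the exact value $3|\V_\phi|-1$ forces the cut-off of $v_c$'s (prefix) neighborhood to land strictly inside the gadget of a variable occurring in~$c$, with $v^x_\top$ (resp.\ $v^x_\bot$) in the third position. Your $2m-p$ block-by-block bookkeeping and position enumeration is just a more explicit rendering of the paper's two-case analysis (variable in~$c$ versus variable not in~$c$).
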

\begin{proof}
  From the proof of Lemma~\ref{lem:budget-lower}, we observed that for a clause
  $c$ to be within budget, we must choose a cut-off point within a variable
  gadget, meaning that there is a variable $x$ for which $v_c$ is adjacent to
  $v^x_a$ and non-adjacent to $v^x_d$.

  We now distinguish two cases,~\textit{(i)}~$x$ is a variable occurring
  (w.l.o.g.\ positively) in~$c$ and~\textit{(ii)}~$x$ does not occur in~$c$.
  For~\textit{(i)},~$v_c$ was adjacent to~$v^x_b$,~$v^x_\top$, and~$v^x_d$.  By
  assumption, we add the edge to $v^x_a$ and delete the edge to~$v^x_d$.  But
  then we have already spent the entire budget, hence the only way this is a
  legal editing, $v^x_\top$ must come before $v^x_\bot$, and hence satisfies
  $v_c$.  See Figure~\ref{fig:hardness-gadget-edited}.

  For~\textit{(ii)} we have that~$v_c$ was adjacent to~$v^x_b$,~$v^x_c$,
  and~$v^x_d$.  Here we, again by assumption, add the edge to~$v^x_a$ and delete
  the edge to~$v^x_d$.  This alone costs two edits, so we are done.  But observe
  that these two edits alone are not enough, hence if we want to achieve the
  goal of $3 |\V_\phi| - 1$ edited edges, the cut-off index must be inside a
  variable gadget corresponding to a variable occurring in~$c$,
  i.e.~\textit{(i)}~must be the case.
\end{proof}

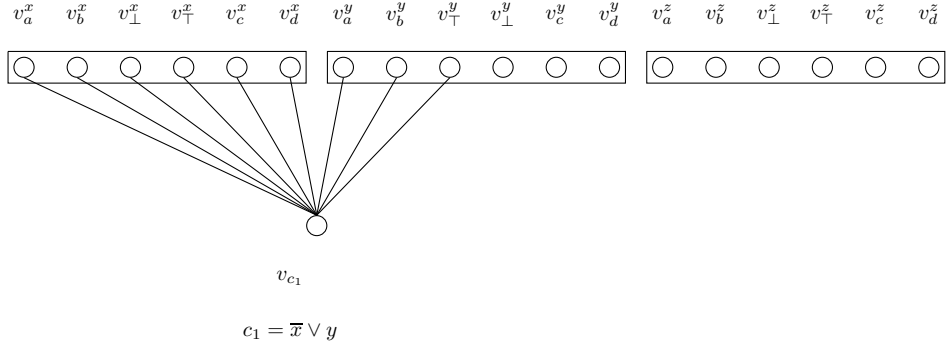
\begin{figure}[t]
  \centering
  \begin{tikzpicture}[every node/.style= {circle,draw,scale=.8}, scale=.7]
    
    \node (xa) at ( 0.0,  3.0) {};
    \node (xb) at ( 1.0,  3.0) {};
    \node (xf) at ( 2.0,  3.0) {};
    \node (xt) at ( 3.0,  3.0) {};
    \node (xc) at ( 4.0,  3.0) {};
    \node (xd) at ( 5.0,  3.0) {};
    
    \node[draw=none] (xat) at ( 0.0,  4.0) {$v^x_a$};
    \node[draw=none] (xbt) at ( 1.0,  4.0) {$v^x_b$};
    \node[draw=none] (xft) at ( 2.0,  4.0) {$v^x_\bot$};
    \node[draw=none] (xtt) at ( 3.0,  4.0) {$v^x_\top$};
    \node[draw=none] (xct) at ( 4.0,  4.0) {$v^x_c$};
    \node[draw=none] (xdt) at ( 5.0,  4.0) {$v^x_d$};
    
    \draw (-.3,3.3) -- (5.3,3.3) -- (5.3,2.7) -- (-.3,2.7) -- (-.3,3.3);
    
    \node (ya) at ( 6.0,  3.0) {};
    \node (yb) at ( 7.0,  3.0) {};
    \node (yf) at ( 9.0,  3.0) {}; 
    \node (yt) at ( 8.0,  3.0) {}; 
    \node (yc) at (10.0,  3.0) {};
    \node (yd) at (11.0,  3.0) {};
    
    \node[draw=none] (yat) at ( 6.0,  4.0) {$v^y_a$};
    \node[draw=none] (ybt) at ( 7.0,  4.0) {$v^y_b$};
    \node[draw=none] (yft) at ( 9.0,  4.0) {$v^y_\bot$}; 
    \node[draw=none] (ytt) at ( 8.0,  4.0) {$v^y_\top$}; 
    \node[draw=none] (yct) at (10.0,  4.0) {$v^y_c$};
    \node[draw=none] (ydt) at (11.0,  4.0) {$v^y_d$};
    
    \draw (5.7,3.3) -- (11.3,3.3) -- (11.3,2.7) -- (5.7,2.7) -- (5.7,3.3);
    
    \node (za) at (12.0,  3.0) {};
    \node (zb) at (13.0,  3.0) {};
    \node (zf) at (14.0,  3.0) {};
    \node (zt) at (15.0,  3.0) {};
    \node (zc) at (16.0,  3.0) {};
    \node (zd) at (17.0,  3.0) {};
    
    \node[draw=none] (zat) at (12.0,  4.0) {$v^z_a$};
    \node[draw=none] (zbt) at (13.0,  4.0) {$v^z_b$};
    \node[draw=none] (zft) at (14.0,  4.0) {$v^z_\bot$};
    \node[draw=none] (ztt) at (15.0,  4.0) {$v^z_\top$};
    \node[draw=none] (zct) at (16.0,  4.0) {$v^z_c$};
    \node[draw=none] (zdt) at (17.0,  4.0) {$v^z_d$};
    
    \draw (11.7,3.3) -- (17.3,3.3) -- (17.3,2.7) -- (11.7,2.7) -- (11.7,3.3);

    \node (c1) at ( 5.5,  0.0) {};
    \node[draw=none] (c1t) at ( 5.0,  -1) {$v_{c_1}$};
    \node[draw=none] (c1c) at ( 5.0, -2) {$c_1 = \overline x \lor y$};

    \draw (c1) to[in=-90, out=90, looseness=0] (xb);
    \draw (c1) to[in=-90, out=90, looseness=0] (xf);
    \draw (c1) to[in=-90, out=90, looseness=0] (xd);
    
    \draw (c1) to[in=-90, out=90, looseness=0] (xa); 
    \draw (c1) to[in=-90, out=90, looseness=0] (xt); 
    \draw (c1) to[in=-90, out=90, looseness=0] (xc); 

    \draw (c1) to[in=-90, out=90, looseness=0] (ya); 
    \draw (c1) to[in=-90, out=90, looseness=0] (yb);
    \draw (c1) to[in=-90, out=90, looseness=0] (yt);
    
    
  \end{tikzpicture}
  \caption{The edited version when~$y$ satisfies~$c_1$.  We have added three edges
    to the gadget~$x$ and deleted three edges to the gadget~$z$, and added the
    edge to~$v^y_a$ and deleted the edge to~$v^y_d$, that is, we have edited
    exactly $3 \cdot 2 + 2 = 3(|\V|-1) + 2 = 3|\V| - 1$ edges incident to~$c_1$.
    Notice that if~$v^y_\bot$ was coming before~$v^y_\top$, we would have to
    choose a different variable to satisfy~$c_1$.}
  \label{fig:hardness-gadget-edited}
\end{figure}

\begin{lemma}
  \label{lem:reduc-correct}
  A 3-CNF-SAT formula~$\phi$ is satisfiable if and only if~$(G_\phi,k_\phi)$ is a yes
  instance to \pname{Threshold Editing}.
\end{lemma}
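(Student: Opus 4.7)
The plan is to prove both directions of the equivalence, with the reverse direction essentially assembling Lemmas~\ref{lem:budget-lower} and~\ref{lem:save-satisfy}, and the forward direction requiring an explicit construction of the editing set.

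For the forward direction, given a satisfying assignment $\alpha \colon \V_\phi \to \{\mathtt{true},\mathtt{false}\}$, I would first fix a total order extending $\order$ by placing $v^x_\top$ before $v^x_\bot$ whenever $\alpha(x) = \mathtt{true}$ and $v^x_\bot$ before $v^x_\top$ otherwise. Next, for each clause $c$, I would select a literal $\ell$ on some variable $x$ that satisfies $c$ under $\alpha$ and choose the ``cut-off point'' of $v_c$ to lie between $v^x_\top$ and $v^x_\bot$ in the total order. The editing set $F$ would then consist of exactly the edits needed to make every clause vertex $v_c$ adjacent to precisely the initial segment of the clique defined by its cut-off. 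Using the count implicit in Lemma~\ref{lem:budget-lower} and Figure~\ref{fig:hardness-gadget-edited}---three edits per variable gadget lying fully on either side of the cut-off, and exactly two edits inside the chosen variable's gadget---yields a total of $|\C_\phi|(3|\V_\phi|-1) = k_\phi$ edits.

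To certify that $G_\phi \triangle F$ is a threshold graph I would apply Proposition~\ref{prop:threshold-decomposition}: Observation~\ref{obs:g-is-split} preserves the clique-maximizing split partition, and by construction each clause vertex's new clique-neighborhood is an initial segment of a single fixed total order of the clique vertices; such neighborhoods are pairwise nested, giving a threshold partition. For the reverse direction, I would take any $F$ with $|F| \leq k_\phi$. Observation~\ref{obs:g-is-split} preserves the split partition, and the $k_\phi+1$-vertex enforcement gadgets force the threshold partition of $G_\phi \triangle F$ to refine $\order$. Summing Lemma~\ref{lem:budget-lower} over all clause vertices yields $\sum_c |F(v_c)| \geq k_\phi$, so each $v_c$ is charged exactly $3|\V_\phi|-1$ edits and no budget is spent elsewhere. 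Defining $\alpha(x) = \mathtt{true}$ iff $v^x_\top$ precedes $v^x_\bot$ in the resulting threshold partition and applying Lemma~\ref{lem:save-satisfy} to each clause, I would conclude that the $\lexlt$-highest clique neighbor of $v_c$ realises a literal satisfying $c$ under this $\alpha$.

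The main obstacle I anticipate is consistency of the extracted assignment: different clauses might ``prefer'' conflicting orderings of $v^x_\top$ and $v^x_\bot$ for the same $x$, yet the construction must yield a single $\alpha$ simultaneously satisfying every clause. This is resolved by observing that the threshold partition imposes one global total order on the clique vertices, so $\alpha$ is read off unambiguously and every clause is witnessed against the same assignment; I would have to state this carefully and verify that the witness literal delivered by Lemma~\ref{lem:save-satisfy} agrees with the globally defined $\alpha$, which amounts to re-examining whether the satisfying variable in Lemma~\ref{lem:save-satisfy}'s case~\textit{(i)} is the one whose $\top$/$\bot$ vertex sits immediately before the cut-off.
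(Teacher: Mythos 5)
Your proposal follows essentially the same route as the paper's proof: in the forward direction you build the same editing set (cut-off placed at the satisfying literal's vertex, costing $3(|\V_\phi|-1)+2$ per clause) and certify thresholdness via nested prefix neighborhoods, and in the reverse direction you combine Observation~\ref{obs:g-is-split}, the budget count of Lemma~\ref{lem:budget-lower}, and Lemma~\ref{lem:save-satisfy}, reading the assignment off the single global ordering exactly as the paper does. The consistency worry you raise is resolved precisely as you suggest, so the argument is correct and matches the paper's.
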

\begin{proof}[Proof of Lemma~\ref{lem:reduc-correct}]
  For the forwards direction, let~$\phi$ be a satisfiable 3-CNF-SAT formula where
  $\alpha \colon \V_\phi \to \{\mathtt{true},\mathtt{false}\}$ is any satisfying
  assignment, and~$(G_\phi,k_\phi)$ the \pname{Threshold Editing} instance as
  described above.


  Now, let $\alpha \colon \V_\phi \to \{\mathtt{true},\mathtt{false}\}$ be a satisfying
  assignment, and~$(G_\phi,k_\phi)$ the \pname{Threshold Editing} instance as
  described above, and let~$\pi$ be any permutation of the vertices of the
  clique side with the following properties
  \begin{itemize}
  \item for every $x \lexlt y \in \V_\phi$, we have $v^x_\star <_\pi v^y_\star$,
  \item for every $x \in \V_\phi$, we have
    $v^x_a <_\pi v^x_b <_\pi v^x_\top < v^x_c <_\pi v^x_d$ and
    $v^x_a <_\pi v^x_b <_\pi v^x_\bot < v^x_c <_\pi v^x_d$, and finally
  \item for every $x \in \V_\phi$, we have $v^x_\bot <_\pi v^x_\top$ if and only if $\alpha(x) =
    \mathtt{false}$.
  \end{itemize}
  We now show how to construct the threshold graph~$H_\phi^\pi$ from the constructed
  graph~$G_\phi$ by editing exactly~$k_\phi = |\C| \cdot (3 |\V_\phi| - 1)$
  edges.  For a clause~$c$, let~$x$ be any variable satisfying~$c$.  If~$x$
  appears positively, add every non-existing edge from~$v_c$ to every vertex~$v
  \leq_\pi v^x_\top$ and delete all the rest.  If~$x$ appears negated,
  use~$v^x_\bot$ instead.  We break the remainder of the proof in the forward
  direction into two claims:
  
  \begin{claim}
    \label{claim:h-is-threshold}
    $H^\pi_\phi$ is a threshold graph.
  \end{claim}
   
  \begin{proof}[Proof of Claim~\ref{claim:h-is-threshold}]
    Let $G_\phi$ and $\pi$ be given, both adhering to the above construction.  Since
    $G_\phi$ was a split graph, $\pi$ a total ordering of the elements in the
    independent set part and every vertex of the clique part of $H^\pi_\phi$
    sees a prefix of the vertices of the independent set, their neighborhoods
    are naturally nested.  Hence $H^\pi_\phi$ is a threshold graph by
    Proposition~\ref{prop:threshold-nested-neighborhoods}.
  \end{proof}
  
   \begin{claim}
    \label{claim:k-edits}
    $\left| E(G_\phi) \triangle E(H^\pi_\phi) \right| = k_\phi$.
  \end{claim}
 
  \begin{proof}[Proof of Claim~\ref{claim:k-edits}]
    Since we did not edit any of the edges within the clique part nor the
    independent set part, we only need to count the number of edits going
    between a clause vertex and the variable vertices.  Let $c$ be any clause
    and $x$ the lexicographically smallest variable satisfying $c$.  Suppose
    furthermore, without loss of generality, that $x$ appears positively in $c$
    and has thus $\alpha(x) = \mathtt{true}$.  We now show that $|F(v_c)| =
    3|\V_\phi|-1$, and since $c$ was arbitrary, this concludes the proof of the
    claim.  Since $v_c$ is adjacent to exactly three vertices per variable, and
    non-adjacent to exactly three vertices per variable, we added all the edges
    to the vertices appearing before $x$ and removed all the edges to the
    vertices appearing after $x$.  This cost exactly $3 (|\V_\phi|-1) =
    3|\V_\phi|-3$, hence we have two edges left in our budget for $c$.
    Moreover, the edge $v_cv^x_a$ was added and the edge $v_cv^x_d$ was deleted.
    Now, $c$ is adjacent to every vertex to the before, and including, $x$, and
    non-adjacent to all the vertices after $x$.  The budget used was $3
    (|\V_\phi|-1) + 2 = 3|\V_\phi|-1$.  Hence, the total number of edges edited
    to obtain $H^\pi_\phi$ is $\sum_{c \in \mathcal{C}} 3|\V_\phi|-1 = |\C|
    \cdot (3 |\V_\phi| - 1) = k_\phi$.
  \end{proof}
  
  This shows that if $\phi$ is satisfiable, then $(G_\phi,k_\phi)$ is a yes-instance of
  \pname{Threshold Editing}.

  \bigskip
  
  In the reverse direction, let~$(G_\phi, k_\phi)$ be a constructed instance from a
  given 3-CNF-SAT formula~$\phi$ and let~$F$ be a minimal editing set such
  that~$G_\phi \triangle F$ is a threshold graph and~$|F| \leq k_\phi$.  We aim
  to construct a satisfying assignment $\alpha \colon \V_\phi \to
  \{\mathtt{true},\mathtt{false}\}$ from~$G_\phi \triangle F$.  By
  Observation~\ref{obs:g-is-split}, $H = G_\phi \triangle F$ has the same split
  partition as~$G_\phi$.  By construction, we have enforced
  the ordering, $\order$, of each of the vertices corresponding to the
  variables.  Thus, we know exactly how~$H$ looks, with the exception of the
  internal ordering of each literal and its negation.  Construct the
  assignment~$\alpha$ as described above, i.e., $\alpha(x) = \mathtt{false}$ if
  and only if~$v^x_\bot <_\pi v^x_\top$.
  
  By Lemmata~\ref{lem:budget-lower} and~\ref{lem:save-satisfy}, it follows
  directly that~$\alpha$ is a satisfying assignment for~$\phi$ which concludes
  the proof of the main lemma.
\end{proof}

\bigskip

The above lemma shows that there is a polynomial time many-one (Karp) reduction
from \pname{3Sat} to \pname{Threshold Editing} so we may wrap up the main
theorem of this section.
Lemma~\ref{lem:reduc-correct} implies Theorem~\ref{thm:te-np-complete}, that
\TE{} is \NP-complete, even on split graphs.

For the sake of the next section, devoted to the proof of
Theorem~\ref{thm:ce-np-complete}, we define the following annotated version of
editing to threshold graphs.  In this problem, we are given a split graph and we
are asked to edit the graph to a threshold graph while respecting the split
partition.

\defproblem{Split Threshold Editing}
{A split graph $G=(V,E)$ with split partition $(C,I)$, and an integer $k$.}
{Is there an editing set $F \subseteq C \times I$ of size at most $k$ such that $G \triangle
  F$ is a threshold graph?}

\begin{corollary}
  \label{cor:split-te-np-hard}
  \STE{} is \NP-complete.
\end{corollary}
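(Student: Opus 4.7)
Membership in \NP{} is immediate: given a candidate editing set $F \subseteq C \times I$ of size at most $k$, we can verify in polynomial time that $G \triangle F$ is threshold (for instance using Proposition~\ref{prop:c4p42k2-free}).

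For hardness, the plan is to reuse the reduction of Lemma~\ref{lem:reduc-correct} verbatim: given a 3-CNF-SAT formula $\phi$, we output exactly the same pair $(G_\phi,k_\phi)$, now paired with the canonical split partition $(C,I)$ of $G_\phi$ where $C$ consists of the $6|\V_\phi|$ variable-gadget vertices together with the $k_\phi+1$ clique padding vertices, and $I$ consists of all remaining vertices (clause vertices, ordering-enforcement vertices, and the independent-side padding). By construction this is indeed a split partition of $G_\phi$, so the reduction produces a valid \STE{} instance.

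What remains is to argue that $\phi$ is satisfiable iff $(G_\phi,(C,I),k_\phi)$ is a yes-instance of \STE{}. For the forward direction, I would simply observe that the editing set $F$ exhibited in the proof of Claim~\ref{claim:k-edits} only flips edges between clause vertices (which lie in $I$) and variable-gadget vertices (which lie in $C$); hence $F \subseteq C \times I$ and it is already a valid \STE{} solution. For the reverse direction, any \STE{} solution is in particular a \TE{} solution of size at most $k_\phi$, so by Lemma~\ref{lem:reduc-correct} the formula $\phi$ is satisfiable.

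There is no real obstacle here: the only thing worth double-checking is that the split partition $(C,I)$ we hand over is in fact the one implicitly used throughout Section~\ref{sec:proof-correctness}, which is guaranteed by Observation~\ref{obs:g-is-split} together with the padding of $k_\phi+1$ isolated/universal vertices added on each side to prevent any vertex from switching partition classes at cost below the budget.
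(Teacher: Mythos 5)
Your proposal is correct and follows essentially the same route as the paper, which proves the corollary by combining Lemma~\ref{lem:reduc-correct} with Observation~\ref{obs:g-is-split}: the forward direction works because the solution built in Claim~\ref{claim:k-edits} only edits clause--variable pairs and hence lies in $C \times I$, and the reverse direction is immediate since any \STE{} solution is a \TE{} solution. (Your count of clique-side padding vertices is off by a factor of two---there are $2(k_\phi+1)$ of them---but this is immaterial to the argument.)
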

\begin{proof}
  \STE{} is clearly in \NP{} and that the problem is \NP-complete follows
  immediately from combining Lemma~\ref{lem:reduc-correct} with
  Observation~\ref{obs:g-is-split}.
\end{proof}

\begin{corollary}
  \label{cor:te-ste-eth-hard}
  Assuming ETH, neither \TE{} nor \STE{} are solvable in $2^{o(\sqrt k)} \cdot
  \poly(n)$ time.
\end{corollary}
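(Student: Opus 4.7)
The plan is to read off the ETH lower bound directly from the parameter blow-up in the reduction of Lemma~\ref{lem:reduc-correct}. By the standard Impagliazzo--Paturi--Zane sparsification lemma, ETH implies that \pname{3Sat} cannot be solved in $2^{o(n+m)}$ time, where $n=|\V_\phi|$ and $m=|\C_\phi|$; in particular, after applying sparsification we may restrict to instances with $m = O(n)$, so that an algorithm running in $2^{o(n)}\cdot\poly(n+m)$ time on these sparse instances would refute ETH.

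Next I would invoke the reduction of Section~\ref{sec:proof-correctness}, which maps a 3-CNF-SAT formula $\phi$ to the \TE{} instance $(G_\phi, k_\phi)$ with $k_\phi = |\C_\phi|\cdot(3|\V_\phi|-1)$. In the sparsified regime $|\C_\phi| = O(|\V_\phi|)$, so
\[
k_\phi \;=\; O(|\V_\phi|^2), \qquad \text{and hence} \qquad \sqrt{k_\phi} \;=\; O(|\V_\phi|).
\]
The total vertex count of $G_\phi$ is bounded by $O(|\V_\phi|+|\C_\phi|+k_\phi^2)$, which is polynomial in $|\V_\phi|$, so $\poly(|V(G_\phi)|) = \poly(|\V_\phi|)$.

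Now suppose for contradiction that \TE{} admitted an algorithm with running time $2^{o(\sqrt k)}\cdot\poly(n)$. Composing it with the reduction would decide $\phi$ in time
\[
2^{o(\sqrt{k_\phi})}\cdot\poly(|V(G_\phi)|) \;=\; 2^{o(|\V_\phi|)}\cdot\poly(|\V_\phi|),
\]
contradicting ETH via the sparsification step above. The same argument applies verbatim to \STE{}, because by Observation~\ref{obs:g-is-split} the graph $G_\phi$ is split and any sufficiently small editing set preserves its clique-maximizing split partition, so the reduction of Lemma~\ref{lem:reduc-correct} is in fact a reduction to \STE{} as well (this is exactly what Corollary~\ref{cor:split-te-np-hard} records).

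The only subtle point---and the main thing to get right---is the quadratic relationship between $k_\phi$ and the size of $\phi$: the reduction is quadratic in a parameter-friendly way, which is precisely what matches the $2^{O(\sqrt k\log k)}$ upper bound of Theorem~\ref{thm:te-subept} up to the logarithmic factor. No additional combinatorial work beyond checking these parameter bounds is required.
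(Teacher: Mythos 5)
Your proposal is correct and is exactly the argument the paper intends: the corollary is stated without an explicit proof, but it follows (as the paper's conclusion also remarks about the quadratic blow-up $k_\phi = \Theta(|\phi|^2)$) from composing the reduction of Lemma~\ref{lem:reduc-correct}, with $k_\phi = |\C_\phi|(3|\V_\phi|-1)$, with ETH plus sparsification, and the transfer to \STE{} via Observation~\ref{obs:g-is-split} and Corollary~\ref{cor:split-te-np-hard} is handled as you describe. No gaps; your identification of sparsification and the quadratic parameter blow-up as the essential points matches the paper's reasoning.
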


%
%
%
%
\subsection{NP-hardness of Chain Editing and Chordal Editing}
\label{sec:np-hardness-chain-and-chordal}

%
%
%
%
\subsubsection{Chain Graphs: Proof of Theorem~\ref{thm:ce-np-complete}}
\label{sec:np-hardness-chain}

A bipartite graph~$G = (A,B,E)$ is a \emph{chain graph} if the neighborhoods
of~$A$ are nested (which necessarily implies the neighborhoods of~$B$ are nested
as well).  Recalling Proposition~\ref{prop:chain-defs}, chain graphs are closely
related to threshold graphs; Given a bipartite graph~$G = (A,B,E)$, if one
replaces~$A$ (or~$B$) by a clique, the resulting graph is a threshold graph if
and only if~$G$ was a chain graph.

It immediately follows from the above exposition that the following problem is
\NP-complete.  This problem has also been referred to as \pname{Chain Editing}
in the literature (for instance in the work by Guo~\cite{guo2007problem}).

\defproblem{Bipartite Chain Editing}
{A bipartite graph $G = (A,B,E)$ and an integer $k$}
{Does there exist a set $F \subseteq A \times B$ of size at most~$k$ such that $G \triangle
  F$ is a chain graph?}

Observe that we in this problem are given a bipartite graph together with a
bipartition, and we are asked to respect the bipartition in the editing set.

\begin{corollary}
  The problem \pname{Bipartite Chain Editing} is \NP-complete.
\end{corollary}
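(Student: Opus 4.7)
The plan is to give a straightforward many-one reduction from \STE{}, which Corollary~\ref{cor:split-te-np-hard} has just established to be \NP-complete. Membership in \NP{} is immediate: given a candidate editing set $F \subseteq A \times B$, we can verify in polynomial time that $G \triangle F$ is a chain graph, for example by checking $2K_2$-freeness (Proposition~\ref{prop:chain-defs}) or by checking that the neighborhoods of one side form an inclusion chain.

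For hardness, given an instance $(G,(C,I),k)$ of \STE{}, I would construct the bipartite instance $(G',k)$ by letting $G' = (C, I, E(G) \cap (C \times I))$; that is, I simply drop all clique edges inside $C$ and keep the bipartite edges between $C$ and $I$, with bipartition $(A,B) = (C,I)$. The construction is trivially polynomial. Note that in \STE{} the editing set is already constrained to $F \subseteq C \times I$, which matches exactly the edit set allowed in \pname{Bipartite Chain Editing}.

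For correctness, I would use the last characterization in Proposition~\ref{prop:chain-defs}: a bipartite graph is a chain graph if and only if it can be turned into a threshold graph by making one side into a clique. Applied here, $G' \triangle F$ is a chain graph if and only if turning $C$ into a clique in $G' \triangle F$ yields a threshold graph. But because $F$ only touches pairs in $C \times I$, this ``cliqued'' graph is precisely $G \triangle F$. Hence $F$ is a solution to the \STE{} instance if and only if it is a solution to the \pname{Bipartite Chain Editing} instance, and the size bound $|F| \leq k$ is preserved verbatim. Combined with Corollary~\ref{cor:split-te-np-hard}, this establishes \NP-completeness.

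There is essentially no obstacle here; the only subtle point is that \STE{} is carefully defined to restrict edits to $C \times I$, which is exactly what makes the one-line reduction work. If instead one reduced from unrestricted \TE{}, one would need to argue that the split partition is preserved under editing, which is precisely what Observation~\ref{obs:g-is-split} already guarantees for the constructed instances, so even that route would go through with little extra work.
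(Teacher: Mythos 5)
Your proposal is correct and is essentially the paper's own proof: the same reduction from \STE{} that deletes the edges inside $C$ and keeps the bipartition $(C,I)$, with correctness via the characterization in Proposition~\ref{prop:chain-defs} that cliquing one side of a chain graph gives a threshold graph. Your added observation that $F \subseteq C \times I$ makes the cliqued version of $G' \triangle F$ equal to $G \triangle F$ is exactly the (implicit) heart of the paper's one-line equivalence.
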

\begin{proof}
  We reduce from \STE{}.  Recall that to this problem, we are given a split
  graph~$G=(V,E)$ with split partition~$(C,I)$, and an integer~$k$, and asked
  whether there
  is an editing set $F \subseteq C \times I$ of size at most~$k$ such that $G \triangle F$ is
  a threshold graph.
  Since a chain graph is a threshold graph with the edges in the clique
  partition removed (Proposition~\ref{prop:chain-defs}), it follows that~$G
  \triangle F$ with all the edges in the clique partition removed is a chain
  graph.
  
  Let~$(G,k)$ be the input to \STE{} and let~$(C,I)$ be the split partition.
  Remove all the edges in~$C$ to obtain a bipartite graph $G' = (A,B,E')$.  Now
  it follows directly from Proposition~\ref{prop:chain-defs} that~$(G,k)$ is a
  yes instance to \STE{} if and only if~$(G',k)$ is a yes instance to \BCE.
\end{proof}

\defproblem{Chain Editing}
{A graph $G = (V,E)$ and a non-negative integer $k$}
{Is there a set $F$ of size at most~$k$ such that $G \triangle F$ is a chain
  graph?}

We now aim to prove Theorem~\ref{thm:ce-np-complete}, that \pname{Chain Editing}
is \cclass{NP}-complete.

\begin{proof}[Proof of Theorem~\ref{thm:ce-np-complete}]
  Reduction from \BCE.  Let $G = (A,B,E)$ be a bipartite graph and consider the
  input instance~$(G,k)$ to \BCE.  We now show that adding~$2(k+1)$ new edges
  to~$G$ to obtain a graph~$G' = (V,E')$, gives us that~$(G',k)$ is a yes
  instance for \pname{Chain Editing} if and only if~$(G,k)$ is a yes instance
  for \BCE.
  
  Let $G = (A,B,E)$ be a bipartite graph and~$k$ a positive integer.  Add~$k+1$
  new vertices $a_1, \cdots a_{k+1}$ to~$A$ and make them universal to~$B$, and
  add~$k+1$ new vertices $b_1, \cdots b_{k+1}$ to~$B$ and make them universal
  to~$A$.  Call the resulting graph~$G' = (V,E')$.
  
  The following claim follows immediately from the construction.
  \begin{claim}
    If $G' \triangle F$ is a chain graph with $|F| \leq k$, then $G' \triangle F$
    has bipartition $(A \cup \{a_1, \dots, a_{k+1}\}, B \cup \{b_1, \dots, b_{k+1}\})$.
  \end{claim}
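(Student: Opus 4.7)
The plan is to show that the $k+1$ twin vertices on each newly-added side act as anchors that any chain-graph bipartition of $G' \triangle F$ must respect. First, I would note that $G'$ is bipartite with bipartition $(A \cup \{a_1,\ldots,a_{k+1}\}, B \cup \{b_1,\ldots,b_{k+1}\})$, and in particular the bipartite subgraph $G'[\{a_i\} \cup \{b_j\}]$ is the complete bipartite graph $K_{k+1,k+1}$. Since $G' \triangle F$ is a chain graph (hence bipartite and connected to the extent we need), it admits a bipartition $(X,Y)$, and my task is to show that $(X,Y)$ coincides with the claimed partition.

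The central step is to argue that all $a_i$'s end up on one side of $(X,Y)$ and all $b_j$'s on the other. Here I plan a pigeonhole-style counting argument: if, say, $a_1 \in X$ and $a_2 \in Y$, and $p$ of the $b_j$'s lie in $Y$ while the remaining $q = (k+1)-p$ lie in $X$, then each edge $a_1 b_j$ with $b_j \in X$ must be in $F$ (monochromatic edges are forbidden) and each edge $a_2 b_j$ with $b_j \in Y$ must be in $F$, giving $p + q = k+1$ required deletions, contradicting $|F|\le k$. Hence all $a_i$'s share a side; similarly all $b_j$'s share a side, and because at least one $a_i b_j$ edge survives (only $k < (k+1)^2$ of them are deletable), these two sides are different. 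Relabel so that $\{a_i\} \subseteq X$ and $\{b_j\} \subseteq Y$.

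Finally, I would place the original vertices. For any $v \in A$, the $k+1$ edges from $v$ to $\{b_1,\ldots,b_{k+1}\}$ in $G'$ cannot all lie in $F$, so some $v b_j$ survives and forces $v \in X$. Symmetrically, every $u \in B$ has some $u a_i$ surviving and ends up in $Y$. Combining this with the previous paragraph yields $X = A \cup \{a_1,\ldots,a_{k+1}\}$ and $Y = B \cup \{b_1,\ldots,b_{k+1}\}$, which is precisely the claim.

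The only real obstacle is the pigeonhole step separating the $a_i$'s from the $b_j$'s; everything else is a direct degree/pigeonhole observation once the twin sides are pinned down. I expect the whole argument to be a short paragraph in the final write-up, since the $k+1$ multiplicities were chosen exactly to make these counts work.
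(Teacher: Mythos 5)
Your proof is correct and is exactly the argument the paper has in mind: the paper simply asserts the claim ``follows immediately from the construction,'' and your counting with the $k+1$ universal vertices on each side (at most $k$ edits cannot separate the anchor vertices, and every original vertex retains an edge to the anchors) is the natural elaboration of that intended reasoning. No gaps.
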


  It follows that for any input instance~$(G,k)$ to \BCE, the instance~$(G',k)$
  as constructed above is a yes instance for \CE{} if and only if~$(G,k)$ is a
  yes instance for \BCE.
\end{proof}

\begin{corollary}
  Assuming ETH, there is no algorithm solving neither \CE{} nor \BCE{} in time
  $2^{o(\sqrt k)} \cdot \poly(n)$.
\end{corollary}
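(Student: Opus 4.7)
The plan is to chain together the polynomial-time reductions already established in this section and appeal directly to Corollary~\ref{cor:te-ste-eth-hard}. Concretely, we start from an instance $(G,k)$ of \STE{}, apply the reduction from the previous subsection to obtain an equivalent instance $(G',k)$ of \BCE{} (with $|V(G')|=|V(G)|$), and then apply the reduction from the proof of Theorem~\ref{thm:ce-np-complete} to obtain an equivalent \CE{} instance $(G'',k)$ with $|V(G'')|=|V(G)|+2(k+1)$. Crucially, both transformations leave the parameter $k$ unchanged and blow up the vertex count only polynomially in $n$ and $k$ (hence polynomially in the original input size).

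The hypothetical $2^{o(\sqrt{k})}\cdot \poly(n)$ algorithm for \CE{} (respectively \BCE{}) can then be composed with the appropriate reduction(s), giving an algorithm for \STE{} running in time $2^{o(\sqrt{k})}\cdot \poly(n)$. This directly contradicts Corollary~\ref{cor:te-ste-eth-hard}. Since there are essentially no moving parts beyond confirming that $k$ is preserved exactly and the graph size stays polynomially bounded along each reduction, the argument is immediate; the only thing to double-check is that the additional $2(k+1)$ universal vertices introduced in the \BCE{} to \CE{} reduction indeed contribute only a $\poly(n,k)$ overhead, which they clearly do.
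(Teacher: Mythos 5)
Your proposal is correct and follows essentially the same route as the paper: both arguments observe that the reductions from \STE{} to \BCE{} and then to \CE{} are polynomial-time and leave the parameter $k$ unchanged, so a $2^{o(\sqrt k)}\cdot\poly(n)$ algorithm for either problem would transfer to \STE{} and contradict Corollary~\ref{cor:te-ste-eth-hard}. Your explicit check that the graph size grows only polynomially is a harmless elaboration of what the paper leaves implicit.
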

\begin{proof}
  In both these cases we reduced from \STE{} without changing the parameter~$k$.
  Hence this follows immediately from the above exposition and from
  Corollary~\ref{cor:te-ste-eth-hard}.
\end{proof}

%
%
%
%
\subsubsection{Chordal Graphs}
We will now combine our previous result on \pname{Chain Editing} with the
following observation of Yannakakis to prove Theorem~\ref{thm:chordal-edit-npc}.
Yannakakis showed~\cite{yannakakis1981computing}, while proving the
\NP-completeness of \pname{Chordal Completion} (more often known as
\pname{Minimum Fill-In}~\cite{fomin2012subexponential}), that a bipartite graph
can be transformed into a chain graph by adding at most~$k$ edges if and only if
the cobipartite graph formed by completing the two sides can be transformed into
a chordal graph by adding at most~$k$ edges.

\begin{theorem}
  \label{thm:chordal-edit-npc}
  \pname{Chordal Editing} is \NP-hard.
\end{theorem}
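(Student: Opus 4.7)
The plan is to reduce from \BCE{}, established as \NP-complete in Section~\ref{sec:np-hardness-chain}, by adapting Yannakakis's correspondence between chain graphs and chordal cobipartite graphs. Given an instance $(G,k)$ of \BCE{} with $G = (A,B,E)$, I would construct the cobipartite graph $G^{\circ}$ on vertex set $A \cup B$ by adding every pair inside $A$ and every pair inside $B$ as an edge, so that $A$ and $B$ become cliques, while leaving the cross-edges $E$ untouched.

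The structural backbone of the reduction is the following easy claim, which I would establish as a short lemma: a cobipartite graph whose two parts are cliques is chordal if and only if the bipartite graph between the two parts is a chain graph. Indeed, any induced cycle of length at least $4$ in such a cobipartite graph can contain at most two vertices on each side (a third vertex on either side forces a chord from the clique), so every induced obstruction is an induced $C_4$ with two vertices on each side. These $C_4$s correspond bijectively to induced $2K_2$s in the bipartite part, and invoking Proposition~\ref{prop:chain-defs} completes the characterization.

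Using this, the forward direction is immediate: any chain editing set $F \subseteq A \times B$ of size at most $k$ for $G$ is also a chordal editing set of the same size for $G^{\circ}$, since $F$ only touches cross-edges and the structural claim certifies chordality. For the reverse direction I need to convert any chordal editing set $F^{\ast}$ for $G^{\circ}$ of size at most $k$ into a chain editing set for $G$ of equal or smaller size that is supported inside $A \times B$. To prevent $F^{\ast}$ from deleting intra-clique edges in order to destroy induced $C_4$s, I would augment $G^{\circ}$ with $k+1$ vertex-disjoint witness gadgets for each pair $\{a_1,a_2\} \subseteq A$: for every $i \in [k+1]$, add a pair $(c^{(i)}, d^{(i)})$ with $c^{(i)}$ placed in the $A$-clique, $d^{(i)}$ in the $B$-clique, both joined by cross-edges to $a_1$ and $a_2$, and with $c^{(i)} d^{(i)}$ a non-edge, and symmetrically for pairs inside $B$. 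Deleting $a_1 a_2$ then instantly creates $k+1$ vertex-disjoint induced $C_4$s of the form $a_1{-}c^{(i)}{-}a_2{-}d^{(i)}{-}a_1$, each of which requires at least one private edit to destroy, blowing the budget.

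The main obstacle is orchestrating the gadgets so that (i) they successfully forbid intra-clique deletions without themselves introducing new chain obstructions in the bipartite part, and (ii) every chain editing of $G$ extends to a chordal editing of the augmented graph of the same size, so that the parameter is preserved. Once these routine checks go through, the intersection of an optimal chordal editing $F^{\ast}$ with $A \times B$ is a chain editing of $G$ of size at most $k$, which together with the forward direction yields the claimed polynomial-time many-one reduction and hence \NP-hardness of \pname{Chordal Editing}.
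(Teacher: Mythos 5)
Your first half---completing $A$ and $B$ and proving that the resulting cobipartite graph is chordal if and only if the cross bipartite graph is a chain graph---is exactly the paper's Lemma~\ref{lem:chor-eq-cobip} (the Yannakakis correspondence), and that part is sound. The genuine gap is in the gadget meant to protect intra-clique edges. You place the witnesses $d^{(i)}$ for a pair $\{a_1,a_2\}\subseteq A$ inside the $B$-clique with cross-neighborhood exactly $\{a_1,a_2\}$. But then witnesses for two crossing pairs already form an obstruction before any editing takes place: if $d$ guards $\{a_1,a_2\}$ and $d'$ guards $\{a_1,a_3\}$, then $a_2\,d\,d'\,a_3$ is an induced $C_4$ (the chords $a_2d'$ and $da_3$ are absent, while $dd'$ and $a_2a_3$ are clique edges). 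The same conflict arises between a witness $d$ and any original vertex $b\in B$ whose edited neighborhood is not nested with $\{a_1,a_2\}$, and no single chain ordering can be nested with $\{a_1,a_2\}$ for \emph{every} pair from $A$. Since you add $(k+1)\binom{|A|}{2}$ such witnesses, these obstructions force far more than $k$ edits no matter how the original cross edges are edited, so a yes-instance of \BCE{} maps to a no-instance of \pname{Chordal Editing}. Your condition~(ii), which you defer as a ``routine check,'' is therefore false for this construction, and the reduction does not go through as stated.

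The paper sidesteps this precisely by \emph{not} keeping the witnesses inside the cobipartite structure: for each intra-clique edge it attaches $k+1$ new vertices adjacent only to the two endpoints of that edge. Such a vertex is simplicial, and adding simplicial vertices preserves chordality, so the forward direction incurs no extra cost; deleting a guarded edge still creates $k+1$ induced $C_4$s, each with a private witness, exceeding the budget. (The paper also factors the argument through the intermediate side-respecting problem \CCE{}, which is where your structural lemma lives, before performing this gadget step.) Repairing your proof essentially amounts to replacing your in-clique witnesses by these degree-two simplicial witnesses and then verifying your points (i) and (ii) for that construction.
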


To prove the theorem, we will first give an intermediate problem that makes the
proof simpler.  Let $G=(A,B,E)$ be a cobipartite graph.  Define the problem
\pname{Cobipartite Chordal Editing} to be the problem which on input $(G,k)$
asks if we can edit at most~$k$ edges between~$A$ and~$B$, i.e., does there
exist an editing set~$F \subseteq A \times B$ of size at most~$k$, such that~$G
\triangle F$ is a chordal graph.  That is, \pname{Cobipartite Chordal Editing}
asks for the bipartition~$A,B$ to be respected.

\defproblem{Cobipartite Chordal Editing}
{A cobipartite graph $G = (A,B,E)$ and an integer $k$}
{Does there exist a set $F \subseteq A \times B$ of size at most $k$ such that $G \triangle
  F$ is a chordal graph?}

We will use the following observation to prove the above theorem:
\begin{lemma}
  \label{lem:chor-eq-cobip}
  If $G=(A,B,E)$ is a bipartite graph, and $G' = (A,B,E')$ is the cobipartite
  graph constructed from~$G$ by completing~$A$ and~$B$, then~$F$ is an optimal
  edge editing set for \pname{Bipartite Chain Editing} on input $(G,k)$ if and
  only if~$F$ is an optimal edge editing set for \pname{Cobipartite Chordal
    Editing} on input $(G',k)$.
\end{lemma}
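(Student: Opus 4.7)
The plan is to prove a stronger structural equivalence: for any $F \subseteq A \times B$, the bipartite graph $G \triangle F$ is a chain graph if and only if the cobipartite graph $G' \triangle F$ is chordal. Since $|F|$ is the same in both problems and an editing set $F \subseteq A \times B$ makes sense in both settings, this equivalence immediately yields that the optimal solutions coincide.

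The first step is the observation that any editing set $F \subseteq A \times B$ preserves the basic structure: $G \triangle F$ remains bipartite with parts $(A, B)$, and $G' \triangle F$ remains cobipartite with both $A$ and $B$ cliques (since $F$ touches no edges inside $A$ or inside $B$). Moreover, $G' \triangle F$ is exactly the graph obtained from $G \triangle F$ by completing both sides.

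The technical core is a structural lemma: a cobipartite graph $H$ with parts $(A,B)$ where $H[A]$ and $H[B]$ are cliques is chordal if and only if its bipartite restriction $H^* = (A, B, E(H) \cap (A \times B))$ is $2K_2$-free. The argument is a direct case analysis on an induced cycle $C_\ell$ in $H$ with $\ell \geq 4$. By pigeonhole, any $\ell \geq 5$ forces three vertices into one of $A$ or $B$, producing a triangle and hence a chord, so no such induced $C_\ell$ exists. This leaves induced $C_4$s, which must split $2$-$2$ between $A$ and $B$ (a $3$-$1$ split again gives a chord inside the clique). Writing the cycle in cyclic order, the only non-chorded configuration has the two $A$-vertices consecutive and the two $B$-vertices consecutive; then the four ``cycle'' edges consist of the two intra-clique edges $a_1 a_2$ and $b_1 b_2$ together with two bipartite edges $a_1 b_1, a_2 b_2$, and the two absent chords are exactly the bipartite non-edges $a_1 b_2, a_2 b_1$. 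This is precisely an induced $2K_2$ in $H^*$, and the correspondence is reversible.

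Applying this to $H = G' \triangle F$ and $H^* = G \triangle F$, we conclude that $G' \triangle F$ is chordal iff $G \triangle F$ is $2K_2$-free. By Proposition~\ref{prop:chain-defs}, $G \triangle F$ is a chain graph iff it is bipartite and $2K_2$-free, and bipartiteness is guaranteed by the first step. Thus $G \triangle F$ is a chain graph iff $G' \triangle F$ is chordal, and since both problems minimize $|F|$ over the same universe $A \times B$, the sets of optimal editing sets coincide. The only subtlety---and the step where care is warranted---is the $2$-$2$ cyclic-order case analysis for $C_4$, to make sure the cyclic placement of the vertices corresponds to the \emph{diagonal} (not the same-part) pair of bipartite edges forming the $2K_2$; everything else is immediate from the definitions.
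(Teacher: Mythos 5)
Your proposal is correct and follows essentially the same route as the paper: the heart of both arguments is the observation that cobipartite graphs have no induced cycles longer than four, together with the correspondence between induced $C_4$'s in $G' \triangle F$ and induced $2K_2$'s in $G \triangle F$ (the paper states these facts tersely, while you spell out the pigeonhole and the $2$--$2$ case analysis). Packaging it as a feasibility equivalence for every $F \subseteq A \times B$ rather than arguing directly about optimal sets is only a cosmetic difference.
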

\begin{proof}
  Let $F$ be an optimal editing set for \BCE{} on input $(G,k)$ and suppose that
  $G' \triangle F$ has an induced cycle of length at least four.  Since~$G'$ is
  cobipartite, it has a cycle of length exactly four.  Let $a_1b_1b_2a_2a_1$ be
  this cycle.  But then it is clear that $a_1b_1, a_2b_2$ forms an
  induced~$2K_2$ in~$G \triangle F$, contradicting the assumption that~$F$ was
  an editing set.

  For the reverse direction, suppose~$F$ is an optimal edge editing set for
  \CCE{} on input $(G',k)$ only editing edges between~$A$ and~$B$.  Suppose for
  the sake of a contradiction that~$G \triangle F$ was not a chain graph.
  Since~$F$ only goes between $A$ and $B$, $G \triangle F$ is bipartite and
  hence by the assumption must have an induced~$2K_2$.  This obstruction must be
  on the form $a_1b_1,a_2b_2$, but then $a_1b_1b_2a_2a_1$ is an induced~$C_4$
  in~$G' \triangle F$ which is a contradiction to the assumption that~$G'
  \triangle F$ was chordal.  Hence~$G \triangle F$ is a chain graph.
\end{proof}

\begin{corollary}
  \label{cor:cce-np}
  \CCE{} is \NP-complete.
\end{corollary}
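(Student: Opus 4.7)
The plan is to invoke Lemma~\ref{lem:chor-eq-cobip} essentially as a black box, turning it into a polynomial-time many-one reduction from \BCE{} to \CCE{}. Membership in \NP{} is immediate: an editing set $F \subseteq A \times B$ of size at most $k$ is a polynomial-size certificate, and chordality of $G \triangle F$ can be checked in polynomial time (e.g.\ by computing a perfect elimination ordering via lex-BFS).

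For hardness, I would take an arbitrary instance $(G,k)$ of \BCE{} with $G=(A,B,E)$ bipartite, and construct the cobipartite graph $G'=(A,B,E')$ by adding all missing edges inside $A$ and inside $B$. This construction is clearly polynomial. Since the editing set of interest in \CCE{} is restricted to $A \times B$, Lemma~\ref{lem:chor-eq-cobip} tells us that the optimum value of \BCE{} on $(G,k)$ equals the optimum value of \CCE{} on $(G',k)$; in particular $(G,k)$ is a yes-instance of \BCE{} if and only if $(G',k)$ is a yes-instance of \CCE{}. Combined with the \NP-hardness of \BCE{} (Theorem~\ref{thm:ce-np-complete} and its preceding corollary for the bipartite version), this yields the \NP-hardness of \CCE{}.

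There is essentially no obstacle here — the entire content of the argument has already been carried out in Lemma~\ref{lem:chor-eq-cobip}, where the nontrivial direction uses the observation that an induced $C_4$ in the cobipartite graph $G' \triangle F$ (the only forbidden cycle that can appear, since $G'$ is cobipartite and $F$ only touches cross-edges) corresponds exactly to an induced $2K_2$ in $G \triangle F$. Thus the corollary is just the packaging of that lemma into the standard \NP-completeness template, noting that the reduction preserves the parameter $k$.
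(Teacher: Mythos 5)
Your proposal is correct and matches the paper's intended argument: the corollary is stated without a separate proof precisely because it is the immediate combination of Lemma~\ref{lem:chor-eq-cobip} (applied to the cobipartite graph obtained by completing $A$ and $B$) with the \NP-completeness of \BCE{}, plus the trivial \NP{} membership check. Your packaging of the lemma into the standard reduction template, preserving the parameter $k$, is exactly what the paper does.
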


We are now ready to prove Theorem~\ref{thm:chordal-edit-npc}.
\begin{proof}[Proof of Theorem~\ref{thm:chordal-edit-npc}]
  Let $(G = (A,B,E),k)$ be a cobipartite graph as input to \CCE.  Our reduction
  is as follows.  Create $G' = (A' \cup B',E')$ as follows:
  \begin{itemize}
  \item $A' = A \cup \{a_1, a_2, \dots, a_{k+1}\}$,
  \item $B' = B \cup \{b_1, b_2, \dots, b_{k+1}\}$,
  \item $E' = E \cup \bigcup_{i \leq k+1, b \in B'} \{a_ib\} \cup \bigcup_{i,j \leq k+1}
    \{a_ia_j, b_ib_j\}$
  \end{itemize}
  Finally, we create $G''$ as follows.  For every edge $a_ia_j$ create $k+1$ new
  vertices adjacent to only $a_i$ and $a_j$.  Do the same thing for every edge
  $b_ib_j$.  This forces none of the edges in $A'$ to be removed and none of the
  edges in $B'$ to be removed.
  \begin{claim}
    The instance of \pname{Chordal Editing} $(G'',k)$ is equivalent to the
    instance $(G,k)$ to \CCE{}.
  \end{claim}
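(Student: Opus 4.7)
The plan is to prove the equivalence by direct reductions in both directions, leveraging the $(k{+}1)$-fold multiplicity of all gadget pieces to force that any editing set for $G''$ can be taken inside $A \times B$.

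For the forward direction, let $F \subseteq A \times B$ satisfy $|F| \leq k$ and $G \triangle F$ be chordal. I would show that the same $F$, applied to $G''$, yields a chordal graph. Since $F$ modifies only edges inside the original $G$, each pendant retains its two neighbors, which remain adjacent in the clique $A'$ or $B'$, so pendants remain simplicial and cannot lie on any induced cycle of length $\geq 4$. The cliques on $A'$ and $B'$ together with the universal edges from each new $a_i$ to $B'$ are unaffected, so any induced long cycle through a new $a_i$ or $b_j$ would acquire a chord. Thus any induced cycle of length $\geq 4$ in $G'' \triangle F$ must be confined to $G$, contradicting chordality of $G \triangle F$.

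For the backward direction, let $F'$ be an editing set of size at most $k$ with $G'' \triangle F'$ chordal. I would cleanse $F'$ in three stages, each relying on twin multiplicity: first, no edge inside $A'$ or $B'$ is in $F'$, since removing any such edge produces $\binom{k+1}{2}$ induced four-cycles through its attached twin pendants, and no $k$-edit triangulation can destroy them all; second, no edge incident to a pendant is in $F'$, by a similar argument (the remaining $k$ twin pendants still witness obstructions); third, no edge between a new $a_i$ and a vertex of $B'$, and symmetrically for the new $b_j$'s, lies in $F'$, because the $k{+}1$ mutually twin new $a_i$'s generate too many induced four-cycles whenever one of them is asymmetrically perturbed while the others are left intact. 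After this cleanup one has $F' \subseteq A \times B$, and since $G \triangle F'$ is the induced subgraph of the chordal graph $G'' \triangle F'$ on vertex set $A \cup B$, it is itself chordal, so $F'$ is a valid CCE solution for $(G, k)$.

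The main obstacle will be the bookkeeping in the third cleanup step: ruling out the possibility that a single clever edit incident to a new gadget vertex simultaneously resolves many independent obstructions, so that the $(k{+}1)$-multiplicity can be amortized against it. Carefully quantifying, for each candidate deviation from $A \times B$, how many induced four-cycles must be created and how many additional edits they would cost to triangulate, and showing that the sum always exceeds the budget $k$, is where the technical heart of the proof lies.
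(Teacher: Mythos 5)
Your forward direction is fine and is essentially the paper's: since $F\subseteq A\times B$, the closed neighborhoods of the new vertices and of the pendants are untouched, they remain simplicial, and adding simplicial vertices preserves chordality.

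The backward direction, however, has a genuine gap. Your stages (2) and (3) assert that no edit of $F'$ is incident to a pendant, or joins a new $a_i$ (resp.\ $b_j$) to $B'$ (resp.\ $A'$). That is simply not true of an arbitrary solution and cannot be proved: if $(G,k)$ admits a solution of size $k-1$, then appending the deletion of a single pendant edge $pa_i$ still gives a chordal graph within budget (the pendant $p$ stays simplicial, and no cycle can have $pa_i$ as a chord since every cycle through $p$ uses both of its edges), so a valid solution for $(G'',k)$ may well touch pendants and new vertices. A ``cleansing'' of $F'$ would require an exchange argument you never supply --- and, more importantly, neither stage is needed. The only fact one must establish (and the only one the paper establishes) is that $F'$ deletes no edge inside $A'$ or $B'$, in particular none inside the original cliques $A$ and $B$; granted that, $G\triangle\bigl(F'\cap(A\times B)\bigr)=(G''\triangle F')[A\cup B]$ is an induced subgraph of a chordal graph and hence chordal, which finishes the direction. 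So the ``technical heart'' you place in stage (3) is a detour. Finally, the justification in stage (1) is not a valid inference as stated: the $\binom{k+1}{2}$ four-cycles pairwise share vertices, and a single edit can destroy up to $k$ of them at once (e.g.\ deleting one pendant's edge to an endpoint), so by pure counting $k$ edits could a priori cover all $\binom{k+1}{2}\le k\cdot k$ cycles. The multiplicity has to be converted into a structured bound, e.g.\ as the paper does by exhibiting $k+1$ four-cycles that meet only in the two endpoints of the deleted edge (pairing pendants with distinct common neighbours), so that each demands its own edit besides the deletion, exceeding the budget; equivalently, an explicit accounting that covering all pendant pairs while keeping the edge deleted costs more than $k$. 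This is exactly the amortization issue you defer to stage (3), but it must be carried out already in stage (1), which is the one step your argument actually relies on.
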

  \begin{proof}[Proof of claim]
    The proof of the above claim is straight-forward.  If we delete an edge
    within $A$ (resp.~$B$), we create at least $k+1$ cycles of length $4$, each
    of which uses at least one edge to delete, hence in any yes instance, we do
    not edit edges within $A$ (resp.~$B$).  Furthermore, any chordal graph
    remains chordal when adding a simplicial vertex, which is exactly what the
    $k+1$ new vertices are.
  \end{proof}
  
  From the claim it follows that $(G'',k)$ is a yes instance to \pname{Chordal
    Editing} if and only if $(G,k)$ is a yes instance to \CCE.  The theorem
  follows immediately from Corollary~\ref{cor:cce-np}.
\end{proof}

\begin{corollary}
  Assuming ETH, there is no algorithm solving \pname{Chordal Editing} in time
  $2^{o(\sqrt k)} \cdot \poly(n)$.
\end{corollary}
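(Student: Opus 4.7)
The plan is to chain together the parameter-preserving reductions already developed in this section in order to transport the ETH lower bound of Corollary~\ref{cor:te-ste-eth-hard} from \STE{} all the way to \pname{Chordal Editing}. The overall strategy is to verify that each link in the chain \STE{} $\to$ \BCE{} $\to$ \CCE{} $\to$ \pname{Chordal Editing} leaves the parameter~$k$ unchanged (and only polynomially enlarges the instance), so that a hypothetical $2^{o(\sqrt{k})} \cdot \poly(n)$ algorithm for \pname{Chordal Editing} would translate into a $2^{o(\sqrt{k})} \cdot \poly(n)$ algorithm for \STE{}, contradicting Corollary~\ref{cor:te-ste-eth-hard}.

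First I would note that the reduction from \STE{} to \BCE{} earlier in this section consists of simply deleting the clique-side edges of the split partition; the parameter~$k$ is copied verbatim and the vertex count is unchanged. Next, Lemma~\ref{lem:chor-eq-cobip} says that the optimum editing sets for \BCE{} on $(G,k)$ and for \CCE{} on the cobipartite completion $(G',k)$ coincide, so the ETH lower bound transfers to \CCE{} again with~$k$ unchanged.

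The final link is the reduction inside the proof of Theorem~\ref{thm:chordal-edit-npc}, which takes a \CCE{} instance $(G,k)$ and constructs a \pname{Chordal Editing} instance $(G'',k)$ by attaching $k+1$ new universal vertices on each side and hanging $k+1$ simplicial padding vertices off each newly created clique-edge. As established in the claim there, the two instances are equivalent, the parameter is preserved exactly, and the size of~$G''$ is polynomial in $|V(G)|+k$. Composing these three parameter-preserving reductions with Corollary~\ref{cor:te-ste-eth-hard} yields the claimed lower bound.

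The only step that requires care is confirming that each reduction leaves~$k$ \emph{exactly} unchanged rather than merely polynomially related: any reduction of the form $k \mapsto k^c$ would dilute the bound to $2^{o(k^{1/(2c)})}$ and destroy the square-root exponent. Each of the three reductions above keeps~$k$ identical, so no such deterioration occurs and the $2^{o(\sqrt{k})}$ lower bound passes through intact; this is the only point at which the argument could go wrong.
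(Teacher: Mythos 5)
Your proposal is correct and matches the approach the paper implicitly takes (the paper states this corollary without proof, but its sibling corollary for Chain Editing is proved exactly this way: by observing that the reductions from \STE{} preserve the parameter). You correctly trace the chain \STE{} $\to$ \BCE{} $\to$ \CCE{} $\to$ \pname{Chordal Editing}, verify that each step leaves $k$ unchanged while blowing up the instance only polynomially, and compose with Corollary~\ref{cor:te-ste-eth-hard}; your closing remark about why exact parameter preservation matters is also accurate.
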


\newcommand{\regular}{regular}     
\newcommand{\outlying}{outlying}   
\newcommand{\important}{important} 

\section{Kernels for Modifications into Threshold and Chain Graphs}
\label{sec:kernel}

First we give kernels with quadratically many vertices for the following three
problems: \pname{Threshold Completion}, \pname{Threshold Deletion}, and
\pname{Threshold Editing}, answering a recent question of Liu, Wang and
Guo~\cite{liu2014overview}.  Then we continue by providing kernels with
quadratically many vertices for \pname{Chain Completion}, \pname{Chain
  Deletion}, and \pname{Chain Editing}.
Our kernelization algorithms uses techniques similar to the previous result that
\pname{Trivially Perfect Editing} admits a polynomial
kernel~\cite{drange2014apolynomial}.
Observe that the class of threshold graphs is closed under taking complements.
It follows that for every instance~$(G,k)$ of \pname{Threshold
  Completion},~$(\bar{G}, k)$ is an equivalent instance of \pname{Threshold
  Deletion} (and vice versa).  Almost the same trick applies to \pname{Chain
  Deletion}.  Due to this, we restrict our attention to the completion and
editing variants for the remainder of the section.  Motivated by the
characterization of threshold graphs in
Propositions~\ref{prop:c4p42k2-free}~and~\ref{prop:chain-defs}, we define
obstructions (also see Figure~\ref{fig:forbidden-graphs}).

\begin{definition}[$\mathcal{H}$, Obstruction]
  \label{def:obstruction}
  A graph~$H$ is a \emph{threshold obstruction} if it is isomorphic to a member
  of the set~$\{ C_4 , P_4 , 2K_2\}$ and a \emph{chain obstruction} if it is
  isomorphic to a member of the set~$\{C_3, 2K_2, C_5\}$.  If it is clear from
  the context, we will often use the term obstruction for both threshold and
  chain obstructions and denote the set of obstructions by~$\mathcal{H}$.
  Furthermore, if an obstruction~$H$ is an induced subgraph of a graph~$G$ we
  call~$H$ an \emph{obstruction in~$G$}.
\end{definition}

\begin{definition}[Realizing]
  \label{def:realize}
  For a graph~$G$ and a set of vertices $X \subseteq V(G)$ we say that a vertex $v \in
  V(G) \setminus X$ is \emph{realizing} $Y \subseteq X$ if $N_X(v) = Y$.
  Furthermore, we say that a set $Y \subseteq X$ is \emph{being realized} if
  there is a vertex~$v \in V(G) \setminus X$ such that~$v$ is realizing~$Y$.
\end{definition}

Before proceeding, we observe that our kernelization algorithms does not modify
any edges, and only changes the budget in the case that we discover that we have
a no-instance (in which case we return~$(H,0)$, where~$H$ is an obstruction
in~$G$).  The only modification of the instance is to delete vertices, hence the
kernelized instance is an induced subgraph of the original graph.  Since the
parameter is never increased, we obtain \emph{proper kernels}.

\subsection{Modifications into Threshold Graphs}
\label{sec:modif-into-thresh}

We now focus on modifications to threshold graphs and obtaining kernels for
these operations.

\subsubsection{Outline of the Kernelization Algorithm}
\label{sec:outl-kern-algor}

The kernelization algorithm consists of a twin reduction rule and an irrelevant
vertex rule.  The twin reduction rule is based on the 
observation 
that any obstruction containing vertices from a large enough twin class will
have to be handled by edges not incident to the twin class.  From this
observation, we may conclude that for any twin class, we may keep only a certain
amount
without affecting the solutions.

A key concept of the irrelevant vertex rule is what will be referred to as a
\emph{\modulator{}}.  A \modulator{} is a set of vertices~$X$ in~$G$ of linear
size in~$k$, such that for every obstruction~$H$ in~$G$ one can add and remove
edges in~$[X]^2$ to turn~$H$ into a non-obstruction.  First, we prove that we
can in polynomial time either obtain such a set~$X$ or conclude correctly that
the instance is a no-instance.  The observation that~$G-X$ is a threshold graph
will be exploited heavily and we now fix a threshold decomposition
$(\mathcal{C}, \mathcal{I})$ of $G-X$.  We then prove that the idea of
Proposition~\ref{prop:threshold-nested-neighborhoods} can be extended to
vertices in~$G-X$ when considering their neighborhoods in~$G$.  In other words,
the neighborhoods of the vertices in~$G-X$ are nested also when considering~$G$.
This immediately yields that the number of subsets of~$X$ that are being
realized is bounded linearly in the size of~$X$ and hence also in~$k$.

We now either conclude that the graph is small or we identify a sequence of
levels in the threshold decomposition containing many vertices, such that all
the clique vertices and all the independent set vertices in the sequence have
identical neighborhoods in~$X$, respectively.  The crux is that in the middle of
such a sequence there will be a vertex that is replaceable by other vertices in
every obstruction and hence is irrelevant.  Such a sequence is obtained by
discarding all levels in the decomposition that are extremal with respect to a
subset~$Y$ of~$X$, meaning that there either are no levels above or underneath
that contain vertices realizing~$Y$.  One can prove that in this process, only a
quadratic number of vertices are discarded and from this we obtain a kernel.

\subsubsection{The Twin Reduction Rule}
\label{sec:twin-reduction-rule}

First, we introduce the twin reduction rule as described above.  For the
remainder of the section we will assume this rule to be applied exhaustively and
hence we can assume all twin classes to be small.

\begin{redrule}[Twin reduction rule]
  \label{redrule:twin-reduction}
  Let~$(G,k)$ be an instance of \TC{} or \TE{} and~$v$ a vertex in~$G$ such
  that~$|\tc(v)| > 2k+2$.  We then reduce the instance to~$(G-v, k)$.
\end{redrule}

\begin{lemma}
  \label{lemma:twin-reduction-valid}
  Let~$G$ be a graph and~$v$ a vertex in~$G$ such that $|\tc(v)| > 2k+2$.  Then
  for every~$k$ we have that $(G,k)$ is a yes-instance of \TC{} (or \TE) if and
  only if $(G-v,k)$ is a yes-instance of \TC{} (resp.\ \TE).
\end{lemma}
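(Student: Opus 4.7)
The plan is to prove the two directions separately; the forward direction is routine and the backward one is the substantive one. For the forward direction, I would observe that threshold graphs are hereditary: if $F$ is a solution for $(G,k)$, then setting $F^- = \{e \in F : v \notin e\}$ gives $(G-v) \triangle F^- = (G \triangle F) - v$, which is an induced subgraph of a threshold graph and hence threshold, with $|F^-| \leq |F| \leq k$. The argument is identical for \TC{} because $F^- \subseteq F$ contains no edges of $G-v$ whenever $F$ contains no edges of $G$.

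For the backward direction, I would show that if $F'$ with $|F'| \leq k$ makes $(G-v) \triangle F'$ threshold, then the \emph{same} set $F'$, viewed as a subset of $[V(G)]^2$, solves $(G,k)$. The counting starts from the hypothesis: since at most $2|F'| \leq 2k$ vertices are incident to edges of $F'$ and $|\tc(v) \setminus \{v\}| \geq 2k+2$, there are at least two vertices $u, u' \in \tc(v) \setminus \{v\}$ that are not incident to any edge of $F'$. All members of $\tc(v)$ are pairwise twins of a common type (true if $\tc(v) = \ttc(v)$, false if $\tc(v) = \ftc(v)$); since $F'$ touches neither $u$, $u'$, nor $v$ (note $v \notin V(G-v)$, so $F'$ cannot contain a pair at $v$), the three vertices $u, u', v$ remain pairwise twins of the same type in $G \triangle F'$.

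Next, let $H = (G \triangle F') - v = (G-v) \triangle F'$ and fix a threshold decomposition $(\mathcal{C}, \mathcal{I})$ of $H$ via Proposition~\ref{prop:threshold-decomposition}. The critical observation is that $u$ lies on the correct side of the decomposition: because $u$ and $u'$ are twins in $H$ of the same type as in $G$, the corresponding twin class of $u$ in $H$ has size at least two, which forces a true twin class to lie in the clique side (since any two vertices in the independent side are non-adjacent, ruling out true twinhood) and a false twin class to lie in the independent side (symmetrically, using that the clique side is a clique). Thus if $\tc(v) = \ttc(v)$ then $u$ sits in some clique fragment $C_\ell$, while if $\tc(v) = \ftc(v)$ then $u$ sits in some independent fragment $I_\ell$.

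Finally, since $v$ is a twin of $u$ of the matching type in $G \triangle F'$, I can enlarge $u$'s fragment by inserting $v$, and the resulting partition still satisfies every clause of Definition~\ref{def:threshold-partition}: the enlarged fragment remains a twin class, and the nested-neighborhood and complete split partition conditions carry over verbatim because $v$ inherits $u$'s neighborhood. Hence $G \triangle F'$ admits a threshold partition, so $F'$ solves $(G,k)$ within the budget $|F'| \leq k$. The main obstacle is the side-matching step of the previous paragraph; this is exactly where the bound $|\tc(v)| > 2k+2$ is tight, because we need \emph{two} untouched twins of $v$ (not just one) to force $u$'s twin class in $H$ to have size $\geq 2$ and thereby land on the correct side of the decomposition.
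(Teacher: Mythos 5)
Your proof is correct, but it takes a genuinely different route from the paper. The paper argues by contradiction through the forbidden-subgraph characterization (Proposition~\ref{prop:c4p42k2-free}): it assumes an obstruction $W$ in $G \triangle F'$, notes $W$ must contain $v$, uses the fact that an obstruction can contain at most two members of a twin class to find an untouched twin $b \notin W$, and substitutes $b$ for $v$ to exhibit an obstruction in $(G-v) \triangle F'$, a contradiction. You instead argue constructively through Definition~\ref{def:threshold-partition}: you take a threshold partition of $(G-v)\triangle F'$ and insert $v$ into the fragment of an untouched twin $u$, using the second untouched twin $u'$ only to force $\tc_H(u)$ to have size at least two and hence to lie in the fragment type matching $v$'s twin type. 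Notably, both arguments consume exactly two untouched twins, but for different reasons: in the paper, the obstruction may already contain one twin besides $v$; in yours, a lone twin could sit on the wrong side (indeed, adding a twin of a vertex whose class is a singleton can create a $C_4$ or $2K_2$, so this step is where your argument would break with only one twin -- your tightness remark is accurate). The paper's substitution argument is shorter and transfers verbatim to the chain-graph setting in Section~\ref{sec:adapt-kern-modif}, and more generally to any class defined by finitely many obstructions; your argument buys an explicit threshold partition of $G \triangle F'$ and an honest explanation of why the size bound is what it is, at the cost of some bookkeeping you compress into ``carry over verbatim'': strictly, the side-matching step uses that each fragment of the partition is a full twin class (so $u$ and $u'$ cannot straddle the two sides), and the nestedness and complete-split conditions are preserved not verbatim but because every $N[C_i]$, respectively $N(I_i)$, gains $v$ consistently with the existing strict inclusions, while the enlarged fragment absorbs no outside vertex. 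These verifications are routine and your claims are true, so there is no gap; also, for the completion variant the backward direction should note (one line) that $F'$ consists of non-edges of $G-v$ and hence of $G$, so the same set is a valid completion set.
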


\begin{proof}
  For readability we only consider \TC{}, however the exact same proof works for
  \TE{}.  Let~$G' = G-v$.  It trivially holds that if~$(G,k)$ is a yes-instance,
  then also~$(G',k)$ is a yes-instance.  This is due to the fact that removing a
  vertex never will create new obstructions.
  
  Now, let~$(G',k)$ be a yes-instance and assume for a contradiction
  that~$(G,k)$ is a no-instance.  Let~$F$ be an optimal solution of~$(G',k)$
  and~$W$ an obstruction in~$(G \triangle F,k)$.  Since~$W$ is not an
  obstruction in~$G'$ it follows immediately that~$v$ is in~$W$.  Furthermore,
  since~$|F| \leq k$ it follows that there are two vertices $a,b \in \tc(v)
  \setminus \left\{ v \right\}$ that~$F$ is not incident to.  Also, one can
  observe that no obstruction contains more than two vertices from a twin class
  and hence we can assume without loss of generality that~$b$ is not in~$W$.  It
  follows that $N_{G \triangle F}(v) \cap (W-v) = N_G(v) \cap (W-v) = N_G(b)
  \cap (W-v) = N_{G'}(b) \cap (W-v)$ and hence the graph induced on $V(W)
  \triangle \{ b,v \}$ is an obstruction in $G' \triangle F$, contradicting
  that~$F$ is a solution.
\end{proof}

\subsubsection{The Modulator}
\label{sec:modulator}

To obtain an~$O(k^2)$ kernel we aim at an irrelevant vertex rule.  However, this
requires some tools.  The first one is the concept of a \modulator{}, as defined
below.

\begin{definition}[Threshold modulator]
  \label{def:modulator}
  Let~$G$ be a graph and~$X \subseteq V(G)$ a set of vertices.  We say that~$X$ is a
  \emph{\modulator{}} of~$G$ if for every obstruction~$W$ in~$G$ it holds that
  there is a set of edges~$F$ in~$[X]^2$ such that~$W \triangle F$ is not an
  obstruction.
\end{definition}

Less formally, a set~$X$ is a \modulator{} of a graph~$G$ if for every
obstruction~$W$ in~$G$ you can edit edges between vertices in~$X$ to turn~$W$
into a non-obstruction.  Our kernelization algorithm will heavily depend on
finding a small \modulator{}~$X$ and the fact that~$G-X$ is a threshold graph.

\begin{lemma}
  \label{lemma:modulator-construction}
  There is a polynomial time algorithm that given a graph~$G$ and an integer~$k$
  either
  \begin{itemize}
  \item outputs a \modulator{}~$X$ of~$G$ such that~$|X| \leq 4k$ or
  \item correctly concludes that~$(G,k)$ is a no-instance of both
    \pname{Threshold Completion} and \pname{Threshold Editing}.
  \end{itemize}
\end{lemma}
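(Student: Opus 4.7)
The plan is to build the \modulator{} $X$ incrementally. Starting from $X = \emptyset$, at each iteration I search for an obstruction $W$ in $G$ that is \emph{uncoverable} by $X$, meaning no $F \subseteq [X]^2$ satisfies that $W \triangle F$ is not an obstruction. If no such $W$ exists, $X$ is returned as a modulator. Otherwise, I add the at most four vertices of $V(W) \setminus X$ to $X$ and iterate. If $|X|$ ever exceeds $4k$, the algorithm halts and returns a no-instance.

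To bound the number of iterations by $k$, I charge each iteration to a distinct edit of any hypothetical optimal solution $F^*$ with $|F^*| \leq k$. An uncoverable obstruction $W$ satisfies either $|W \cap X| \leq 1$, or $|W \cap X| = 2$ at a ``bad'' pair of vertices. By a finite case analysis over the three obstruction types, the bad pairs are exactly: two adjacent vertices of a $C_4$; the two endpoints or the two internal vertices of a $P_4$; and any pair of vertices split across the two $K_2$ components of a $2K_2$. I claim that in every uncoverable case, $F^*$ must contain an edit in $[V(W)]^2$ incident to some vertex that is \emph{fresh} at this iteration (i.e., in $V(W) \setminus X$). For $|W \cap X| \leq 1$ this is automatic, since any edge inside $V(W)$ has at least one endpoint outside $X$. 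For $|W \cap X| = 2$ at a bad pair $\{u,v\}$, the only edge of $[V(W)]^2$ with both endpoints in $X$ is $\{u,v\}$, and by the bad-pair analysis toggling $\{u,v\}$ alone leaves $W$ in $\{C_4, P_4, 2K_2\}$; thus $F^*$ must contain some other edit inside $V(W)$, which necessarily has a fresh endpoint. Since fresh vertices enter $X$ in unique iterations and remain thereafter, the charged edits across iterations have pairwise distinct fresh endpoints and are therefore pairwise distinct edges of $F^*$. Consequently the iteration count is at most $|F^*| \leq k$, whence $|X| \leq 4k$.

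The main obstacle is the bad-pair case analysis: verifying that for each bad pair $\{u, v\}$ in each obstruction type, the toggled graph $G[V(W)] \triangle \{uv\}$ still lies in $\{C_4, P_4, 2K_2\}$ (and dually that the good pairs allow coverage). This is a finite check, essentially one diagram per obstruction type and per pair of vertices in $V(W)$, but it must be tabulated carefully to ensure the charging argument above goes through uniformly. Once this is in place, polynomial running time is immediate: each iteration scans all $O(n^4)$ four-tuples to find an uncoverable obstruction, and there are at most $k+1$ iterations before termination or declaration of a no-instance.
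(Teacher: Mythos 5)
Your proposal is correct and follows essentially the same greedy charging argument as the paper's proof: repeatedly add the vertex set of an obstruction that cannot be destroyed by edits inside the current set~$X$, stop once $|X|$ would exceed~$4k$, and charge each addition to a distinct edit of any hypothetical solution via an endpoint outside the current~$X$ (the paper does a single pass over all obstructions rather than re-scanning, but the argument is the same). The explicit bad-pair tabulation is not actually needed---uncoverability by $[X\cap V(W)]^2$ together with the fact that a solution must destroy $W$ already forces an edit with a fresh endpoint, uniformly over all cases---but it is correct and does no harm.
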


\begin{proof}
  Let~$X_1$ be the empty set and $\mathcal{W} = \{W_1, \dots, W_t\}$ the set of all
  obstructions in~$G$.  We execute the following procedure for every $W_i$ in
  $\mathcal{W}$: If $W_i \triangle F$ is an obstruction for every $F \subseteq
  [X_i \cap V(W_i)]^2$ we let $X_{i+1} = X_i \cup V(W_i)$, otherwise we let
  $X_{i+1} = X_i$.  After we have considered all obstructions we let $X =
  X_{t+1}$.  If $|X| > 4k$ we conclude that $(G,k)$ is a no-instance, otherwise
  we output~$X$.
  
  Since all obstructions are finite the algorithm described clearly runs in
  polynomial time.  We now argue that~$X$ is a \modulator{} of~$G$.  If~$W_i$
  was added to~$X_{i+1}$, we let~$F$ be all the non-edges of~$W$.  Since~$W
  \triangle F$ is isomorphic to~$K_4$ it follows immediately that $W \triangle
  F$ is not an obstruction.  If~$W_i$ was not added to~$X_{i+1}$, let~$F$ the
  set found in $[X_i \cap V(W_i)]^2$ such that $W_i \triangle F$ is not an
  obstruction.  Observe that~$F \subseteq [X]^2$ and hence~$X$ is a \modulator{}.
  
  It remains to prove that if $|X| > 4k$ then $(G,k)$ is a no-instance of
  \pname{Threshold Editing}.  Observe that it will follow immediately
  that~$(G,k)$ is a no-instance of \pname{Threshold Completion}.  Since every
  obstruction consists of four vertices there was at least~$k+1$ obstructions
  added during the procedure.  Assume without loss of generality that~$W_1,
  \dots, W_{k+1}$ was added.  Observe that by construction, a solution must
  contain an edge in~$[X_{i+1}-X_{i}]^2$ for every~$i \in [k+1]$ and hence
  contains at least~$k+1$ edges.
\end{proof}

\begin{figure}[t]
  
  \centering
  \begin{tikzpicture}
    \tikzset{scale=0.8}
    \tikzset{Vertex/.style={shape=circle,draw,scale=0.9}}
    \tikzset{Edge/.style={}}
    \tikzset{Dots/.style={scale=0.8}}

    \foreach \e/\x/\y [count=\k] in {
      f00/-1/0,
      f01/-1/1,
      f02/-1/2,
      f03/-1/3
    }
    \node[Vertex, fill=black!0] (\e) at (\x,\y) {};

    \foreach \e/\x/\y [count=\k] in {
      f10/0.5/0,
      f11/0.5/1,
      f12/0/2,
      f13/1/2
    }
    \node[Vertex, fill=black!13] (\e) at (\x,\y) {};
 
    \foreach \e/\x/\y [count=\k] in {
      f20/2/0,
      f21/2/1,
      f22/2/2,
      f23/3/1
    }
    \node[Vertex, fill=black!26] (\e) at (\x,\y) {};  

    \foreach \e/\x/\y [count=\k] in {
      f30/4/0,
      f31/5/0,
      f32/4/1,
      f33/5/1
    }
    \node[Vertex, fill=black!39] (\e) at (\x,\y) {};

    \foreach \e/\x/\y [count=\k] in {
      f40/6/0,
      f41/7/0,
      f42/6/1,
      f43/7/1
    }
    \node[Vertex, fill=black!52] (\e) at (\x,\y) {};
    
    \foreach \e/\x/\y [count=\k] in {
      f50/8/0,
      f51/9/0,
      f52/8/1,
      f53/9/1
    }
    \node[Vertex, fill=black!65] (\e) at (\x,\y) {};

    \foreach \e/\x/\y [count=\k] in {
      f60/10/0,
      f61/11/0,
      f62/10/1,
      f63/11/1
    }
    \node[Vertex, fill=black!78] (\e) at (\x,\y) {};


    \node[] () at (-0.95,-1) {$H_1$};
    \node[] () at (0.55,-1) {$H_2$};
    \node[] () at (2.05,-1) {$H_3$};
    \node[] () at (4.55,-1) {$H_4$};
    \node[] () at (6.55,-1) {$H_5$};
    \node[] () at (8.55,-1) {$H_6$};
    \node[] () at (10.55,-1) {$H_7$};
    
    \foreach \a/\b in 
    {{f00/f01},{f01/f02},{f02/f03},{f10/f11},{f12/f13},
      {f20/f21},{f20/f23},{f21/f22},{f30/f32},{f32/f33},{f31/f33},
      {f40/f41},{f41/f43},{f43/f42},{f42/f40},{f50/f52},{f51/f53},
      {f60/f62},{f60/f61},{f61/f63}}
    \draw[Edge](\a) to node {} (\b);
    
    \draw [dotted, rounded corners] (-1.5,-0.5) rectangle (11.5,0.5);
    \node[scale=1.5] () at (12,0) {$X$};
  \end{tikzpicture}
  
  \caption{Some of the intersections of an obstruction with a \modulator{} $X$
  that will not occur by definition. More specifically the ones necessary for
  the proof of the kernel.}
  \label{fig:modulator}
\end{figure}
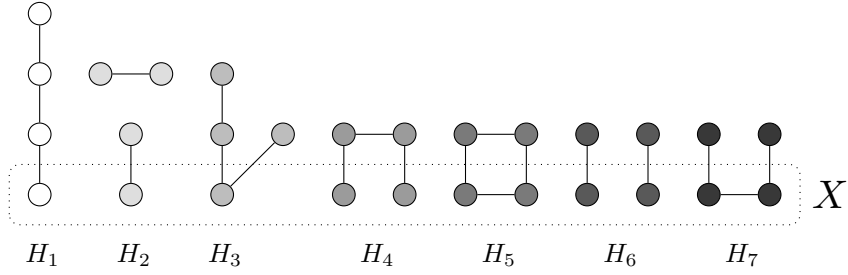

\subsubsection{Obtaining Structure}
\label{sec:obtaining-structure}

We now exploit the \modulator{} and its interaction with the remaining graph to
obtain structure.  First, we prove that the neighborhoods of the vertices
outside of~$X$ are nested and that the number of realized sets in~$X$ are
bounded linearly in~$k$.

\begin{lemma}
  \label{lemma:G-X-ordering}
  Let~$G$ be a graph and~$X$ a \modulator{}.  For every pair of vertices~$u$
  and~$v$ in~$G-X$ it holds that either $N(u) \subset N[v]$ or $N(v) \subset N[u]$.
\end{lemma}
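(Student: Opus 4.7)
The plan is a proof by contradiction, reducing to an obstruction on four vertices that no editing inside $[X]^2$ can repair. Suppose $u,v \in V(G-X)$ have incomparable neighborhoods, i.e.\ neither $N(u) \subseteq N[v]$ nor $N(v) \subseteq N[u]$. I pick witnesses $a \in N(u) \setminus N[v]$ and $b \in N(v) \setminus N[u]$, and first check that $u,v,a,b$ are four distinct vertices: $a \neq u$ and $b \neq v$ by the definition of the open neighborhood, $a \neq v$ and $b \neq u$ because $a$ was chosen outside $N[v]$ and $b$ outside $N[u]$, and $a \neq b$ since the two disagree on adjacency to $u$.

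On $W := G[\{u,v,a,b\}]$ the pairs $ua,vb$ are edges and $ub,va$ are non-edges by construction; only $uv$ and $ab$ are left free. I dispose of this with one $2\times 2$ case distinction: if both $uv$ and $ab$ are edges, then $u\text{-}a\text{-}b\text{-}v\text{-}u$ is an induced $C_4$; if exactly one of them is an edge, $W$ is an induced $P_4$; if neither is an edge, $W$ is an induced $2K_2$. So $W \in \mathcal{H}$ is an obstruction in $G$.

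The final step is to observe that since $u,v \notin X$, no pair in $\binom{V(W)}{2}$ incident to $u$ or $v$ lies in $[X]^2$, so any $F \subseteq [X]^2$ can toggle at most the single pair $ab$ inside $V(W)$. Flipping $ab$ in the case table above only moves $C_4 \leftrightarrow P_4$ (when $uv$ is an edge) or $P_4 \leftrightarrow 2K_2$ (when $uv$ is a non-edge); in every sub-case $W \triangle F$ still lies in $\mathcal{H}$, contradicting that $X$ is a \modulator{}. Strictness of the inclusion is then automatic, since $N(u)=N[v]$ would force $u \in N[v] = N(u)$, impossible in a simple graph. The main conceptual step is the four-vertex case analysis; the rest is bookkeeping about which of the six pairs in $\binom{V(W)}{2}$ can possibly belong to $[X]^2$.
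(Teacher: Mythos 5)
Your proof is correct and follows essentially the same route as the paper: choose witnesses $a \in N(u)\setminus N[v]$ and $b \in N(v)\setminus N[u]$, observe that the four vertices induce a $C_4$, $P_4$, or $2K_2$ regardless of the status of $uv$ and $ab$, and note that $ab$ is the only pair possibly inside $[X]^2$, whose flip merely moves between obstruction types, contradicting the definition of a \modulator{}. Your explicit case table and the remark on strictness are just spelled-out versions of what the paper leaves implicit.
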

\begin{proof}
  Assume otherwise for a contradiction and let $u'$ be a vertex in $N(u)
  \setminus N[v]$ and $v'$ a vertex in $N(v) \setminus N[u]$.  Let $W =
  G[\left\{ u,v,u',v' \right\}]$ and observe that $uu'$ and $vv'$ are edges
  in~$W$ and $uv'$ and $vu'$ are non-edges in~$W$ by definition.  Hence, no
  matter if some of the edges $uv$ and $u'v'$ are present or not,~$W$ is an
  obstruction in~$G$ (see Figure~\ref{fig:modulator} for an illustration).
  Since $u'v'$ is the only pair in~$W$ possibly with both elements in~$X$ this
  contradicts~$X$ being a \modulator{}.
\end{proof}

\begin{lemma}
  \label{lemma:bounded-neighborhoods}
  Let~$G$ be a graph and~$X$ a corresponding \modulator{}, then
  \[
  \left| \left\{ N_X(v) \text{ for } v \in V(G) \setminus X \right \} \right| \leq |X| + 1 .
  \]
  Or in other words, there are at most $|X|+1$ sets of~$X$ that are being
  realized.
\end{lemma}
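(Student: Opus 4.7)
The plan is to bootstrap Lemma~\ref{lemma:G-X-ordering} from a statement about the full neighborhoods of vertices in $G-X$ to a statement about their traces on~$X$, and then invoke the laminar-set bound (Lemma~\ref{lem:laminar-bounded}).

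First I would fix two vertices $u, v \in V(G) \setminus X$ and apply Lemma~\ref{lemma:G-X-ordering}, which gives, say, $N(u) \subseteq N[v]$. Intersecting both sides with $X$ yields $N_X(u) \subseteq N[v] \cap X$, and since $v \notin X$ we have $N[v] \cap X = N(v) \cap X = N_X(v)$. Thus $N_X(u) \subseteq N_X(v)$ (or the symmetric inclusion), so the family $\mathcal{F} = \{N_X(v) : v \in V(G) \setminus X\}$ is totally ordered by inclusion.

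A chain is trivially laminar, so Lemma~\ref{lem:laminar-bounded} applied with ground set $X$ gives $|\mathcal{F}| \leq |X| + 1$, which is exactly the claimed bound. Alternatively one could argue directly: in a chain of subsets of the $|X|$-element ground set, distinct sets must have distinct cardinalities, and there are only $|X|+1$ possible cardinalities.

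I do not expect a genuine obstacle here; the only subtlety is the half-open nestedness $N(u) \subseteq N[v]$ from Lemma~\ref{lemma:G-X-ordering}, but the fact that $v \notin X$ makes $N[v]$ and $N(v)$ agree on $X$, so the inclusion transfers cleanly to the traces. Everything else is just recognising that chains are laminar and quoting the prior lemma.
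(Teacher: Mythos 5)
Your proof is correct and follows essentially the same route as the paper: apply Lemma~\ref{lemma:G-X-ordering} to deduce that the traces $N_X(v)$ form a chain under inclusion, then count. The paper leaves the final count implicit, whereas you make it explicit via Lemma~\ref{lem:laminar-bounded} (or the distinct-cardinalities argument), which is just filling in the same step.
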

\begin{proof}
  Let~$u$ and~$v$ be two vertices in $G-X$.  It follows directly from
  Lemma~\ref{lemma:G-X-ordering} that either $N_X(v) \subseteq N_X(u)$ or
  $N_X(v) \supseteq N_X(u)$.  The result follows immediately.
\end{proof}

With the definition of the modulator and the basic properties above, we are now
ready to extract more vertices from the instance, aiming at many consecutive
levels that have the same neighborhood in~$X$ for the clique, and independent
set vertices, respectively.  This will lead up to our irrelevant vertex rule.

Let~$G$ be a graph,~$X$ a \modulator{} and $(\mathcal{C},\mathcal{I})$ a
threshold partition of $G-X$.  Letting~$P$ denote either~$C$ or~$I$, we say that
a subset $Y \subseteq X$ has its \emph{upper extreme} in~$P_i$ if~$P_i$
realizes~$Y$ and for every $j > i$ it holds that~$P_j$ does not realize~$Y$.
Similarly, a subset $Y \subseteq X$ has its \emph{lower extreme} in~$P_i$
if~$P_i$ realizes~$Y$ and for every $j < i$ it holds that~$P_j$ does not
realize~$Y$.  We say that $Y \subseteq X$ is \emph{extremal} in~$P_i$ if~$Y$ has
its upper or lower extreme in~$Y$.  Observe that every $Y \subseteq X$ is
extremal in at most two clique fragments and two independent set fragments.

We continue having $P$ denote either $C$ or $I$.
\begin{lemma}
  \label{lemma:sequential-realization}
  Let~$G$ be a graph,~$X$ a \modulator{} and $(\mathcal{C},\mathcal{I})$ a
  threshold partition of $G-X$.  For every $Y \subseteq X$ it holds that if~$Y$
  has its lower extreme in $P_{\ell}$ and upper extreme in $P_{u}$, then for
  every vertex $v \in P_i$ with $i \in [\ell+1, u-1]$ it holds that $N_X(v) =
  Y$.
\end{lemma}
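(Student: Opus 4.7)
The plan is to sandwich $N_X(v)$ between $Y$ and $Y$ by applying Lemma~\ref{lemma:G-X-ordering} twice, once to a pair $(a,v)$ and once to a pair $(v,b)$, where $a \in P_\ell$ and $b \in P_u$ are witnesses with $N_X(a) = N_X(b) = Y$ guaranteed by the definition of the lower and upper extremes. From the two applications I expect to extract the inclusions $Y \subseteq N_X(v)$ and $N_X(v) \subseteq Y$, which together yield the desired equality.

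The substance of the proof lies in ruling out the wrong side of the dichotomy offered by Lemma~\ref{lemma:G-X-ordering}. The threshold partition of $G - X$ supplies strict containments between neighborhoods of distinct fragments on the same side: $N_{G-X}[C_\ell] \supsetneq N_{G-X}[C_i] \supsetneq N_{G-X}[C_u]$ in the clique case, and $N_{G-X}(I_\ell) \subsetneq N_{G-X}(I_i) \subsetneq N_{G-X}(I_u)$ in the independent-set case (the containments are proper since distinct fragments are distinct twin classes). Using these, I would exhibit a concrete witness vertex $w \in G-X$ adjacent in $G$ to one endpoint of the pair but not the other. In both cases $w$ is forced to lie on the side of the partition opposite to $v$, because vertices on the same side see each other uniformly; in particular $w$ is automatically distinct from all the relevant endpoints, so the failure $N_G(a) \not\subseteq N_G[v]$ (or the symmetric failure on the $(v,b)$ side) is witnessed cleanly. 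Lemma~\ref{lemma:G-X-ordering} then forces the other alternative, and restricting the surviving inclusion $N_G(\cdot) \subseteq N_G[\cdot]$ to $X$ (noting that none of $a,v,b$ lies in $X$) converts it into the corresponding statement about $N_X(\cdot)$.

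Running this argument on the pair $(a,v)$ yields one inclusion and running it on $(v,b)$ yields the other; since $v \in P_i$ was arbitrary, the equality $N_X(v) = Y$ holds for every such $v$. The main obstacle, such as it is, is bookkeeping the inclusion directions: the roles of $\ell < i$ and $i < u$ are inverted between the clique side (smaller index means larger closed neighborhood) and the independent side (smaller index means smaller open neighborhood), so one must write out both cases carefully to confirm that the two applications of Lemma~\ref{lemma:G-X-ordering} produce opposite inclusions rather than the same one twice.
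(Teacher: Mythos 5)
Your approach is essentially the paper's: pick witnesses $a \in P_\ell$ and $b \in P_u$ with $N_X(a)=N_X(b)=Y$, apply Lemma~\ref{lemma:G-X-ordering} to the pairs $(a,v)$ and $(v,b)$, and restrict the resulting inclusions to $X$ to sandwich $N_X(v)$ between two copies of $Y$. The only place you diverge is in spelling out \emph{why} the dichotomy of Lemma~\ref{lemma:G-X-ordering} resolves the right way: you explicitly exhibit, from the strict containments furnished by the threshold partition of $G-X$, a private neighbor on the opposite side that kills the unwanted alternative. The paper states the chosen direction without this witness argument, treating it as immediate from the nesting $N_{G-X}(w) \subset N_{G-X}(v) \subset N_{G-X}(u)$. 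So your write-up is a slightly more explicit rendering of the same proof, and your bookkeeping of inclusion directions (reversed between the clique and independent sides) is correct and matches Definition~\ref{def:threshold-partition}.
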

\begin{proof}
  Let~$Y$ be a subset of~$X$ with $C_{\ell}$ and $C_{u}$ being its lower and
  upper extremes in the clique respectively.  By definition there is a vertex $u
  \in C_{\ell}$ and a vertex $w \in C_u$ such that $N_X(u) = N_X(w) = Y$.
  Let~$i$ be an integer in $[\ell+1, u-1]$ and a vertex $v \in C_i$.  By the
  definition of a threshold partition it holds that $N_{G-X}(w) \subset
  N_{G-X}(v) \subset N_{G-X}(u)$.  It follows from
  Lemma~\ref{lemma:G-X-ordering} that $N(w) \subset N[v]$ and that $N(v) \subset
  N[u]$.  Hence, \[Y = N_X(w) \subseteq N_X(v) \subseteq N_X(u) = Y\] and we
  conclude that $N_X(v) = Y$.  Since~$i$ and~$v$ was arbitrary, the proof is
  complete.
\end{proof}

\begin{definition}[Important, Outlying, and Regular]
  We say that $P_i$ in the partition is \emph{\important{}} if there is a $Y
  \subseteq X$ such that~$Y$ has its extreme in $P_i$.  Furthermore, a level
  $L_i$ is \important{} if $C_i$ or $I_i$ is \important{}.  Let~$f$ be the
  smallest number such that $|\cup_{i \leq f} C_i| \geq 2k+2$ and~$r$ the
  largest number such that $|\cup_{i \geq r} I_i| \geq 2k+2$.  A level $L_i$ is
  \emph{\outlying{}} if $i \leq f$ or $i \geq r$.  All other levels of the
  decomposition are \emph{\regular{}} and a vertex is \regular{}, \outlying{} or
  \important{} depending on the type of the level it is contained in.
\end{definition}

\begin{lemma}
  \label{obs:bounded-number-of-ugly-levels}
  Let~$G$ be a graph and~$X$ a \modulator{} of~$G$ of size at most~$4k$.  Then
  every threshold partition of~$G-X$ has at most~$16k+4$ \important{} levels.
\end{lemma}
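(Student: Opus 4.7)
The plan is to bound the number of important levels by counting, for each subset $Y \subseteq X$ that is realized by some vertex of $G-X$, how many fragments $Y$ can be extremal in, and then to use Lemma~\ref{lemma:bounded-neighborhoods} to bound the number of such realized subsets.

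First I would observe that only \emph{realized} subsets of $X$ can be extremal in any fragment, since the definitions of ``upper extreme'' and ``lower extreme'' both require the existence of a fragment that realizes $Y$. By Lemma~\ref{lemma:bounded-neighborhoods} together with $|X| \leq 4k$, the number of such realized subsets $Y \subseteq X$ is at most $|X|+1 \leq 4k+1$.

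Next I would fix one such realized subset $Y$ and count how many fragments it can make important. By definition, $Y$ has at most one upper extreme and at most one lower extreme among the clique fragments $\{C_i\}$, and at most one upper extreme and one lower extreme among the independent fragments $\{I_i\}$. Hence $Y$ is extremal in at most two clique fragments and at most two independent fragments, as is already noted in the text preceding the lemma. Since a level $L_i = C_i \cup I_i$ is called important if $C_i$ or $I_i$ is important, each fragment witnessing extremality of $Y$ contributes at most one important level, so $Y$ contributes at most $2 + 2 = 4$ important levels.

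Summing over all realized subsets, I get at most $4 \cdot (4k+1) = 16k+4$ important levels, which is exactly the claimed bound. The argument is essentially a double-counting of (realized subset, extremal fragment) pairs, and there is no real obstacle: the main point that needs to be invoked is Lemma~\ref{lemma:bounded-neighborhoods}, which in turn rests on the nesting Lemma~\ref{lemma:G-X-ordering}, and the straightforward observation that an extreme (whether upper or lower) is unique per side of the threshold partition for each subset $Y$.
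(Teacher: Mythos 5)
Your proof is correct and matches the paper's intended argument: the paper's own proof is just the one-line remark that the bound follows from the definition of important levels and Lemma~\ref{lemma:bounded-neighborhoods}, and your count of at most four extremal fragments (two per side) for each of the at most $4k+1$ realized subsets is exactly the computation being left implicit there. Nothing further is needed.
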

\begin{proof}
  The result follows immediately from the definition of important levels and
  Lemma~\ref{lemma:bounded-neighborhoods}.
\end{proof}

\begin{lemma}
  \label{lemma:bounded-number-of-ugly-levels}
  Let~$G$ be a graph,~$X$ a \modulator{} of~$G$ and $(\mathcal{C},\mathcal{I})$
  a threshold partition of~$G-X$, then for every set~$Y \subseteq X$ there are
  at most two important clique fragments (independent fragments) realizing~$Y$.
\end{lemma}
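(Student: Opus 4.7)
The plan is to fix an arbitrary $Y \subseteq X$ and show that only the lower and upper extremes of $Y$ can be important clique fragments realizing $Y$; the argument on the independent side is identical by symmetry.

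If no clique fragment realizes $Y$, there is nothing to prove, so assume that $P_\ell$ and $P_u$ are the lower and upper extremes of $Y$ in the clique partition (possibly $\ell = u$). By definition every fragment realizing $Y$ lies in the index range $[\ell, u]$. The key first step is to observe, via Lemma~\ref{lemma:sequential-realization}, that for every intermediate index $i \in (\ell, u)$ and every $v \in P_i$ we have $N_X(v) = Y$. In particular, $Y$ is the \emph{only} subset of $X$ that any vertex of such a $P_i$ realizes.

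Next I would use this to rule out each intermediate $P_i$ from being important. For $P_i$ to be important, by definition some $Y' \subseteq X$ must have its extreme in $P_i$, which requires at least one vertex of $P_i$ to realize $Y'$. Combined with the previous paragraph, this forces $Y' = Y$. But by definition of upper and lower extremes, $Y$ has its lower extreme uniquely in $P_\ell$ and its upper extreme uniquely in $P_u$; since $\ell < i < u$, $Y$ does not have its extreme in $P_i$, a contradiction. Hence no intermediate fragment is important, and only $P_\ell$ and $P_u$ remain as candidates, yielding at most two important clique fragments realizing $Y$.

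The same reasoning applies verbatim to the independent fragments (Lemma~\ref{lemma:sequential-realization} is stated with $P$ denoting either $C$ or $I$). The only subtlety is the interplay between the two definitions: that a fragment is a twin class with a single realized subset of $X$ in its interior range, and that importance demands a set to have its extreme there; once this is put side by side, the bound follows immediately with no additional machinery.
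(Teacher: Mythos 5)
Your proof is correct and takes essentially the same route as the paper: the paper assumes three important clique fragments realizing $Y$ and sandwiches a vertex of the middle fragment using Lemma~\ref{lemma:G-X-ordering}, which is exactly the nestedness fact you obtain by citing Lemma~\ref{lemma:sequential-realization} for the fragments strictly between the two extremes. Invoking Lemma~\ref{lemma:sequential-realization} instead of re-deriving the sandwich is a clean repackaging, not a genuinely different argument.
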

\begin{proof}
  We first prove the statement for clique fragments.  Let~$Y$ be a subset of~$X$
  and $i < j < k$ three integers.  Assume for a contradiction that $C_i, C_j$
  and $C_k$ are important clique fragments all realizing~$Y$.  By definition
  there are vertices $u \in C_i$, $v \in C_j$ and $w \in C_k$ such that $N_X(u)
  = N_X(v) = N_X(w) = Y$.  Furthermore, there is a vertex $v' \in C_j$ such that
  $N_X(v') \neq Y$ since $C_j$ is important and~$Y$ does not have an extreme in
  $C_j$.  By the definition of threshold partitions, we have that $N_{G-X}(w)
  \subset N_{G-X}(v') \subset N_{G-X}(u)$.  Lemma~\ref{lemma:G-X-ordering}
  immediately implies that $N(w) \subset N[v']$ and $N(v') \subset N[u]$ and
  since $\left\{ u,v',w \right\} \subseteq \cup \mathcal{C}$ it holds that $N[u]
  \subseteq N[v'] \subseteq N[w]$.  Since $N_X(v') \neq Y$, we have $N_X(w)
  \subset N_X(v') \subset N_X(u)$, which contradicts the definition of~$w$
  and~$u$ since $N_X(u) = N_X(w)$.  By a symmetric argument, the statement also
  holds for independent fragments.
\end{proof}

\begin{lemma}
  \label{lemma:bounding-uglyness}
  Let~$G$ be a graph,~$X$ a \modulator{} of~$G$ of size at most $4k$ and
  $(\mathcal{C},\mathcal{I})$ a threshold partition of $G-X$.  Then there are at
  most $64k^2+80k+16$ \important{} vertices in $G-X$.
\end{lemma}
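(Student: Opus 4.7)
\emph{Plan.} I combine three ingredients already established in this section: the bound $16k+4$ on the number of important levels from Lemma~\ref{obs:bounded-number-of-ugly-levels}, the bound of $2k+2$ on the size of twin classes guaranteed by the Twin Reduction Rule~\ref{redrule:twin-reduction}, and the extreme-counting statements of Lemmata~\ref{lemma:sequential-realization} and~\ref{lemma:bounded-number-of-ugly-levels}. The starting point is to decompose each clique fragment $C_j$, and symmetrically each independent fragment $I_j$, into \emph{slots}, where two vertices lie in the same slot precisely when they realize the same subset $Y\subseteq X$. Because $C_j$ is a twin class in $G-X$, any two vertices of $C_j$ that share a slot agree on their entire closed neighborhood in $G$ and are therefore true twins in $G$; for $I_j$ the symmetric argument produces false twins. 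In both cases Rule~\ref{redrule:twin-reduction} forces every slot to have size at most $2k+2$, so bounding the number of important vertices reduces to bounding the total number of slots that lie inside important levels.

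By Lemma~\ref{lemma:sequential-realization}, if a subset $Y$ realized at $C_j$ has both its lower and upper extreme strictly away from $C_j$, then every vertex of $C_j$ realizes $Y$; any second realized subset would then also have to be realized at every vertex of $C_j$, which is impossible. Hence as soon as an important $C_j$ contains two or more slots, each realized subset must have an extreme at $C_j$, and in general the number of slots in an important $C_j$ is at most the number of clique-extremes located at $C_j$. By Lemma~\ref{lemma:bounded-neighborhoods} there are at most $|X|+1\le 4k+1$ realized subsets, each contributing at most two clique-extremes (upper and lower), so summing gives at most $2(4k+1)=8k+2$ slot-occurrences across all important clique fragments; the identical estimate holds on the independent side.

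To finish, I split the important vertices into four contributions and sum. Clique vertices sitting in a level where $C_i$ is itself important contribute at most $(2k+2)(8k+2)$ by the slot count above. Clique vertices in a level where $C_i$ is regular but $I_i$ is important have only one slot in $C_i$ (by the dichotomy supplied by Lemma~\ref{lemma:sequential-realization}) and therefore contribute $2k+2$ per level across at most $8k+2$ such levels, again $(2k+2)(8k+2)$. The two symmetric contributions for independent vertices add a further $2(2k+2)(8k+2)$, and summing all four gives $4(2k+2)(8k+2)=64k^2+80k+16$. The only conceptually delicate point is the ``slot equals $G$-twin-class'' correspondence, which pins down the per-slot size bound; once that is in place, the proof reduces to the extreme-counting already developed, together with the routine four-case bookkeeping for when $C_i$ versus $I_i$ is the important fragment of the level.
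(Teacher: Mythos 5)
Your proof is correct and follows essentially the same route as the paper: you partition the important vertices into fragment-times-realized-subset pieces (your ``slots'' are exactly the paper's sets $D\cap B$), bound each piece by $2k+2$ via the twin rule, and count pieces using the at most $4k+1$ realized subsets together with Lemma~\ref{lemma:sequential-realization} for the non-important fragments of important levels. The only (harmless) difference is bookkeeping: you charge slots in important fragments to the at most two clique-/independent-extremes of each realized subset, where the paper instead invokes Lemma~\ref{lemma:bounded-number-of-ugly-levels} (at most two important fragments per side realize a given subset), and both counts land on the same total $64k^2+80k+16$.
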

\begin{proof}
  Let~$Y$ be the set of all vertices contained in a \important{} clique or
  independent fragment and let~$Z$ be the set of all \important{} vertices.
  Observe that $Y \subseteq Z$ and that every $C_i$ or $I_i$ contained in $Z
  \setminus Y$ is a twin class in~$G$ by definition.  By
  Lemma~\ref{obs:bounded-number-of-ugly-levels} there are at most $16k+4$
  \important{} levels and since the twin-rule has been applied exhaustively it
  holds that $|Z \setminus Y| \leq (16k+4)(2k+2) = 32k^2+40k+8$.
  
  Let~$A$ be a subset of~$X$ and~$B$ the vertices in~$Y$ such that their
  neighborhood in~$X$ is exactly~$A$.  Let~$D$ be a $C_i$ or $I_i$ contained
  in~$Y$ and observe that $D \cap B$ is a twin class in~$G$ and hence $|D \cap
  B| \leq 2k+2$.  And hence it follows from
  Lemma~\ref{lemma:bounded-number-of-ugly-levels} that $|B| \leq 8k+8$.
  Furthermore, we know from Lemma~\ref{lemma:bounded-neighborhoods} that there
  are at most $4k+1$ realized in~$X$ and hence $|Y| \leq (8k+8)(4k+1) =
  32k^2+40k+8$.  It follows immediately that $|Z| \leq 64k^2 + 80k + 16$,
  completing the proof.
\end{proof}

\begin{lemma}
  \label{lemma:bounding-badness}
  Let~$G$ be a graph,~$X$ a \modulator{} of~$G$ of size at most $4k$ and
  $(\mathcal{C},\mathcal{I})$ a threshold partition of $G-X$.  Then there are at
  most $80k^2 + 112k + 32$ \important{} and \outlying{} vertices in total in
  $G-X$.
\end{lemma}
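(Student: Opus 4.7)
My plan is to bound the vertices that are outlying but not already important, and combine this count with Lemma~\ref{lemma:bounding-uglyness}, which already bounds the number of important vertices by $64k^2+80k+16$. Since the total we want is $80k^2+112k+32$, I would aim at showing that outlying-but-not-important vertices number at most $16k^2+32k+16 = 16(k+1)^2$.

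The key observation I would prove first is that if a level $L_i$ is outlying but not important, then both $C_i$ and $I_i$ are single twin classes of $G$, so each has size at most $2k+2$ by Rule~\ref{redrule:twin-reduction}. To see this for $C_i$: since $C_i$ is not important, no $Y \subseteq X$ has its extreme in $C_i$; hence for any $v \in C_i$ with $N_X(v) = Y$, the set $Y$ must be realized in some $C_\ell$ with $\ell < i$ and some $C_u$ with $u > i$. By Lemma~\ref{lemma:sequential-realization}, every vertex in $C_i$ then realizes $Y$, so $C_i$ has a unique $X$-neighborhood. Combined with $C_i$ being a twin class in $G-X$, this forces $C_i$ to be a twin class in $G$. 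The argument for $I_i$ is symmetric on the independent side.

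Next, I would bound the number of outlying levels. By minimality of $f$, $|\bigcup_{i<f} C_i| \leq 2k+1$, and since all non-extremal clique fragments are non-empty (see Figure~\ref{fig:threshold-partition}), this gives $f \leq 2k+2$. Symmetrically, the number of upper outlying levels is at most $2k+2$, yielding at most $4k+4$ outlying levels in total. Each outlying-but-non-important level contributes at most $|C_i|+|I_i| \leq 2(2k+2) = 4k+4$ vertices, so the total count of outlying-but-not-important vertices is at most $(4k+4)(4k+4) = 16(k+1)^2$, which added to Lemma~\ref{lemma:bounding-uglyness} gives $80k^2+112k+32$.

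The main obstacle is ensuring that non-important fragments do genuinely collapse to single twin classes of $G$; this crucially relies on the interplay between Lemma~\ref{lemma:G-X-ordering} (nested neighborhoods in $G-X$) and Lemma~\ref{lemma:sequential-realization} (uniform realizations strictly between extremes), together with the structural definition of \emph{important}. A secondary source of care is the precise bound $f \leq 2k+2$: one must argue carefully that the only bags potentially contributing nothing to $\bigcup_{i<f} C_i$ are the extremal ones, so no hidden empty bags inflate the level count.
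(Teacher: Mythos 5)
Your proposal is correct and follows essentially the same route as the paper's proof: bound the \important{} vertices via Lemma~\ref{lemma:bounding-uglyness}, use Lemma~\ref{lemma:sequential-realization} to show each non-\important{} level is covered by at most two twin classes of~$G$ (hence at most $4k+4$ vertices after Rule~\ref{redrule:twin-reduction}), and bound the number of \outlying{} levels by $4k+4$, giving $(4k+4)^2$ extra vertices. In fact you spell out the twin-class collapse and the bound $f\leq 2k+2$ in more detail than the paper, which only asserts these steps.
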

\begin{proof}
  By Lemma~\ref{lemma:bounding-uglyness} it follows that there are at most
  $64k^2 + 80k + 16$ vertices that are \important{} and possibly \outlying{}.
  It follows from Lemma~\ref{lemma:sequential-realization} that if a level is
  not \important{} its vertices are covered by at most two twin classes in~$G$
  and hence the level contains at most $4k+4$ vertices.  By definition there are
  at most $4k+4$ \outlying{} levels and hence at most $(4k+4)(4k+4) = 16k^2 +
  32k + 16$ vertices which are \outlying{}, but not \important{}.  The result
  follows immediately.
\end{proof}

\begin{lemma}
  \label{lemma:good-stays-around}
  Let~$G$ be a graph,~$X$ a \modulator{} of~$G$,~$v$ a \regular{} vertex in some
  threshold partition $(\mathcal{C}, \mathcal{I})$ of $G-X$, $C = \cup
  \mathcal{C}$ and $I = \cup \mathcal{I}$.  Then for every $F \subseteq
  [V(G)]^2$ such that $G \triangle F$ is a threshold graph, $|F| \leq k$ and
  every split partition $(C_F, I_F)$ of $G \triangle F$ we have:
  \begin{itemize}
  \item $v \in C$ if and only if $v \in C_F$ and
  \item $v \in I$ if and only if $v \in I_F$.
  \end{itemize}
\end{lemma}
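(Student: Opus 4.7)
The plan is to first observe that the two bullets of the lemma are logically equivalent: since $(C, I)$ partitions $V(G-X)\ni v$ and $(C_F, I_F)$ partitions $V(G)\ni v$, the equivalence $v \in C \iff v \in C_F$ is exactly the equivalence $v \in I \iff v \in I_F$ (each is the negation of the other on both sides). So I would only prove the first, assuming $v \in C$ and aiming for $v \in C_F$.

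I would argue by contradiction, supposing $v \in I_F$. Since $v$ is regular, $f < \lev(v) < r$, so by the complete-split property in Definition~\ref{def:threshold-partition} the vertex $v \in C_{\lev(v)}$ is adjacent in $G$ to every vertex of $I_{\geq \lev(v)}$, and in particular to every vertex of the outlying indep set $I_{\mathrm{out}} := I_r \cup \cdots \cup I_t$. By the definition of $r$, this set contains at least $2k+2$ vertices.

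The next step is a double-edit count. Set $m := |I_{\mathrm{out}} \cap C_F|$. Vertices in $I_{\mathrm{out}} \cap C_F$ all lie in $I$, hence are pairwise non-adjacent in $G$, yet must be pairwise adjacent as members of the clique $C_F$ of $G \triangle F$; this forces $\binom{m}{2}$ within-$I$ pairs into $F$. Separately, for each of the at least $2k+2 - m$ vertices $w \in I_{\mathrm{out}} \cap I_F$, the edge $vw$ is present in $G$ but cannot survive in $G \triangle F$ (both $v, w$ lie in the independent set $I_F$), so $\{v, w\} \in F$; these cross between $C$ (containing $v$) and $I$ (containing $w$) and are therefore disjoint from the within-$I$ pairs already counted. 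Combining,
\[
|F| \geq \binom{m}{2} + (2k+2-m) = \frac{m(m-3)}{2} + 2k + 2,
\]
whose minimum over integer $m \geq 0$ equals $2k+1$ (attained at $m \in \{1, 2\}$). Since $2k+1 > k \geq |F|$ for every $k \geq 0$, this is a contradiction, and therefore $v \in C_F$.

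The main obstacle is identifying a pool of edits that is \emph{forced} by the given structure alone, without appealing to detailed properties of $F$ or the split partition. Regularity of $v$, specifically the strict inequality $\lev(v) < r$, is precisely what places the entire outlying set $I_{\mathrm{out}}$ inside $v$'s $G$-neighborhood; once that is in hand, carefully separating the two edit types (additions inside $I_{\mathrm{out}} \cap C_F$ versus deletions of $\{v,w\}$ with $w \in I_{\mathrm{out}} \cap I_F$) gives the quantitative contradiction almost mechanically.
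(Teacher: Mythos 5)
Your argument for the implication $v \in C \Rightarrow v \in C_F$ is correct, and it is essentially the same as the paper's argument for that half: the paper also works with the outlying independent vertices $I_{\geq r} \subseteq N_G(v)$, though instead of your counting bound $\binom{m}{2} + (2k+2-m) \geq 2k+1$ it simply picks two vertices of that set untouched by $F$, notes that one of them must lie in $I_F$, and gets the contradiction from the surviving edge to $v$. The genuine gap is in your opening reduction. The two bullets of the lemma are indeed equivalent to each other, but each bullet is a biconditional, and you prove only one direction of it. From $v \in C \Rightarrow v \in C_F$ you cannot deduce $v \in I \Rightarrow v \in I_F$: the latter is the contrapositive of $v \in C_F \Rightarrow v \in C$, i.e.\ of the \emph{converse} of what you proved. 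The paper is explicit on this point: it observes that it suffices to prove the forward direction of \emph{both} statements, and then gives one argument for $v \in C \Rightarrow v \in C_F$ and a second (``symmetric'') argument for $v \in I \Rightarrow v \in I_F$. Your proposal simply omits the second implication, so half of the lemma is unproved.

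Note also that the missing half cannot be obtained by mechanically mirroring your edit count, because the definition of outlying is asymmetric: the sets guaranteed to be large are $C_{\leq f}$ (at the bottom of the decomposition) and $I_{\geq r}$ (at the top). For a regular vertex $v \in I$ one has $f < \lev(v) < r$, so $C_{\leq f}$ consists of \emph{neighbours} of $v$ (placing $v$ in $C_F$ forces no additions towards them) and $I_{\geq r}$ consists of \emph{non-neighbours} of $v$ (placing $v$ in $C_F$ forces no deletions towards them); your quantity $\binom{m}{2}+(2k+2-m)$ has no direct analogue, and a different argument is needed -- for instance one tracking which vertices are forced into $C_F$ by their surviving edges to untouched vertices of $I_{\geq r}$, and what adjacencies to $v$ that forces. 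As written, the proposal supplies no such argument, so it does not establish the lemma.
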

\begin{proof}
  Observe that the two statements are equivalent and that it is sufficient to
  prove the forward direction of both statements.  First, we prove that $v \in
  C$ implies that $v \in C_F$.  Let~$Y$ be the set of \outlying{} vertices in $I
  \cap N_G(v)$ and recall that $|Y| > 2k+1$ by definition.  Observe that at most
  $2k$ vertices in~$Y$ are incident to~$F$ and hence there are two vertices $u,
  u'$ in~$Y$ that are untouched by~$F$.  Clearly,~$u$ and $u'$ are not adjacent
  in $G \triangle F$ and hence we can assume without loss of generality that~$u$
  is in $I_F$.  Since~$u$ is untouched by~$F$,~$v$ is adjacent to~$u$ by the
  definition of \outlying{} vertices and hence~$v$ is not in $I_F$.  A symmetric
  argument gives that $v \in I$ implies that $v \in I_F$ and hence our argument
  is complete.
\end{proof}

%
%

\subsubsection{The Irrelevant Vertex Rule}
\label{sec:irrel-vert-rule}

We have now obtained the structure necessary to give our irrelevant vertex rule.
But before stating the rule, we need to define these consecutive levels with
similar neighborhood and what it means for a vertex to be in the middle of such
a collection of levels.

\begin{definition}[Large strips, central vertices]
  \label{def:strip}
  Let~$G$ be a graph,~$X$ a \modulator{} and $(\mathcal{C},\mathcal{I})$ a
  threshold partition of $G-X$.  A \emph{strip} is a maximal set of consecutive
  levels which are all \regular{} and we say that a strip is \emph{large} if it
  contains at least $16k+13$ vertices.  For a strip $S =
  ([C_a,I_a],\dots,[C_b,I_b])$ a vertex $v \in C_i$ is \emph{central} if $a \leq
  i \leq b$ and $|\cup_{j \in [a,i-1]} C_j| \geq 2k+2$ and $|\cup_{j \in
    [i+1,b]} C_j| \geq 2k+2$.  Similarly we say that a vertex $v \in I_i$ is
  \emph{central} if $a \leq i \leq b$ and $|\cup_{j \in [a,i-1]} I_j| \geq 2k+2$
  and $|\cup_{j \in [i+1,b]} I_j| \geq 2k+2$.  Furthermore, we say that a
  vertex~$v$ is \emph{central in~$G$} if there exists a \modulator{}~$X$ of size
  at most $4k$ and a threshold decomposition of $G-X$ such that~$v$ is central
  in a large strip.
\end{definition}

\begin{lemma}
  \label{lemma:large-strip-have-central-vertex}
  If a strip is large it has a central vertex.
\end{lemma}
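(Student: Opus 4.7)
The plan is to combine the twin reduction rule with a two-step prefix-sum argument. Throughout I assume the twin reduction rule (Rule~\ref{redrule:twin-reduction}) has been applied exhaustively, so that every twin class in $G$ has size at most $2k+2$.

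First I would establish a size bound: for every $j \in [a,b]$, both $|C_j| \leq 2k+2$ and $|I_j| \leq 2k+2$. The argument is that since $L_j$ is regular, $C_j$ is not important, so no $Y \subseteq X$ has its extreme at $C_j$. Applying Lemma~\ref{lemma:sequential-realization} to the $Y$ realized by $C_j$ (where $C_j$ lies strictly between the lower and upper extremes of $Y$) shows that every vertex of $C_j$ realizes the same $Y$. Since $C_j$ is already a true twin class in $G-X$, this forces $C_j$ to be a true twin class in $G$, and the twin reduction rule yields $|C_j| \leq 2k+2$. The analogous false-twin argument bounds $|I_j|$. This is essentially the observation already used in the proof of Lemma~\ref{lemma:bounding-badness}.

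Let $c = \sum_{j \in [a,b]} |C_j|$ and $\iota = \sum_{j \in [a,b]} |I_j|$. Since $c + \iota \geq 16k+13 > 2(8k+6)$, at least one of $c,\iota$ is at least $8k+7$. By symmetry I assume $c \geq 8k+7$ and aim for a central clique vertex. Define the prefix sums $P_j = \sum_{\ell = a}^{j} |C_\ell|$, with $P_{a-1} = 0$, and let $j_1$ be the smallest index with $P_{j_1} \geq 2k+2$ (which exists since $P_b = c$). By minimality $P_{j_1 - 1} \leq 2k+1$, and by the size bound $P_{j_1} \leq P_{j_1-1} + |C_{j_1}| \leq 4k+3$. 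Next let $j^\star$ be the smallest index $> j_1$ with $|C_{j^\star}| \geq 1$; such an index exists, since otherwise $c = P_{j_1} \leq 4k+3$, contradicting $c \geq 8k+7$.

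I then claim any $v \in C_{j^\star}$ is central. By the choice of $j^\star$, $C_\ell = \emptyset$ for $j_1 < \ell < j^\star$, so $P_{j^\star-1} = P_{j_1} \geq 2k+2$, i.e., $|\bigcup_{\ell \in [a, j^\star-1]} C_\ell| \geq 2k+2$. On the upper side, $P_{j^\star} \leq P_{j_1} + |C_{j^\star}| \leq 4k+3 + 2k+2 = 6k+5$, so $|\bigcup_{\ell \in [j^\star+1, b]} C_\ell| = c - P_{j^\star} \geq 8k+7 - 6k - 5 = 2k+2$, which completes the claim. The hard part is the first step: without the size bound $|C_j| \leq 2k+2$, the prefix sum could jump over the window $[2k+2, c-2k-2]$ and no central clique vertex need exist. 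Once that bound is in hand, the rest is elementary arithmetic on prefix sums.
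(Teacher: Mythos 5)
Your proof is correct and follows essentially the same route as the paper's: after noting that regular fragments are twin classes of $G$ (via Lemma~\ref{lemma:sequential-realization} and the exhausted twin rule) and hence have size at most $2k+2$, you take the side of the strip with at least $8k+7$ vertices and locate the central vertex by a minimal-prefix argument, exactly as in the paper. Your additional step of skipping to the next non-empty clique fragment $C_{j^\star}$ merely makes explicit an edge case the paper leaves implicit (its chosen $C_i$ is non-empty because fragments in a threshold partition are twin classes), so the two arguments are the same in substance.
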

\begin{proof}
  Let $S = ([C_a, I_a],\dots,[C_b,I_b])$ be a large strip.  First, we consider the
  case when $|\cup_{i \in [a,b]} C_i| \geq |\cup_{i \in [a,b]} I_i|$.  Observe
  that $|\cup_{i \in [a,b]} C_i| \geq 8k+7$.  Let~$i$ be the smallest number
  such that $|\cup_{j \in [a,i-1]} C_j| \geq 2k+2$.  It follows immediately from
  $|C_{i-1}| \leq 2k+2$ that $|\cup_{j \in [a,i-1]} C_j| \leq 4k+3$.
  Furthermore, since $|C_i| \leq 2k+2$ it follows that $|\cup_{j \in [i+1,b]}
  C_j| \geq 8k+7-(2k+2+4k+3) = 2k+2$.  And hence any vertex in $C_i$ is central.
  A symmetric argument for the case $|\cup_{i \in [a,b]} C_i| < |\cup_{i \in
    [a,b]} I_i|$ completes the proof.
\end{proof}

\begin{redrule}[Irrelevant vertex rule]
  \label{redrule:irrelevant-vertex-reduction}
  If $(G,k)$ be an instance of \TC{} or \TE{} and~$v$ is a central vertex
  in~$G$, reduce to $(G-v, k)$.
\end{redrule}

\begin{lemma}
  \label{lemma:irrelevant-vertex-rule}
  Let $(G,k)$ be an instance,~$X$ a \modulator{} and~$v$ a central vertex
  in~$G$.  Then $(G,k)$ is a yes-instance of \TE{} (\TC{}) if and only if $(G-v,
  k)$ is a yes-instance.
\end{lemma}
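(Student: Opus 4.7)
The forward direction is routine. Starting from a solution $F$ for $(G, k)$, the restriction $F' := F \cap \binom{V(G) \setminus \{v\}}{2}$ has $|F'| \leq |F| \leq k$ and satisfies $(G-v) \triangle F' = (G \triangle F) - v$, which is an induced subgraph of the threshold graph $G \triangle F$, hence itself threshold by heredity of the class.

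For the backward direction, I would let $F'$ be any solution for $(G-v, k)$ and set $H := (G-v) \triangle F'$. The plan is to show that $F := F'$, viewed as a subset of $\binom{V(G)}{2}$ not incident to $v$, is still a solution for $(G, k)$; equivalently, that $G \triangle F'$---obtained from the threshold graph $H$ by re-inserting $v$ with its original $G$-neighborhood---is threshold. I would argue by contradiction: any obstruction $W$ in $G \triangle F'$ must contain $v$, since $H$ is threshold. The goal is then to exhibit a ``substitute'' vertex $v' \in V(G-v) \setminus W$ that is untouched by $F'$ and whose $G$-adjacencies to the three remaining vertices of $W$ match those of $v$. For such a $v'$, the graph induced on $(W \setminus \{v\}) \cup \{v'\}$ inside $H$ is isomorphic to $W$, giving an obstruction in $H$ and contradicting that $F'$ is a solution.

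Producing $v'$ is where the centrality hypothesis is used. Assume $v \in C_i$ (the case $v \in I_i$ is symmetric). The large strip $S = (L_a, \ldots, L_b)$ witnessing centrality provides at least $2k+2$ clique vertices strictly below $L_i$ and at least $2k+2$ strictly above, whereas $|F'| \leq k$ touches at most $2k$ vertices; on both sides of $v$ many untouched candidates remain. By Lemma~\ref{lemma:sequential-realization} applied inside the strip (whose levels are all non-extremal), every candidate $v' \in C_{j'}$ shares $v$'s neighborhood in $X$, and the nested-neighborhood structure of the threshold decomposition yields $N_G(v) \triangle N_G(v') \subseteq \{v, v'\} \cup \bigcup_{\ell} I_{\ell}$, where $\ell$ ranges over the independent fragments strictly between levels $i$ and $j'$. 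Since $W \setminus \{v\}$ has only three vertices, only a bounded number of independent fragments can block the swap; a pigeonhole over the many untouched candidates on each side of $v$ then selects a valid $j'$ for which the symmetric difference avoids $W \setminus \{v\}$.

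The main obstacle I anticipate is the case when $W \setminus \{v\}$ straddles $L_i$ on the independent side---for instance, containing one vertex in some $I_{j_1}$ with $j_1 < i$ and another in some $I_{j_2}$ with $j_2 \geq i$ inside the strip. Then every candidate $v' \in C_{j'}$ with $j' \neq i$ has $N_G(v) \triangle N_G(v')$ meeting $W \setminus \{v\}$, so $v'$ is forced to lie in $C_i \setminus \{v\}$, i.e., to be an untouched true twin of $v$. To close this case I would combine the twin reduction rule (which controls $|C_i|$) with a budget argument: if every vertex of $C_i \setminus \{v\}$ were incident to $F'$, those edits together with the ones $F'$ must place inside $W \setminus \{v\}$ in order for $W$ to be an obstruction would exceed $k$, contradicting $|F'| \leq k$. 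Alternatively, a preparatory swap that first relocates the offending independent vertex of $W$ inside its own (large, non-extremal) twin class reduces the problem to the generic case above, completing the argument.
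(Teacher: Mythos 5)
Your forward direction and the general ``substitute vertex'' framework for the backward direction are fine, and you correctly sense where the danger lies: the case in which $W\setminus\{v\}$ contains independent vertices straddling the level of $v$. But the proposal does not actually close that case, and it is precisely there that the paper has to do real work. First, a smaller issue: your pigeonhole in the ``generic'' case is misstated --- a single vertex of $W$ lying in $I_{i-1}$ (resp.\ $I_i$) blocks \emph{every} candidate $v'\in C_{j'}$ with $j'<i$ (resp.\ $j'>i$), no matter how many untouched candidates centrality provides on that side; what matters is whether an untouched candidate exists at a level strictly between the blocking vertex and $v$, and centrality gives no such guarantee. More seriously, both of your patches for the straddling case fail. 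Patch one presumes that $C_i\setminus\{v\}$ contains an untouched true twin of $v$, or that $F'$ touching all of $C_i\setminus\{v\}$ together with the edits inside $W\setminus\{v\}$ would exceed $k$; but Rule~\ref{redrule:twin-reduction} only bounds twin classes from \emph{above}, so $C_i$ may consist of $v$ alone and no budget contradiction is available. Patch two (re-locating the offending independent vertex of $W$ inside its twin class) does not help either: an untouched twin need not exist, and even if it does it sits at the same level, so the blocking configuration --- and hence the impossibility of a clique substitute for $v$ --- is unchanged.

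The paper's proof handles exactly this configuration by a different mechanism, which your proposal never invokes: the solution $F$ of $(G-v,k)$ is chosen to satisfy Lemma~\ref{lemma:no-switching-of-nestedness} (nestedness of neighborhoods is preserved by the edit). Using two untouched clique vertices $u$ below and $w,w'$ above $v$ in the strip, the paper shows that if no substitute exists then $W=\{v,a,b,y\}$ where $a,b$ are regular independent vertices with $\lev(a)<\lev(v)\leq\lev(b)$, and, via Lemma~\ref{lemma:good-stays-around}, that $a$ and $b$ remain on the independent side of $G\triangle F$, forcing $W$ to be the $P_4$ $b\,$--$\,v\,$--$\,y\,$--$\,a$ with $y$ a clique vertex adjacent to $a$ but not to $b$ in $G\triangle F$. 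Since $N_G(a)\subseteq N_G(b)$, this contradicts the nestedness-preserving choice of $F$. Some canonical-solution property of this kind (or an equivalent exchange argument on $F'$ itself) is the missing ingredient; without it, the straddling case in your plan remains open.
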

\begin{proof}
  For readability we only consider \TE{}, however the exact same proof works for
  \TC{}.  For the forwards direction, for any vertex~$v$, if $(G,k)$ is a
  yes-instance, then $(G-v,k)$ is also a yes-instance.  This holds since
  threshold graphs are hereditary.
  
  For the reverse direction,
  let $(G-v,k)$ be a yes-instance and assume for a contradiction that $(G,k)$ is
  a no-instance.
  Let~$F$ be a solution of $(G-v,k)$ satisfying
  Lemma~\ref{lemma:no-switching-of-nestedness}, and let $G' = G \triangle F$.
  By assumption, $(G,k)$ is a no-instance, so specifically, $G'$ is not a
  threshold graph.  Let~$W$ be an obstruction in~$G'$.
  Clearly $v \in W$ since otherwise there is an obstruction in $(G-v) \triangle
  F$, so consider $Z = V(W) - v$.  For convenience we will use $N'$ to denote
  neighborhoods in $G'$ and specifically for any set $Y \subseteq V(G')$,
  $N'_Y(v) = N_{G'}(v) \cap Y$.
  Furthermore, let $(\mathcal{C}, \mathcal{I})$ be a threshold decomposition of
  $G-X$ such that there is a large strip~$S$ for which~$v$ is central.  We will
  now consider the case when~$v$ is in the clique of $G-X$.  Since $|F| \leq k$
  and~$S$ is a large strip it follows immediately that there are two clique
  vertices~$w$ and $w'$ in~$S$ in higher levels than~$v$ that is not incident
  to~$F$.  Observe that $\left\{ w, w', v \right\}$ forms a triangle and
  that~$W$ contains no such subgraph.  Hence, we can assume without loss of
  generality that $w \notin V(W)$.  Similarly, we obtain a clique vertex~$u$ in
  a lower level than~$v$ in~$S$ such that $u \notin W$.
  
  \smallskip
  
  Observe that $G'[Z \cup \left\{ u \right\}]$ is not an obstruction and hence
  $N_Z(u) = N'_Z(u) \neq N'_Z(v) = N_Z(v)$.  Since~$u$ and~$v$ are clique
  vertices from the same strip it is true that $N_X(v) = N_X(u)$ and hence there
  is an independent vertex~$a$ in~$Z$ such that $\level(u) \leq \level(a) <
  \level(v)$ (see Definition~\ref{def:threshold-partition}).
  In other words~$u$ is adjacent to~$a$ while~$v$ and~$w$ are not.  By a
  symmetric argument we obtain a vertex~$b$ such that $\level(v) \leq \level(b)
  < \level(w)$, meaning that both~$u$ and~$v$ are adjacent to~$b$ while~$w$ is
  not.  Let~$y$ be last vertex of~$Z$, meaning that $\left\{ v,y,a,b \right\} =
  V(W)$.  Observe that~$a$ and~$b$ are \regular{} vertices and hence it follows
  from Lemma~\ref{lemma:good-stays-around} that for every threshold partition of
  $G'$ it holds that $\left\{ a,b \right\}$ are independent vertices.
  
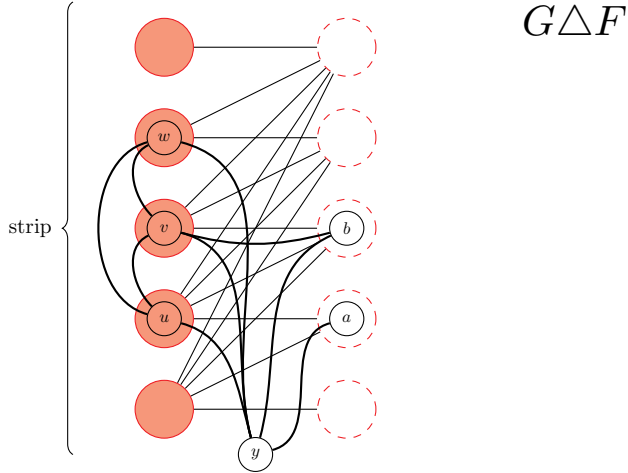
\begin{figure}[t]
  \centering
  \begin{tikzpicture}[every node/.style={circle,draw,scale=.8},scale=.6]
    \node[lab,scale=2] (GX) at (9,8.5) {$G \triangle F$};
    
    \node[scale=1.3, cliq] (c1) at (0, 8) {};
    \node[scale=1.3, cliq] (c2) at (0, 6) {};
    \node[scale=1.3, cliq] (c3) at (0, 4) {};
    \node[scale=1.3, cliq] (c4) at (0, 2) {};
    \node[scale=1.3, cliq] (c5) at (0, 0) {};
    
    \node[scale=1.3,indep] (i1) at (4, 8) {};
    \node[scale=1.3,indep] (i2) at (4, 6) {};
    \node[scale=1.3,indep] (i3) at (4, 4) {};
    \node[scale=1.3,indep] (i4) at (4, 2) {};
    \node[scale=1.3,indep] (i5) at (4, 0) {};
    
    \draw[decorate,decoration={brace,amplitude=4pt}] (-2,-1) --
    (-2,9) node [midway,fill=none,draw=none,xshift=-2em] {strip};

    \foreach \k in {1,...,5} {
      \foreach \j in {\k,...,5} {
        \draw (c\j) -- (i\k);
      }
    }
    
    \node[minimum size=2em,inner sep=0,scale=.8] (v) at (0, 4) {$v$};
    
    \node[minimum size=2em,inner sep=0,scale=.8] (va) at (0, 6) {$w$};
    
    \draw[thick] (v) to[out=135, in=205] (va);
    
    \node[minimum size=2em,inner sep=0,scale=.8] (vb) at (0, 2) {$u$};
    
    \draw[thick] (v)  to[out=205, in=135] (vb);
    \draw[thick] (va) to[out=195, in=165] (vb);

    \node[minimum size=2em,inner sep=0,scale=.8] (u) at (4, 4) {$b$};

    \draw[thick] (v) to[out=-15,in=195] (u);

    \node[minimum size=2em,inner sep=0,scale=.8] (ub) at (4, 2) {$a$};

    \node[minimum size=2em,inner sep=0,scale=.8] (x) at (2, -1) {$y$};
    
    \draw[thick] (ub) to[out=195, in=15] (x);
    \draw[thick] (va) to[out=-15, in=105] (x);
    \draw[thick] (vb) to[out=-15, in=105] (x);

    \draw[thick] (v) to[out=-15, in=105] (x);
    
    \draw[thick] (x) to[out=75, in=205] (u);
  \end{tikzpicture}
  \caption{The vertex $v$ was a center vertex in a strip and $W = \{v,a,b,y\}$ was
    assumed to be an obstruction.}
  \label{fig:irrelevant-vertex-rule}
\end{figure}

  \smallskip
  
  Recall that $u,v,w,a,b$ are all regular and hence they are in the same
  partitions in $G'$ as in $G-X$ by Lemma~\ref{lemma:good-stays-around}.
  Furthermore, since~$W$ is an obstruction and~$a$ is neither adjacent to~$v$
  nor~$b$ in $G'$ it holds that~$y$ and~$a$ are adjacent in $G'$.  It follows
  that~$y$ is a clique vertex in $G'$ and hence it is adjacent to both~$u$
  and~$w$ in $G'$.  Since~$u$ and~$w$ are not incident to~$F$ by definition,
  they are adjacent to~$y$ also in~$G$.  Since $u,v,w$ are regular and from the
  same strip it follows that~$v$ is adjacent to~$y$ in both~$G$ and $G'$.
  Observe that the only possible adjacency not yet decided in~$W$ is the one
  between~$b$ and~$y$.  However, for~$W$ to be an obstruction it should not be
  present.  Hence~$y$ is adjacent to~$a$ but not to~$b$ in $G'$.  By definition
  $N_G(a) \subseteq N_G(b)$, however by the last observation this is not true in
  $G'$.  This contradicts that~$F$ satisfies
  Lemma~\ref{lemma:no-switching-of-nestedness}.  A symmetric argument gives a
  contradiction for the case when~$v$ is an independent vertex and hence the
  proof is complete.
\end{proof}

\bigskip

The above lemma shows the soundness of the irrelevant vertex rule,
Rule~\ref{redrule:irrelevant-vertex-reduction}, and we may therefor apply it
exhaustively.  The following theorem wraps up the goal of this section.

\begin{theorem}
  \label{theorem:threshold-kernel}
  The following three problems admit kernels with at most $336k^2 + 388k + 92$
  vertices: \TD{}, \TC{} and \TE{}.
\end{theorem}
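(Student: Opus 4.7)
The plan is to run the two reduction rules of this section in a fixed order for \TC{} and \TE{} and then transfer the kernel to \TD{} via complementation, as already justified at the opening of Section~\ref{sec:kernel}. First I apply Rule~\ref{redrule:twin-reduction} exhaustively so that every twin class has size at most $2k+2$; Lemma~\ref{lemma:twin-reduction-valid} preserves equivalence and each application removes a vertex, so only polynomially many rounds are needed. Next I invoke Lemma~\ref{lemma:modulator-construction}: if it reports a no-instance I output a trivial constant-size no-instance, otherwise I obtain a \modulator{} $X$ with $|X| \le 4k$. Because an obstruction $W$ with $|V(W) \cap X| \le 1$ contains no pair from $[X]^2$ and hence cannot be destroyed by editing within $[X]^2$, every obstruction of $G$ must meet $X$ in at least two vertices; thus $G-X$ is $\{C_4,P_4,2K_2\}$-free, i.e.\ threshold by Proposition~\ref{prop:c4p42k2-free}, and Proposition~\ref{prop:threshold-decomposition} lets me fix a threshold partition $(\mathcal{C},\mathcal{I})$ of $G-X$. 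Finally I apply Rule~\ref{redrule:irrelevant-vertex-reduction} exhaustively: while some strip of $(\mathcal{C},\mathcal{I})$ is large, I locate a central vertex (which exists by Lemma~\ref{lemma:large-strip-have-central-vertex}) and delete it; Lemma~\ref{lemma:irrelevant-vertex-rule} preserves equivalence. After this phase every strip contains at most $16k+12$ vertices and the whole procedure runs in polynomial time.

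To bound the size of the reduced instance, I partition $V(G) = X \cup T \cup R$, where $T$ is the set of important and outlying vertices of $G-X$ and $R$ the set of regular ones. Immediately $|X| \le 4k$, and $|T| \le 80k^2 + 112k + 32$ is exactly Lemma~\ref{lemma:bounding-badness}. For $R$ I count strips: by definition, the outlying levels form a prefix $L_1, \dots, L_f$ and a suffix $L_r, \dots, L_t$ of the level ordering, so every strip lies inside the middle interval $[f+1, r-1]$, in which every level is either regular or important. Since there are at most $16k+4$ important levels in total (Lemma~\ref{obs:bounded-number-of-ugly-levels}), they partition this middle interval into at most $16k+5$ strips, and therefore $|R| \le (16k+5)(16k+12) = 256k^2 + 272k + 60$. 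Adding the three contributions gives $|V(G)| \le 4k + (80k^2+112k+32) + (256k^2+272k+60) = 336k^2 + 388k + 92$ as required.

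The main obstacle I anticipate is matching the quadratic coefficient of $336$ exactly. A sloppy bound that treats both important and outlying levels as possible strip boundaries yields a strip count of order $20k$ and a leading coefficient close to $400$. The critical observation to emphasize is that outlying levels sit at the extremes of the decomposition by definition and so never interleave with regular levels; therefore only the $O(k)$ important levels can fragment strips, and Lemma~\ref{obs:bounded-number-of-ugly-levels} is exactly what controls their count.
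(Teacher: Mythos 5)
Your proposal matches the paper's own proof of Theorem~\ref{theorem:threshold-kernel} essentially verbatim: apply Rules~\ref{redrule:twin-reduction} and~\ref{redrule:irrelevant-vertex-reduction}, use Lemma~\ref{lemma:modulator-construction} to obtain a \modulator{} of size at most $4k$ (or output a trivial no-instance), and count $4k + (80k^2+112k+32) + (16k+5)(16k+12) = 336k^2+388k+92$ via Lemma~\ref{lemma:bounding-badness}, Lemma~\ref{obs:bounded-number-of-ugly-levels}, and the fact that no strip is large once the irrelevant vertex rule is exhausted. The one caution is that the paper assumes \emph{both} rules are inapplicable at termination (deleting a central vertex can merge twin classes on the opposite side and re-trigger the twin rule), so it is safer to iterate the two rules until neither applies rather than committing to your strict fixed order; with that phrasing your argument is exactly the paper's.
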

\begin{proof}
  Assume that Rules~\ref{redrule:twin-reduction}
  and~\ref{redrule:irrelevant-vertex-reduction} have been applied exhaustively.
  If this process does not produce a \modulator{}, we can safely output a
  trivial no-instance by Lemma~\ref{lemma:modulator-construction}.  Hence, we
  can assume that we have a \modulator{}~$X$ of size at most $4k$ and that the
  reduction rules cannot be applied.  By Lemma~\ref{lemma:bounding-badness} we
  know that there are at most $80k^2+112k+32$ vertices in $G-X$ that are not
  \regular{}.  Furthermore, every \regular{} vertex is contained in a strip and
  by Lemma~\ref{obs:bounded-number-of-ugly-levels} there are at most $16k+5$
  such strips.  Since the reduction rules cannot be applied, no strip is large,
  and hence they contain at most $16k+12$ vertices each.  Since every vertex
  in~$G$ is either in~$X$, or considered \regular{}, \outlying{} or \important{}
  this gives us $4k + 80k^2 + 112k + 32 + (16k+5)(16k+12) = 336k^2 + 388k + 92$
  vertices in total.
\end{proof}

\subsection{Adapting the Kernel to Modification to Chain Graphs}
\label{sec:adapt-kern-modif}

In this section we provide kernels with quadratically many vertices for
\pname{Chain Deletion}, \pname{Chain Completion} and \pname{Chain Editing}.  Due
to the fundamental similarities between modification to chain and threshold
graphs we omit the full proof and instead highlight the differences between the
two proofs.  Observe that the only proofs for the threshold kernels that
explicitly applies the obstructions are those of
Lemmata~\ref{lemma:modulator-construction},~\ref{lemma:G-X-ordering}~and~\ref{lemma:irrelevant-vertex-rule}
and hence these will receive most of our attention.

The twin reduction rule goes through immediately and hence our first obstacle is
the modulator.  Luckily, this is a minor one.  Recall from
Definition~\ref{def:obstruction} that the obstructions now are $\mathcal{H} =
\{2K_2, C_3, C_5\}$; We thus get a \cmodulator{}~$X$ of size~$5k$, as the
largest obstruction contains five vertices.  Besides this detail, the proof goes
through exactly as it is.

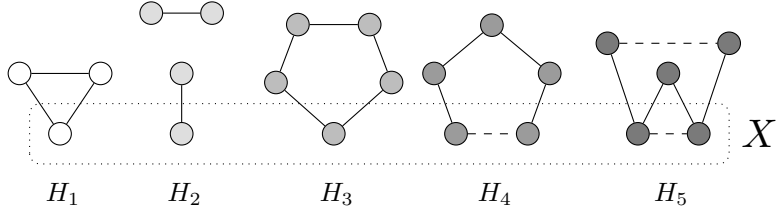
\begin{figure}[htp]
  \centering
  \begin{tikzpicture}
    \tikzset{scale=0.8}
    \tikzset{Vertex/.style={shape=circle,draw,scale=0.9}}
    \tikzset{Edge/.style={}}
    \tikzset{Dots/.style={scale=0.8}}

    \foreach \e/\x/\y [count=\k] in {
      f00/-1/0,
      f01/-1.666/1,
      f02/-0.333/1
    }
    \node[Vertex, fill=black!0] (\e) at (\x,\y) {};

    \foreach \e/\x/\y [count=\k] in {
      f10/1/0,
      f11/1/1,
      f12/0.5/2,
      f13/1.5/2
    }
    \node[Vertex, fill=black!13] (\e) at (\x,\y) {};
 
    \foreach \e/\x/\y [count=\k] in {
      f20/3.5/0,
      f21/2.55/0.85,
      f22/4.45/0.85,
      f23/2.91/1.81,
      f24/4.09/1.81
    }
    \node[Vertex, fill=black!26] (\e) at (\x,\y) {};  

    \foreach \e/\x/\y [count=\k] in {
      f30/6.1/1.8,
      f31/5.15/1.,
      f32/7.05/1.,
      f33/5.51/0,
      f34/6.69/0
    }
    \node[Vertex, fill=black!39] (\e) at (\x,\y) {};

    \foreach \e/\x/\y [count=\k] in {
      f40/8/1.5,
      f41/8.5/0,
      f42/9/1,
      f43/9.5/0,
      f44/10/1.5
    }
    \node[Vertex, fill=black!52] (\e) at (\x,\y) {};

%
%
%
%


    \node[] () at (-0.95,-1) {$H_1$};
    \node[] () at (1.05,-1) {$H_2$};
    \node[] () at (3.55,-1) {$H_3$};
    \node[] () at (6.15,-1) {$H_4$};
    \node[] () at (9.05,-1) {$H_5$};
    
    \foreach \a/\b in 
    {{f00/f01},{f01/f02},{f02/f00},{f10/f11},{f12/f13},
    {f20/f21},{f21/f23},{f23/f24},{f24/f22},{f22/f20},
    {f30/f31},{f31/f33},{f34/f32},{f32/f30},
    {f40/f41},{f41/f42},{f42/f43},{f43/f44}}
    \draw[Edge](\a) to node {} (\b);

    \foreach \a/\b in 
    {{f33/f34}, {f40/f44}, {f41/f43}}
    \draw[Edge, dashed](\a) to node {} (\b);

    \draw [dotted, rounded corners] (-1.5,-0.5) rectangle (10,0.5);
    \node[scale=1.5] () at (10.5,0) {$X$};
  \end{tikzpicture}
  \caption{Some of the intersections of an obstruction with a \cmodulator{} $X$
    that by definition will not occur. Dashed edges represent edges that could
    or could not be there. These are the intersections necessary for the proof
    of the kernel.}
  \label{fig:chain-modulator}
\end{figure}


\subsubsection{An Additional Step}
\label{sec:an-additional-step}

Before we continue with the remainder of the proof we need an additional step.
Namely to discard all vertices that are isolated in $G-X$.  We will prove that
by doing this we discard at most $O(k^2)$ vertices.  Now, if the irrelevant
vertex rule concludes that the graph is small, then the graph is small also when
we reintroduce the discarded vertices.  And if we find an irrelevant vertex, we
remove it and reintroduce the discarded vertices before we once again apply our
reduction rules.  Due to the locality of our arguments, this is a valid
approach.

\begin{lemma}
    \label{lemma:bounded-independent-vertices}
    For a graph~$G$ and a corresponding \cmodulator{}~$X$ there are at most
    $10k^2 + 12k + 2$ isolated vertices in $G-X$.
\end{lemma}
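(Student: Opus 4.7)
The plan is to partition $V_0$ into twin classes and to bound both the number of classes and the size of each class. First observe that if $u, v \in V_0$ satisfy $N_X(u) = N_X(v)$, then because both are isolated in $G - X$ we have $N_G(u) = N_X(u) = N_X(v) = N_G(v)$, so $u$ and $v$ are false twins in $G$. Hence each such class contains at most $2k+2$ vertices once Rule~\ref{redrule:twin-reduction} has been applied exhaustively, so it is enough to bound the number of distinct values in $\mathcal{F} = \{N_X(v) : v \in V_0\}$ by $|X|+1 \le 5k+1$.

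To bound $|\mathcal{F}|$, the plan is to establish the chain analogue of Lemma~\ref{lemma:G-X-ordering} for isolated vertices, showing that $\mathcal{F}$ is totally ordered by inclusion and hence forms a chain in the Boolean lattice on $X$, which has length at most $|X|+1$. Suppose for contradiction that $u, v \in V_0$ have non-nested neighborhoods, witnessed by $a \in N_X(u) \setminus N_X(v)$ and $b \in N_X(v) \setminus N_X(u)$. The induced subgraph on $\{u, a, v, b\}$ has edges $ua, vb$ and non-edges $uv, ub, va$, so it is a $2K_2$ when $ab \notin E(G)$ and a $P_4$ when $ab \in E(G)$. In the $2K_2$ case, the only pair of $[X]^2$ inside the obstruction is $ab$, so the chain-modulator property pins down the sole available fix as ``add $ab$''. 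If more than $|X|+1$ distinct neighborhoods were realised one aims to extract $k+1$ pairwise vertex-disjoint $2K_2$ obstructions whose fixes are pairwise independent, contradicting $|X|\le 5k$ via the iterative construction in Lemma~\ref{lemma:modulator-construction}.

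The main obstacle is the $P_4$ subcase, since $P_4$ is not itself a chain obstruction and therefore does not directly constrain the structure. To handle it one brings in a common neighbour $c \in N_X(u) \cap N_X(v)$ when such $c$ exists: if additionally $ac, bc \notin E$, then the induced subgraph on $\{a, u, c, v, b\}$ is an induced $C_5$ whose only pairs inside $[X]^2$ are $\{ab, ac, bc\}$, while if $ac \in E$ (resp.\ $bc \in E$) then $\{a, u, c\}$ (resp.\ $\{b, v, c\}$) is an induced $C_3$ whose only $[X]^2$-pair is $ac$ (resp.\ $bc$). In every subcase the modulator property forces the fix to use edges strictly inside $X$, so the same disjointness argument across many witnesses extends to this case and delivers the contradiction with $|X|\le 5k$.

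Combining the two steps, $|\mathcal{F}| \le |X|+1 \le 5k+1$, and hence
$|V_0| \le (5k+1)(2k+2) = 10k^2 + 12k + 2$, as required.
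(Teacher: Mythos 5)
Your first step (grouping the isolated vertices of $G-X$ by their neighborhood in $X$, bounding each class by $2k+2$ via Rule~\ref{redrule:twin-reduction}, and reducing the problem to showing $|\mathcal{F}|\leq|X|+1\leq 5k+1$) matches the paper. The gap is in the second step: the family $\mathcal{F}=\{N_X(v): v \text{ isolated in } G-X\}$ is \emph{not} totally ordered by inclusion, so the ``chain in the Boolean lattice'' argument cannot work. Two isolated vertices $u,v$ with private neighbors $a\in N_X(u)\setminus N_X(v)$ and $b\in N_X(v)\setminus N_X(u)$ are perfectly compatible with $X$ being a \cmodulator{}: when $ab\notin E$ the set $\{u,a,v,b\}$ induces a $2K_2$ whose only pair inside $[X]^2$ is $ab$, and adding $ab$ turns it into a $P_4$, which is not a chain obstruction --- so the modulator property is \emph{satisfied}, not violated. (Concretely, take $G$ to be two disjoint edges $ua$, $vb$ and $X=\{a,b\}$: this $X$ is a \cmodulator{}, yet $N_X(u)$ and $N_X(v)$ are disjoint.) For the same reason your fallback argument is unsound: obstructions whose repair lies inside $[X]^2$ are precisely the ones a modulator is allowed to have, so exhibiting many of them, even vertex-disjoint ones, contradicts neither the modulator property nor the bound $|X|\leq 5k$ coming from Lemma~\ref{lemma:modulator-construction}. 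Your $P_4$-subcase patch has the additional problem that the common neighbor $c\in N_X(u)\cap N_X(v)$ need not exist.

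What the paper actually proves is weaker than nestedness but still suffices: $\mathcal{F}$ is \emph{laminar} in the sense of Definition~\ref{def:laminar}, i.e.\ whenever $u'$ and $v'$ are private $X$-neighbors of two isolated vertices $u,v$, no \emph{third} isolated vertex $w$ can satisfy $\{u',v'\}\subseteq N_X(w)$ --- such a triple $u,v,w$ together with $u',v'$ would intersect $X$ as the forbidden pattern $H_5$ of Figure~\ref{fig:chain-modulator}, which no editing inside $[X]^2$ can clean up. Laminarity, via Lemma~\ref{lem:laminar-bounded}, gives $|\mathcal{F}|\leq|X|+1$ even though $\mathcal{F}$ may contain incomparable (e.g.\ disjoint) sets. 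The introduction of this third realizing vertex is the key idea your proposal is missing; without it the bound on $|\mathcal{F}|$ does not follow.
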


\begin{proof}
  Let~$I$ be the set of isolated vertices in $G-X$.  We will prove that
  $\mathcal{F} = \{N_X(v) \mid v \in I \}$ is laminar (see
  Definition~\ref{def:laminar}) and hence by Lemma~\ref{lem:laminar-bounded} it
  holds that $|\mathcal{F}| \leq |X|+1 \leq 5k+1$.  It follows immediately, due
  to the twin reduction rule, that there are at most $(5k+1)(2k+2) = 10k^2 + 12k
  + 2$ independent vertices in $G-X$.

  Assume for a contradiction that there are vertices $u, v$ and~$w$ in~$I$ such
  that there exists $u' \in N_X(u) \setminus N_X(v)$ and $v' \in N_X(v)
  \setminus N_X(u)$ with $\{u', v'\} \subseteq N_X(w)$.  These vertices
  intersect with the modulator as a variant of the forbidden~$H_5$ in
  Figure~\ref{fig:chain-modulator} and hence we get a contradiction.
\end{proof}

\subsubsection{Nested Neighborhoods}
\label{sec:nested-neighborhoods}

From now on we will assume in all of our arguments that there are no isolated
vertices in $G-X$.  The next difference is with respect to
Lemma~\ref{lemma:G-X-ordering}, which is just not true anymore.  The lemma
provided us with the nested structure of the neighborhoods in the modulator and
was crucial for most of the proofs.  As harmful as this appears to be at first,
it turns out that we can prove a weaker version that is sufficient for our
needs.

\begin{lemma}[New, weaker version of Lemma~\ref{lemma:G-X-ordering}]
  \label{lemma:chain-G-X-ordering}
  Let~$G$ be a graph and~$X$ a \cmodulator{}.  For every pair of vertices~$u$
  and~$v$ in the \emph{same bipartition} of $G-X$ it holds that either $N(u)
  \subseteq N(v)$ or $N(v) \subseteq N(u)$.
\end{lemma}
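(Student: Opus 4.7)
The plan is to mirror the proof of Lemma~\ref{lemma:G-X-ordering}. Assume for contradiction that $u$ and $v$ lie on the same side of the bipartition of $G-X$ but $N_G(u)$ and $N_G(v)$ are incomparable, and pick $u' \in N_G(u) \setminus N_G(v)$ together with $v' \in N_G(v) \setminus N_G(u)$. Let $W = G[\{u, v, u', v'\}]$. Since $u$ and $v$ are on the same bipartition side of $G-X$, we have $uv \notin E(G)$, while $uu', vv' \in E(G)$ and $uv', u'v \notin E(G)$ by the choice of $u'$ and $v'$; only the edge $u'v'$ is free.

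The first (easy) case is $u'v' \notin E(G)$, where $W$ is an induced $2K_2$ and hence a chain obstruction. If $u', v' \in V(G-X)$, both must lie on the side $B$ opposite to $u,v$, so $u'v' \notin E$ is automatic and $V(W) \cap X = \emptyset$, directly contradicting that $X$ is a chain-modulator. More generally, whenever $|V(W) \cap X| \le 1$, no edit set in $[X]^2$ can touch $W$, so the same contradiction applies.

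The new obstacle, not present in the threshold setting, is the case $u'v' \in E(G)$: then $W$ is an induced $P_4$, which is \emph{not} a chain obstruction. Here we invoke the standing assumption that $G-X$ has no isolated vertices. Choose a $G-X$-neighbor $a$ of $u$; observing that the existence of $v' \in N_G(v) \setminus N_G(u)$ on side $B$ of $G-X$ forces $N_{G-X}(u) \subsetneq N_{G-X}(v)$ by nestedness of neighborhoods on side $A$ of the chain graph $G-X$, we conclude $a \in N_{G-X}(v)$ as well. Now examine the induced subgraph on the five vertices $\{u, u', v', v, a\}$: the five cycle-edges $uu', u'v', v'v, va, au$ are present, while $uv, uv', u'v$ are absent by construction. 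The remaining two pairs $u'a, v'a$ are optional, but each disposition yields an obstruction of the right shape: if both are absent we obtain an induced $C_5$ on these five vertices; if $u'a \in E$ then $\{u,u',a\}$ induces a $C_3$; and symmetrically if $v'a \in E$ then $\{v,v',a\}$ induces a $C_3$.

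The delicate point, which is also where the main work lies, is to argue that the witnessing obstruction has at most one vertex in $X$, so that it cannot be destroyed by any $F \subseteq [X]^2$. When exactly one of $u',v'$ lies in $X$ this is immediate since $u,v,a \in V(G-X)$. When both $u',v' \in X$ the five-vertex $C_5$ contains two vertices of $X$ and could potentially be killed by editing $u'v'$; to handle this sub-case we replay the argument using the strict inclusion $N_{G-X}(u) \subsetneq N_{G-X}(v)$ to pick a witness $c \in N_{G-X}(v)\setminus N_{G-X}(u)$ and force $u' \sim c$ (else $\{u,u',v,c\}$ is a $2K_2$ with $|W\cap X|=1$), after which $\{u,u',c,v,a\}$ becomes a chordless $C_5$ whose only $X$-vertex is $u'$. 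This last step is the main obstacle: the argument for the ``balanced'' case where both $u'$ and $v'$ sit in $X$ requires carefully choosing the fifth vertex so that the resulting obstruction, $C_3$ or $C_5$, intersects $X$ in a single vertex, which then completes the contradiction.
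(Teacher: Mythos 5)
Your overall skeleton is the same as the paper's: use nestedness inside the chain graph $G-X$, use the standing no-isolated-vertices assumption to obtain a common $G-X$-neighbor $a$ of $u$ and $v$, and do a case analysis producing a $C_3$, a $2K_2$ or a $C_5$ that the \cmodulator{} cannot repair. The gaps are exactly in the configurations where both witnesses $u'$ and $v'$ lie in $X$. First, the sub-case $u'v' \notin E(G)$ with $u',v' \in X$ is never handled: your Case 1 only disposes of $|V(W)\cap X| \leq 1$, and your closing paragraph's ``both in $X$'' discussion is tied to the branch $u'v' \in E(G)$. For that $2K_2$ the unique $X$-internal pair $u'v'$ can be added, turning $W$ into a $P_4$, so the ``no edit set in $[X]^2$ can touch $W$'' argument does not apply and a further idea is required.

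Second, your repair of the both-in-$X$ $C_5$ case does not go through. You derive the strict inclusion $N_{G-X}(u) \subsetneq N_{G-X}(v)$ from ``$v'$ lies on side $B$ of $G-X$'', but in this very sub-case $v' \in X$, so that justification is unavailable; it is perfectly possible that $N_{G-X}(u) = N_{G-X}(v)$ (all distinguishing neighbors of $u$ and $v$ sit in $X$), in which case the witness $c \in N_{G-X}(v)\setminus N_{G-X}(u)$ you need for the chordless $C_5$ with a single $X$-vertex simply does not exist. (Obtaining $a \in N(v)$ itself is fine, but should be argued as the paper does: by nestedness in $G-X$ one may assume $N_{G-X}(u) \subseteq N_{G-X}(v)$ without any strictness.) The paper treats these two sub-cases differently: rather than hunting for an obstruction meeting $X$ in at most one vertex, it observes that the five-vertex set $\{u,v,u',v',a\}$ with $u',v' \in X$ realizes the excluded interaction $H_4$ of Figure~\ref{fig:chain-modulator} --- the only pair of this set inside $X$ is $u'v'$, and however that pair is set the five vertices still contain a chain obstruction (a $C_5$ if $u'v'$ is an edge, the $2K_2$ on $\{u,u',v,v'\}$ if it is not), so editing inside $[X]^2$ cannot clean this set; this is the same ``no matter which optional edges are present'' robustness used for the threshold case in Lemma~\ref{lemma:G-X-ordering}. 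To complete your proof you must either adopt that robustness argument for the two-$X$-vertex configurations or otherwise dispose of the case $N_{G-X}(u) = N_{G-X}(v)$; as written, the hardest case is exactly the one your construction cannot reach.
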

\begin{proof}
  Let~$u$ and~$v$ be two vertices from the same bipartition of $G-X$.  By the
  definition of chain graphs we can assume that $N_{G-X}(u) \subseteq
  N_{G-X}(v)$.  Assume for a contradiction that the lemma is not true.  Then
  there is a vertex $u' \in N_X(u) \setminus N_X(v)$ and a vertex~$v'$ in
  $N_X(v) \setminus N_X(u)$.  By definition,~$u$ and~$v$ are not adjacent.
  Since there are no isolated vertices in $G-X$ there is a vertex $a \in
  N_{G-X}(u) \subseteq N_{G-X}(v)$.  Observe that if~$a$ is adjacent to
  either~$u'$ or~$v'$ we get a~$C_3$ that only has one vertex in~$X$, which is a
  contradiction (see~$H_1$ in Figure~\ref{fig:chain-modulator}).  However,
  if~$a$ is not adjacent to both~$u'$ and~$v'$ then $\{u, v, u', v', a\}$ forms
  the same interaction with the modulator as~$H_4$ in
  Figure~\ref{fig:chain-modulator} and hence our proof is complete.
\end{proof}

One can observe that Lemma~\ref{lemma:chain-G-X-ordering} is a sufficiently
strong replacement for Lemma~\ref{lemma:G-X-ordering} since all proofs are
applying the lemma to vertices from only one partition of $G-X$.  The only
exception is the proof of Lemma~\ref{lemma:bounded-neighborhoods}, but by
applying Lemma~\ref{lemma:chain-G-X-ordering} on one partition at the time we
obtain the following bound instead:
\[
\left|\left\{N_X(v) \text{ for } v \in V(G) \setminus X \right\}\right| \leq 2|X|+2 .
\]

%
%

\subsubsection{An Irrelevant Vertex Rule}
\label{sec:an-irrelevant-vertex}

It only remains to prove that the irrelevant vertex rule can still be applied
with this new set of obstructions.  Although the strategy is the same, the
details are different and hence we provide the proof in full detail.

\begin{lemma}
  \label{lemma:chain-irrelevant-vertex-rule}
  Let $(G,k)$ be an instance,~$X$ a \modulator{} and~$v$ a central vertex
  in~$G$.  Then $(G,k)$ is a yes-instance of \pname{Chain Editing} (\pname{Chain
    Completion}) if and only if $(G-v, k)$ is a yes-instance.
\end{lemma}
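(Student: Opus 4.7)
The plan is to parallel the proof of Lemma~\ref{lemma:irrelevant-vertex-rule} with adjustments for the chain setting. The forward direction is immediate, since chain graphs are hereditary (Proposition~\ref{prop:chain-defs}). For the reverse direction, I would first establish an analog of Lemma~\ref{lemma:no-switching-of-nestedness} for chain graphs: there exists an optimal solution $F$ for $(G-v,k)$ such that for any two vertices $a,b$ on the same side of the bipartition of $G' = (G-v) \triangle F$, if $N_{G-v}(a) \subseteq N_{G-v}(b)$ then $N_{G'}(a) \subseteq N_{G'}(b)$. This is proved by the same swap argument, using Lemma~\ref{lemma:chain-G-X-ordering} in place of Lemma~\ref{lemma:G-X-ordering}.

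With such an $F$ fixed, assume for contradiction that $G' = G \triangle F$ contains an obstruction $W \in \{2K_2, C_3, C_5\}$. Since $(G-v) \triangle F$ is a chain graph, $W$ must contain $v$. Because $v$ is central in a large strip $S$ of some chain decomposition $(\mathcal{A},\mathcal{B})$ of $G-X$, I can extract two regular vertices $u$ and $w$ in $S$, on the same bipartition side as $v$, with $u$ strictly below and $w$ strictly above $v$, both untouched by $F$. Then $u,v,w$ have identical neighborhoods in $X$ and strictly nested neighborhoods in $G-X$. A chain-version of Lemma~\ref{lemma:good-stays-around} (with the same proof, using isolated counts of outlying vertices on the opposite bipartition side) keeps $u,v,w$ on the same side of the bipartition in $G'$. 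Neither $u$ nor $w$ can occur in $W$: otherwise substituting $v$ with $u$ (resp.\ $w$) would produce an obstruction in $G' - v$, contradicting that $F$ is a solution.

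The remainder splits by the type of $W$. If $W = 2K_2$, the argument is essentially identical to the threshold case: one locates vertices $a,b \in Z := V(W)\setminus\{v\}$ where the neighborhoods of $u$ and $v$ (respectively $v$ and $w$) differ, concludes $a,b$ lie on the opposite bipartition side, and then the $2K_2$ structure combined with $N_{G-X}(u) \subsetneq N_{G-X}(v) \subsetneq N_{G-X}(w)$ violates the preserved-nestedness property of $F$. For $W = C_3$ or $W = C_5$, bipartiteness of $G-X$ (Proposition~\ref{prop:chain-defs}) forces the odd cycle to use at least one vertex of $X$; a case analysis on which vertices of $W$ lie in $X$, combined with the fact that $u$ and $w$ are adjacent to the same modulator vertices as $v$ (but with strictly monotone adjacencies into the opposite bipartition side), allows one to substitute $u$ or $w$ for $v$ and produce a short odd cycle already present in $G' - v$, contradicting that $G'-v$ is a chain graph.

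The main obstacle is the $C_5$ case. In the threshold proof, every obstruction has four vertices and there is essentially one pattern of ``private'' neighbors to block, which is ruled out by the nestedness property. Here $C_5$ has five vertices and spans both sides of the bipartition in multiple configurations, so the substitution argument must exclude several ways that $u$ or $w$ could fail to replace $v$ cleanly. I expect the correct bookkeeping to be: track the position of $v$ in the cycle, use that $u$ (resp.\ $w$) is adjacent in $G$ to a strictly smaller (resp.\ larger) set of opposite-side regular vertices than $v$, and apply Lemma~\ref{lemma:chain-G-X-ordering} to force the two neighbors of $v$ in the cycle to lie on the opposite bipartition side; this reduces the case to one already ruled out by the nestedness preservation of $F$.
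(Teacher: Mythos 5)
Your overall plan---the hereditary forward direction, a nestedness-preserving optimal solution, untouched regular vertices $u$ and $w$ below and above the central vertex $v$ in its large strip, and a case split over the obstruction type---matches the paper, and your treatment of the $2K_2$ case is essentially the paper's. But the odd-cycle cases contain a genuine gap. You claim that bipartiteness of $G-X$ forces a $C_3$ or $C_5$ in $G' = G\triangle F$ to meet $X$; this is false, because the obstruction lives in the \emph{edited} graph, and edges of $F$ between vertices of $G-X$ can create odd cycles disjoint from $X$ (for instance $v$ with two $G$-neighbours $p,q$ outside $X$ on the opposite side, joined by an added edge $pq$). Moreover you leave the $C_5$ bookkeeping as something you ``expect'' to work, and the route you sketch (reduce it to the nestedness property of $F$) is not how that case closes. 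The paper's argument is different: since $G'[Z\cup\{u\}]$ sits inside the chain graph $(G-v)\triangle F$ it is not an obstruction, so $N'_Z(u)\neq N'_Z(v)$, and since $N_X(u)=N_X(v)$ this yields regular opposite-side vertices $a,b\in Z$ with $u$ adjacent to $a$ while $v,w$ are not, and $u,v$ adjacent to $b$ while $w$ is not. The chain version of Lemma~\ref{lemma:good-stays-around} places $a$ and $b$ on the independent side of a chain decomposition of the solved graph, so they are non-adjacent and $W$ cannot be a $C_3$; if $W$ is a $C_5$ with remaining vertices $x,y$, the degree-two condition forces both $x$ and $y$ to be adjacent to $a$, hence to lie on the other side, so $\{u,x,y\}$ is independent in $(G-v)\triangle F$ and therefore $\{v,x,y\}$ is independent in $G'$---impossible, as a $C_5$ has no independent set of size three. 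Only the $2K_2$ case is the one finished by the nestedness property, as you describe.

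A smaller but real flaw: your justification that $u,w\notin V(W)$ (``otherwise substituting $v$ with $u$ would produce an obstruction in $G'-v$'') does not work---if $u$ already lies in $W$ there is nothing to substitute, and $u$ and $v$ are not interchangeable in $G'$ (indeed the core of the argument is precisely that $N'_Z(u)\neq N'_Z(v)$). The paper gets this for free by taking \emph{two} untouched vertices $w,w'$ above $v$ (and similarly below): $\{v,w,w'\}$ would be an independent set of size three, which no chain obstruction contains, so at least one of the two candidates avoids $W$. This is easy to patch, but as written the step is unjustified.
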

\begin{proof}
  For readability we only consider \pname{Chain Editing}, however the exact same
  proof works for \pname{Chain Completion}.  For the forwards direction, for any
  vertex~$v$, if $(G,k)$ is a yes-instance, then $(G-v,k)$ is also a
  yes-instance.  This holds since chain graphs are hereditary.
  
  For the reverse direction,
  let $(G-v,k)$ be a yes-instance and assume for a contradiction that $(G,k)$ is
  a no-instance.
  Let~$F$ be a solution of $(G-v,k)$ satisfying
  Lemma~\ref{lemma:no-switching-of-nestedness}, and let $G' = G \triangle F$.
  By assumption, $(G,k)$ is a no-instance, so specifically,~$G'$ is not a chain
  graph.  Let~$W$ be an obstruction in~$G'$.
  Clearly $v \in W$, since otherwise there is an obstruction in $(G-v) \triangle
  F$.  Let $Z = V(W) - v$.  For convenience we will use~$N'$ to denote
  neighborhoods in~$G'$ and specifically for any set $Y \subseteq V(G')$,
  $N'_Y(v) = N_{G'}(v) \cap Y$.
  Furthermore, let $(\mathcal{A}, \mathcal{B})$ be a chain decomposition of
  $G-X$ such that there is a large strip~$S$ for which~$v$ is central.  Let $A =
  \cup \mathcal{A}$ and $B = \cup \mathcal{B}$.  We will now consider the case
  when~$v$ is in~$A$.  Since $|F| \leq k$ and~$S$ is a large strip it follows
  immediately that there are two vertices~$w$ and~$w'$ in $A \cap S$ in higher
  levels than~$v$ that is not incident to~$F$.  Observe that $\left\{ w, w', v
  \right\}$ forms an independent set of size three and that~$W$ contains no such
  subgraph.  Hence, we can assume without loss of generality that $w \notin
  V(W)$.  Similarly, we obtain a vertex~$u$ in~$A$ at a lower level than~$v$
  in~$S$ such that $u \notin W$.
  
  \smallskip
  
  Observe that $G'[Z \cup \left\{ u \right\}]$ is not an obstruction and hence
  $N_Z(u) = N'_Z(u) \neq N'_Z(v) = N_Z(v)$.  Since~$u$ and~$v$ are vertices
  in~$A$ from the same strip it is true that $N_X(v) = N_X(u)$ and hence there
  is a vertex~$a$ in $Z \cap B$ such that $\level(u) \leq \level(a) <
  \level(v)$.  In other words~$u$ is adjacent to~$a$, while~$v$ and~$w$ are not.
  By a symmetric argument we obtain a vertex~$b$ such that $\level(v) \leq
  \level(b) < \level(w)$, meaning that both~$u$ and~$v$ are adjacent to~$b$
  while~$w$ is not.  We now fix a chain decomposition $(\mathcal{A}',
  \mathcal{B}')$ and let $A' = \cup \mathcal{A}'$ and $B' = \cup \mathcal{B}'$.
  Observe that~$a$ and~$b$ are \regular{} vertices and hence it follows from the
  chain version of Lemma~\ref{lemma:good-stays-around} that $\left\{ a,b
  \right\}$ is in~$B'$.  This yields immediately that~$W$ is not a~$C_3$
  (since~$a$ and~$b$ are not adjacent) and hence we are left the cases of~$W$
  being a~$2K_2$ or a~$C_5$.
  
  We now consider the case when~$W$ is isomorphic to a $2K_2$.  Let~$y$ be the
  last vertex of~$Z$, meaning that $\left\{ v,y,a,b \right\} = V(W)$.  Observe
  that since~$W$ is a $2K_2$ it holds that~$y$ is adjacent to~$a$, but not
  to~$b$.  However, in~$G$ it holds that $N(a) \subseteq N(b)$ and hence~$F$ is
  not satisfying Lemma~\ref{lemma:no-switching-of-nestedness}, which is a
  contradiction.
  
  Hence we are left with the case that~$W$ is isomorphic to a~$C_5$.  Let $y, x$
  be the last vertices of~$Z$.  Observe that all vertices in~$W$ should be of
  degree two and hence~$a$ is adjacent to both~$x$ and~$y$.  Recall that~$a$ is
  in~$B'$ and observe that~$u$ is in~$A'$ by the same reasoning.  Due to their
  adjacency to~$a$, also~$x$ and~$y$ is in~$A'$.  It follows immediately that
  $u, x$ and~$y$ form an independent set in $(G-v) \triangle F$.  Since~$u$
  and~$v$ are not touched by~$F$ and in the same strip it follows that $v, x$
  and~$y$ form an independent set in~$G'$.  We observe that by this~$W$ can not
  be isomorphic to a~$C_5$.  The argument for the case when $v \in B$ is
  symmetrical and hence the proof is complete.
\end{proof}

We immediately obtain our kernelization results for modifications into chain
graphs by the same wrap up as for threshold graphs.

\begin{theorem}
  \label{theorem:chain-kernel}
  The following three problems admit kernels with at most $O(k^2)$ vertices:
  \pname{Chain Deletion}, \pname{Chain Completion} and \pname{Chain Editing}.
\end{theorem}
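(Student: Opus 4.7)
The plan is to mirror the wrap-up proof of Theorem~\ref{theorem:threshold-kernel}, substituting in the chain-specific analogues that have already been set up in this subsection. Concretely, I would first apply the twin reduction rule (Rule~\ref{redrule:twin-reduction}), which works verbatim for chain obstructions, then invoke the chain analogue of Lemma~\ref{lemma:modulator-construction} to either output a trivial no-instance or produce a \cmodulator{} $X$ of size at most $5k$ (the $5$ comes from the largest obstruction $C_5$ having five vertices). Next, I would discard all vertices that are isolated in $G-X$; by Lemma~\ref{lemma:bounded-independent-vertices} there are at most $10k^2 + 12k + 2$ such vertices, so this contributes only $O(k^2)$ to the final count.

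With isolated vertices gone, the chain versions of the structural lemmas apply. In particular, Lemma~\ref{lemma:chain-G-X-ordering} plays the role of Lemma~\ref{lemma:G-X-ordering} on each bipartition class, and the remark after it yields that the number of realized subsets of $X$ in $V(G)\setminus X$ is at most $2|X|+2 \leq 10k+2$. This gives the chain-analogue of Lemma~\ref{lemma:bounded-neighborhoods}. Propagating this through the proofs of Lemmata~\ref{obs:bounded-number-of-ugly-levels},~\ref{lemma:bounded-number-of-ugly-levels},~\ref{lemma:bounding-uglyness},~\ref{lemma:bounding-badness}~and~\ref{lemma:good-stays-around} (applied to each bipartition class separately) just bumps the constants, so there are still only $O(k^2)$ vertices of $G-X$ that are \important{} or \outlying{}.

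Now I would apply the irrelevant vertex rule exhaustively, using Lemma~\ref{lemma:chain-irrelevant-vertex-rule} for soundness. Once it no longer applies, no strip is large, so by Lemma~\ref{lemma:large-strip-have-central-vertex} every strip contains at most $16k + 12$ vertices. The number of strips is at most one more than the number of \important{} levels, which is $O(k)$ by the chain version of Lemma~\ref{obs:bounded-number-of-ugly-levels}. Summing $|X| \le 5k$, the $O(k^2)$ isolated vertices in $G-X$, the $O(k^2)$ non-\regular{} vertices, and the $O(k) \cdot O(k) = O(k^2)$ \regular{} vertices in strips, we obtain a total of $O(k^2)$ vertices as required.

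The main obstacle, already surmounted in the excerpt, is that Lemma~\ref{lemma:G-X-ordering} is false across the bipartition for chain graphs; the fix (Lemma~\ref{lemma:chain-G-X-ordering}) only nests neighborhoods on the same side, and this is exactly why one first has to prune isolated vertices (so that the argument inside that lemma can produce the witness $a \in N_{G-X}(u)$) and why the realized-sets bound degrades by a factor of two. Beyond that, the proof is essentially a bookkeeping exercise paralleling the threshold case, and once the four ingredient lemmas (\modulator{} existence, nested neighborhoods on each side, isolated-vertex bound, and irrelevant vertex rule) are in place, the $O(k^2)$ vertex bound falls out by the same summation.
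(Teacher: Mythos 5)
Your proposal matches the paper's own argument: the paper proves the chain kernel by exactly this adaptation — twin rule unchanged, a $5k$-size \cmodulator{}, discarding the $O(k^2)$ isolated vertices of $G-X$ via Lemma~\ref{lemma:bounded-independent-vertices}, the per-side nesting Lemma~\ref{lemma:chain-G-X-ordering} with the $2|X|+2$ realized-sets bound, the chain irrelevant vertex rule, and then the same wrap-up summation as Theorem~\ref{theorem:threshold-kernel}. The reasoning and the identified subtlety (failure of Lemma~\ref{lemma:G-X-ordering} across the bipartition, hence the isolated-vertex pruning) are exactly the paper's; no gaps.
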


\section{Subexponential Time Algorithms}
\label{sec:subept}

\subsection{Threshold Editing in Subexponential Time}
\label{sec:subept-threshold}

In this section we give a subexponential time algorithm for \pname{Threshold
  Editing}.  We also show that we can modify the algorithm to work with
\pname{Chain Editing}.  Combined with the results of Fomin and
Villanger~\cite{fomin2012subexponential} and Drange et al.~\cite{DrangeFPV14},
we now have complete information on the subexponentiality of edge modification
to threshold and chain graphs.  In this section we aim to prove the following
theorem:

\begin{theorem}
  \label{thm:t-e-subept}
  \pname{Threshold Editing} admits a $2^{O(\sqrt k \log k)} + \poly(n)$
  subexponential time algorithm.
\end{theorem}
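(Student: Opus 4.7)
The plan is to combine the $O(k^2)$-vertex kernel of Theorem~\ref{thm:te-tc-td-kernel} with a bounded enumeration of candidate threshold decompositions for the edited graph, verifying each in polynomial time. First, I would apply the kernel so that $n = O(k^2)$ from now on. By Proposition~\ref{prop:threshold-decomposition}, a yes-certificate corresponds to choosing a threshold decomposition $(\mathcal{C},\mathcal{I})$ of $V(G)$ whose realizing threshold graph $H$ satisfies $|E(G)\triangle E(H)|\leq k$, and once $(\mathcal{C},\mathcal{I})$ is fixed the cost is computable in polynomial time. So it suffices to enumerate a family of $2^{O(\sqrt k\log k)}$ candidate decompositions guaranteed to contain some optimal solution.

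The heart of the argument, and the step I expect to be the main obstacle, is a structural lemma in the spirit of Fomin and Villanger~\cite{fomin2012subexponential}: there exists an optimal solution $F$ and a threshold decomposition of $G\triangle F$ in which at most $O(\sqrt k)$ vertices—call them the \emph{skeleton}—have a level that is not locally forced by their neighborhoods in the kernel. The intuition is the standard square-root trade-off: displacing a single vertex by $d$ levels from its canonical position (the one suggested by the nested-neighborhood ordering of $G$, by Lemma~\ref{lemma:G-X-ordering} applied to the kernel's \modulator{}) forces $\Omega(d)$ edits on its adjacencies, so a budget of $k$ edits supports at most $O(\sqrt k)$ vertices displaced by $\Omega(\sqrt k)$ each, while vertices displaced by less can be absorbed locally. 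Once the skeleton of $O(\sqrt k)$ vertices and their assigned levels is guessed, the placement of the remaining vertices is determined by a polynomial-time dynamic program which sweeps levels top-to-bottom and, for each remaining vertex, picks the level minimizing the induced edit cost consistent with the skeleton and with the nestedness required by Proposition~\ref{prop:threshold-nested-neighborhoods}.

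The enumeration cost on the kernel is
\[
\binom{n}{O(\sqrt k)} \cdot n^{O(\sqrt k)} \;=\; 2^{O(\sqrt k\log n)} \;=\; 2^{O(\sqrt k\log k)},
\]
and each guess is verified in $\poly(k)$ time; the kernelization contributes an additive $\poly(n)$ term, yielding the claimed $2^{O(\sqrt k\log k)}+\poly(n)$ bound. For \CE{} the same blueprint applies with the chain kernel of Theorem~\ref{theorem:chain-kernel} in place of Theorem~\ref{thm:te-tc-td-kernel}: chain decompositions share the layered structure of threshold decompositions (Proposition~\ref{prop:chain-defs}), so the skeleton argument and dynamic program transfer directly once the obstruction set is replaced by $\{2K_2,C_3,C_5\}$ and Lemma~\ref{lemma:chain-G-X-ordering} is used in place of Lemma~\ref{lemma:G-X-ordering}.
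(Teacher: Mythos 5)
There is a genuine gap, and it sits exactly where you expect it: the ``skeleton'' lemma you defer to is the entire difficulty, and as formulated it is neither well-defined nor provable by the displacement argument you sketch. Your canonical position is supposed to come from ``the nested-neighborhood ordering of $G$, by Lemma~\ref{lemma:G-X-ordering} applied to the kernel's \modulator{}'', but that lemma only says that $G$-neighborhoods of vertices \emph{outside} a \modulator{} $X$ are nested; the up to $4k$ vertices of $X$ get no canonical level at all, and even for the vertices of $G-X$ a nested order of $G$-neighborhoods does not assign a level in the solution's threshold decomposition, because the number and composition of the levels of $G\triangle F$ is itself part of what has to be determined. The bound ``displacing a vertex by $d$ levels forces $\Omega(d)$ edits'' is only meaningful once the target decomposition (in particular, which vertices occupy the intermediate levels) is fixed, so the argument is circular; what is actually true, and what the paper uses, is only the weaker Observation~\ref{obs:cheap-vs-small} that at most $2\sqrt k$ vertices are incident to more than $2\sqrt k$ edits --- being cheap is very far from having a ``locally forced'' level. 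A second, independent problem is the claimed polynomial-time placement of the non-skeleton vertices: the cost of assigning such a vertex to a level depends on the levels chosen for the \emph{other} unguessed vertices (every clique vertex must be adjacent to every independent vertex at the same or higher level), so a sweep that ``picks the level minimizing the induced edit cost consistent with the skeleton'' ignores cheap--cheap interactions and has no correctness argument.

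The paper's proof is organized precisely to manufacture the facts your proposal assumes. It first fixes the clique/independent sides by enumerating the subexponentially many potential split partitions (Lemma~\ref{lemma:list-relevant-split-partitions}), reducing to \STE{} so that all edits go between $C$ and $I$. It then guesses the expensive vertices, introduces splitting pairs and unbreakable segments, and proves that an unbreakable instance has at most $2\sqrt k+1$ levels with a transfer level (Lemmata~\ref{lemma:unbreakable-graph-divind-level} and~\ref{lemma:unbreakable-graph-few-levels}) and, crucially, that the cheap vertices induce a \emph{complete split graph} in the edited graph (Lemma~\ref{lemma:unbreakable-graph-cheap-complete}); only because of this last fact is the greedy placement of cheap vertices valid, since their variable cost involves only the already-guessed expensive vertices. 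General instances are then handled by a divide-and-conquer over splitting pairs with memoization (Lemmata~\ref{lemma:subept-correctness} and~\ref{lemma:subept-time}). Your outer shell (kernel of Theorem~\ref{thm:te-tc-td-kernel}, $2^{O(\sqrt k\log k)}$ enumeration, $\poly$ per guess, and the transfer to \CE{}) matches the paper, but without an analogue of the split-partition reduction, the segment/level bound, and the complete-split-graph property, the central enumeration-plus-DP step does not go through as stated.
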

The additive $\poly(n)$ factor comes from the kernelization procedure of
Section~\ref{sec:kernel}.  The remainder of the algorithm operates on the
kernel, and thus has running time that only depends on~$k$.

We will throughout refer to a \emph{solution}~$F$.  In this case, we are
assuming a given input instance~$(G,k)$, and then~$F$ is a set of at most~$k$
edges such that~$G \triangle F$ is a threshold graph.  In the next section,
Section~\ref{sec:subept-chain}, we will assume~$G \triangle F$ to be a chain
graph.  Furthermore, after Section~\ref{sec:getting-partition}, we will be
working with the problem \STE{}, so we assume~$F \subseteq C \times I$
when~$(C,I)$ is the split partition of~$G$.

\begin{definition}[Potential split partition]
  Given a graph~$G$ and an integer~$k$ (called the budget), for~$C$ and~$I$ a
  partitioning of~$V(G)$ we call~$(C,I)$
  a \emph{potential split partition} of~$G$ provided that
  \[
  \binom{|C|}{2} - E(C) + E(I) \leq k .
  \]
  That is, the cost of making~$G$ into a split graph with the prescribed
  partitioning does not exceed the budget.
\end{definition}

\paragraph{A brief explanation of the algorithm for
  Theorem~\ref{thm:t-e-subept}.}
%
%
The algorithm consists of four parts, the first of which is the kernelization
algorithm described in Section~\ref{sec:kernel}.  This gives in polynomial time
an equivalent instance~$(G,k)$ with the guarantee that $|V(G)| = O(k^2)$.  We
may observe that this is a \emph{proper kernel}, i.e., the reduced instance's
parameter is bounded by the original parameter.  This allows us to use time
subexponential in the kernelized parameter.

\smallskip

The second step in the algorithm selects a potential split partitioning of~$G$.
We show that the number of such partitionings is bounded subexponentially
in~$k$, and that we can enumerate them all in subexponential time.
This step actually also immediately implies that editing%
\footnote{Indeed, editing to split graphs is solvable in linear
  time~\cite{hammer1981splittance}.},
completing and deleting to split graphs can be solved in subexponential time,
however all of this was known~\cite{hammer1981splittance,ghosh2013faster}.  The
main part of this step is Lemma~\ref{lemma:list-relevant-split-partitions}.
For the remainder of the algorithm, we may thus assume that the input instance
is a split graph, and that the split partition needs to be preserved, that is,
we focus on solving \STE.

\smallskip

The third and fourth steps of the algorithm consists of repeatedly finding
special kind of separators and solving structured parts individually;
Step three consists of locating so-called \emph{cheap vertices} (see
Definition~\ref{def:cheap} for a formal explanation).  These are vertices,~$v$,
whose neighborhood is almost correct, in the sense that there is an optimal
solution in which~$v$ is incident to only~$O(\sqrt k)$ edges.  The dichotomy of
cheap and expensive vertices gives us some tools for decomposing the graph.
Specific configurations of cheap vertices allow us to extract three parts, one
part is a highly structured part, the second part is a provably small part which
me may brute force, and the last part we solve recursively.  All of which is
done in subexponential time $2^{O(\sqrt k \log k)}$.

\bigskip

Henceforth we will have in mind a ``target graph''~$H = G \triangle F$ with
threshold partitioning~$(\C, \I)$.  We refer to the set of edges~$F$ as the
\emph{solution}, and assume $|F| \leq k$.

\subsubsection{Getting the Partition}
\label{sec:getting-partition}

As explained above, a crucial part of the algorithm is to enumerate all sets of
size at most $O(\sqrt k)$.  The following lemma shows that this is indeed doable
and we will use the result of this lemma throughout this section without
necessarily referring to it.

\begin{lemma}
  \label{lem:enum-small}
  For every $c \in \mathbb{N}$ there is an algorithm that, given an input instance
  $(G,k)$ with $|V(G)| = k^{O(1)}$ enumerates all vertex subsets of size $c
  \sqrt{k}$ in time $2^{O(\sqrt{k} \log k)}$.
\end{lemma}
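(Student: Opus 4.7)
The plan is simply to count: the number of subsets of size $c\sqrt{k}$ from a universe of size $n = k^{O(1)}$ is $\binom{n}{c\sqrt{k}}$, and I would upper bound this by $n^{c\sqrt{k}}$. Writing $n \leq k^{d}$ for some constant $d$ (which exists by the assumption $|V(G)| = k^{O(1)}$, i.e., by the fact that the kernelization guarantees $|V(G)| = O(k^2)$), I get
\[
\binom{n}{c\sqrt{k}} \leq n^{c\sqrt{k}} \leq k^{dc\sqrt{k}} = 2^{dc\sqrt{k}\log k} = 2^{O(\sqrt{k}\log k)}.
\]
So there are only subexponentially many such subsets to begin with.

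For the algorithmic part I would produce them by a standard lexicographic enumeration: fix any ordering of $V(G)$, and generate subsets of size $c\sqrt{k}$ in lex order by the usual combinatorial successor rule (or, equivalently, a depth-first recursion that at each step either includes the current vertex in the subset or skips it, pruning when the remaining vertices cannot complete a subset of the required size). Each successor can be computed in time polynomial in $n$, hence in $k^{O(1)}$, so the total running time is the number of subsets times a $k^{O(1)}$ overhead, which is still $2^{O(\sqrt{k}\log k)}$ since the polynomial factor is absorbed in the exponent.

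There is no real obstacle here; the only thing to be a bit careful about is ensuring the polynomial-in-$n$ per-subset overhead is genuinely absorbed into the $2^{O(\sqrt{k}\log k)}$ bound, which holds because $k^{O(1)} = 2^{O(\log k)} \leq 2^{O(\sqrt{k}\log k)}$. The same argument, applied for any constant $c$, yields the statement uniformly in $c$.
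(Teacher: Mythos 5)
Your proposal is correct and matches the paper's argument: both bound the number of relevant subsets by $n^{c\sqrt{k}} = 2^{O(\sqrt{k}\log n)} = 2^{O(\sqrt{k}\log k)}$ using $n = k^{O(1)}$, and note that a straightforward enumeration with polynomial-per-subset overhead stays within this bound. The only cosmetic difference is that the paper sums over all sizes up to $c\sqrt{k}$, which changes nothing asymptotically.
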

\begin{proof}
  Given an input graph $G=(V,E)$, with $|V| = n = k^{O(1)}$ and a natural
  number~$k$ we can simply output the family of sets $\mathcal{X} \subseteq 2^{V}$ of
  size at most $c \sqrt k$, which takes time
  \[
  \sum_{\kappa \leq c \sqrt{k}}\binom{n}{\kappa} \leq c \sqrt{k} \binom{n}{c\sqrt{k}} \leq c
  \sqrt{k} \cdot n^{c \sqrt{k}} = 2^{O(\sqrt{k} \log n)} = 2^{O(\sqrt{k} \log
    k)} ,
  \]
  where the first inequality follows since $\binom n i$ is increasing for~$i$
  from~$1$ to $c \sqrt k$.
\end{proof}

\bigskip

The second step of the subexponential time algorithm was as described above to
compute the potential split partitionings of the input instance.  Since we are
given a general graph, we do not know immediately which vertices will go to the
clique partition and which will go to the independent set partition.  However,
we now show that there is at most subexponentially many potential split
partitionings.  That is, there are subexponentially many partitionings of the
vertex set into~$(C,I)$ such that it is possible to edit the input graph to a
threshold graph with the given partitioning not exceeding the prescribed budget.

\bigskip

The next lemma will be crucial in our algorithm, as our algorithm presupposes a
fixed split partition.  Using this result, we may in subexponential time compute
every possible split partition within range, and run our algorithm for
completion to threshold graphs on each of these split graphs.

\begin{lemma}[Few split partitions]
    \label{lemma:list-relevant-split-partitions}
    \label{lem:potential-split}
    There is an algorithm that given a graph~$G$ and an integer~$k$ with $|V(G)|
    = k^{O(1)}$, can generate a set $\mathcal{P}$ of split partitions of $V(G)$
    such that for every split graph~$H$ such that $|E(H) \triangle E(G)| \leq k$
    and every split partition $(C,I)$ of~$H$ it holds that $(C,I)$ is an element
    of $\mathcal{P}$.
    Furthermore, the algorithm terminates in $2^{O(\sqrt k \log k )}$ time.
\end{lemma}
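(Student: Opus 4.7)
The plan is to enumerate a superset of all \emph{potential split partitions}: partitions $(C,I)$ of $V(G)$ with $\binom{|C|}{2} - e(G[C]) + e(G[I]) \leq k$. Since making $C$ into a clique and $I$ into an independent set (keeping $C$--$I$ edges as in $G$) is the cheapest way to realize any split graph $H$ having $(C,I)$ as a split partition and $|E(H)\triangle E(G)| \leq k$, these are exactly the partitions the algorithm must output. Using the identity
\[
\binom{|C|}{2} - e(G[C]) + e(G[I]) \;=\; \binom{|C|}{2} + |E(G)| - \sum_{v \in C}\deg_G(v),
\]
the cost is determined by the pair $\bigl(s := |C|,\; \sum_{v \in C}\deg_G(v)\bigr)$. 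Since $|V(G)| = k^{O(1)}$ gives $k^{O(1)}$ values of $s$, it suffices to bound, for each fixed $s$, the number of size-$s$ sets $C$ with cost at most $k$.

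Fix $s$ and sort the vertices in non-increasing order of $\deg_G$ as $v_1, \ldots, v_n$. The top-$s$ set $C_s^{\star} = \{v_1, \ldots, v_s\}$ maximizes $\sum_{v \in C}\deg_G(v)$, and every valid $C$ must satisfy $L(C) := \sum_{v \in C_s^{\star}}\deg_G(v) - \sum_{v \in C}\deg_G(v) \leq k$. Writing $C = (C_s^{\star} \setminus A) \cup B$ with $|A|=|B|=t$, an interchange argument yields $L(C) \geq \sum_{i=1}^{t}\bigl(\deg_G(v_{s-i+1}) - \deg_G(v_{s+i})\bigr)$, a sum of non-negative terms non-decreasing in $t$. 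I handle the enumeration by splitting on $s$: if $\min(s, n-s) \leq c\sqrt{k}$ for a suitable constant, then the direct count $\binom{n}{\min(s,n-s)} \leq n^{c\sqrt{k}} = 2^{O(\sqrt{k}\log k)}$ suffices. Otherwise both sides are large, and from $e(G[I]) \leq k$ together with $e(G[C]) \geq \binom{s}{2} - k$ one derives that almost every vertex of $C$ has degree at least $s - 1 - O(\sqrt k)$ while almost every vertex of $I$ has degree at most $s + O(\sqrt k)$. The interchange inequality then forces the disagreement between $C$ and $C_s^{\star}$ to comprise at most $O(\sqrt k)$ ``expensive'' exceptions---enumerable in $\binom{n}{O(\sqrt k)}^2 = 2^{O(\sqrt k \log k)}$ ways---plus vertices inside a narrow \emph{transition window} $W$ of those whose degree lies within $O(\sqrt k)$ of the boundary value $\deg_G(v_s)$.

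The main technical obstacle is the case of a long \emph{plateau} of near-equal degrees across position $s$: within such a window the interchange bound gives no useful lower bound on $L(C)$ per swap, so naively one could face up to $2^{|W|}$ configurations. To handle this, I plan to argue that whenever $|W|$ is large, the cost inequality combined with the degrees contributed by the plateau restricts the feasible values of $|C \cap W|$ to an interval of length $O(\sqrt k)$, so that the number of additional plateau configurations is $\sum_{c}\binom{|W|}{c} \leq |W| \cdot \binom{|W|}{O(\sqrt k)} \leq 2^{O(\sqrt k \log k)}$. Combined over the $k^{O(1)}$ choices of $s$, this yields $|\mathcal{P}| \leq 2^{O(\sqrt{k}\log k)}$ and matches the claimed $2^{O(\sqrt k \log k)}$ running time for the enumeration.
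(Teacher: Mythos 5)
Your reduction to enumerating \emph{potential split partitions} is sound, the identity $\mathrm{cost}(C)=\binom{|C|}{2}+|E(G)|-\sum_{v\in C}\deg_G(v)$ is correct, and the skeleton (fix $s$, anchor at the top-$s$ degree set, brute-force a small side, otherwise argue closeness to the anchor) matches the paper's. The genuine gap is exactly where you yourself place the main obstacle: the plateau/window case, and the plan you sketch for it does not work as stated. First, knowing that the feasible values of $c=|C\cap W|$ lie in \emph{some} interval of length $O(\sqrt k)$ does not give $\binom{|W|}{c}\le\binom{|W|}{O(\sqrt k)}$: if that interval sits in the middle of $[0,|W|]$ then $\binom{|W|}{c}=2^{\Theta(|W|)}$, and $|W|$ may be as large as $n=k^{O(1)}$, which destroys the $2^{O(\sqrt k\log k)}$ bound. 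What you need is that $c$ is within $O(\sqrt k)$ of $0$ or of $|W|$. Second, that stronger fact cannot be derived from the tools you have assembled, because your feasibility criterion depends only on the degree sum: on an exact plateau (equal degrees), all $c$-subsets of $W$ with the same complement outside $W$ have \emph{identical} cost, so if one mid-range subset were feasible, all $\binom{|W|}{c}$ of them would be, and no degree-sum/interchange inequality can exclude this. Ruling out mid-range $c$ requires an argument about edges, e.g.: if $P$ and $P'$ are disjoint feasible $c$-subsets of the plateau (which the degree-sum identity gives you for free once one exists and $c\le|W|/2$), then $P$ lies in the clique side of one feasible partition and in the independent side of the other, hence has at most $k$ non-edges and at most $k$ edges, forcing $\binom{c}{2}\le 2k$, i.e.\ $c=O(\sqrt k)$; symmetrically $|W|-c=O(\sqrt k)$ when $c>|W|/2$. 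Some structural step of this kind is missing from your write-up, and it is the heart of the lemma.

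For comparison, the paper sidesteps windows and plateaus entirely: for each clique size $s_c$ it takes the top-$s_c$ degree set $C$ and reasons directly about the target split graph $H$ --- vertices moved out of $C$ land in the independent set of $H$, so after at most $k$ edits their $G$-degree sum is small; vertices moved in land in the clique of $H$, so their degree sum is large; sortedness forces the former to dominate the latter, capping the number of moved vertices at $O(\sqrt k)$ per side, after which one enumerates the swap sets directly. Your route can be repaired by adding the edge-structure argument above (or by importing the paper's exchange argument), but as written the plateau step is a gap, not a detail.
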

\begin{proof}
  Let~$G=(V,E)$ be a graph and~$k$ a natural number.  The first thing we do is
  to guess the size~$s_c$ of the clique and let~$C$ be a set of~$s_c$ vertices
  of highest degrees, and $s_i = n - s$, and let $I = V(G) \setminus C$.
  In the case that $\min\{s_c,s_i\} \leq 6\sqrt k$ we can simply enumerate every
  partitioning by Lemma~\ref{lem:enum-small}, so we assume from now on that
  $\min\{s_c,s_i\} > 6\sqrt k$.
  
  \begin{claim}
    In any split graph~$H$ with $|E(H) \triangle E(G)|\leq k$, where~$H$ has split
    partition $C',I'$ with $|C'| = s_c$, $|C \triangle C'| \leq 2\sqrt k$ and
    $|I \triangle I'| \leq 2\sqrt k$.
  \end{claim}
  
  \begin{proof}
    Suppose that $2 \sqrt k$ vertices $C'$ move from~$C$ to~$I$ and that $2
    \sqrt k$ vertices $I'$ move from~$I$ to~$C$.  Let $\sigma_c = \sum_{v \in
      C'}\deg(v)$ and $\sigma_i = \sum_{v \in I'}\deg(v)$.  First, since the
    vertices are ordered by degree, $\sigma_i \leq \sigma_c$.  Second, since in
    the final solution, $C'$ is in the independent set, $\sigma_c \leq s_c
    2\sqrt k + k$ (we might delete up to~$k$ vertices from $C'$) and using the
    same reasoning, $\sigma_i \geq (s_c - 2\sqrt k) + \binom{2 \sqrt k}{2} -k =
    s_c 2 \sqrt k - 3k - \sqrt k$ (we might add up to~$k$ vertices to $I'$).
    
    However, since $s_c \geq 6\sqrt k$, we have
    \begin{align*}
      s_c \cdot 2 \sqrt k - 3k - \sqrt k \leq \sigma_i &\leq \sigma_c \leq s_c \cdot 2 \sqrt k +
      k\text{, and thus}\\
      9k - \sqrt k \leq \sigma_i &\leq \sigma_c \leq 13 k,
    \end{align*}
    yielding that $\sigma_c \geq 9k - \sqrt k$.  However, we can only lower the total
    degree of $C'$ by $2k$, which means that even if we spend the entire budget
    on deleting from $C'$, $\sum_{v \in C'}\deg_H(v) \geq 6k$ which means that
    there is a vertex in $C'$ with degree higher than the size of the clique (a
    contradiction).
  \end{proof}
  
  Observe that since~$s_c$ and~$s_i$ are fixed, if we move~$\ell$ vertices
  from~$C$ to~$I$, we have to move~$\ell$ vertices from~$I$ to~$C$.  Hence, if
  the claim holds, we can simply enumerate every set of $4\sqrt k$ vertices and
  take the sets with equally many on each side and swap their partition.  Adding
  each such partition to~$\mathcal{P}$ gives the set in question.
\end{proof}

We would like to remark that this lemma also gives a simpler algorithm for
\pname{Split Completion} (equivalently \pname{Split Deletion}).  Ghosh et
al.~\cite{ghosh2013faster} showed that \pname{Split Completion} can be solved in
time $2^{O(\sqrt k \log k)} \cdot \poly(n)$ using the framework of Alon,
Lokshtanov and Saurabh~\cite{alon2009fast}.  However, the following observation
immediately yields a very simple combinatorial argument for the existence of
such an algorithm.
Together with the polynomial kernel by Guo~\cite{guo2007problem}, the following
result is immediate from the above lemma.
\begin{corollary}
  The problem \pname{Split Completion} 
  is solvable in time $2^{O(\sqrt k \log k)} + \poly(n)$.
\end{corollary}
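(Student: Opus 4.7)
The plan is to combine two ingredients: Guo's polynomial kernel for \pname{Split Completion}~\cite{guo2007problem} and Lemma~\ref{lem:potential-split} on enumerating potential split partitions. First, I would apply Guo's kernelization in polynomial time to obtain an equivalent instance $(G', k')$ with $|V(G')| = (k')^{O(1)}$ and $k' \leq k$; from this point the running time depends only on $k$, so we can afford to spend $2^{O(\sqrt{k}\log k)}$ time on the kernel.

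Next, I would invoke Lemma~\ref{lem:potential-split} on $(G', k')$ to produce, in time $2^{O(\sqrt{k'} \log k')}$, a family $\mathcal{P}$ of split partitions which is guaranteed to contain a split partition of every split graph within edit distance $k'$ of $G'$. Since any completion witness $H$ satisfies $|E(H) \triangle E(G')| \leq k'$ (as addition is a special case of editing), each split partition of such an $H$ necessarily appears in $\mathcal{P}$.

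Then, for each candidate $(C, I) \in \mathcal{P}$, I would do the following in polynomial time: verify that $G'[I]$ is already edgeless --- otherwise this partition is infeasible for completion, since edges can only be added --- and compute the completion cost $\binom{|C|}{2} - |E(G'[C])|$, which is exactly the number of non-edges inside $C$ that must be added to make $(C,I)$ a split partition. Answer \textbf{yes} if and only if some candidate yields cost at most $k'$. Correctness is immediate in both directions: any completion witness contributes its own partition to $\mathcal{P}$ with cost at most $k'$, and conversely any feasible candidate with cost at most $k'$ directly describes a valid completion set.

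The only potential obstacle is ensuring the enumeration step does not blow up beyond what we can afford, but Lemma~\ref{lem:potential-split} precisely gives the subexponential bound $2^{O(\sqrt{k}\log k)}$, and each partition is processed in polynomial time in the kernel's size. Adding the $\poly(n)$ cost of kernelization, the total running time is $2^{O(\sqrt{k}\log k)} + \poly(n)$, as claimed.
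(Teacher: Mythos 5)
Your proposal is correct and follows essentially the same route as the paper: kernelize with Guo's kernel, enumerate the potential split partitions via Lemma~\ref{lem:potential-split}, and then for each candidate partition check in polynomial time that $I$ is independent and that the number of missing edges inside $C$ is within the budget. The only difference is that you spell out the correctness argument and the feasibility check slightly more explicitly than the paper does, which is harmless.
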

\begin{proof}
  The algorithm is as follows.  On input $(G,k)$ we compute, using
  Lemma~\ref{lem:potential-split}, every potential split partitioning~$(C,I)$ at
  most~$k$ edges away from~$G$.  Then we in linear time check that~$I$ is indeed
  independent and that~$C$ lacks at most~$k$ edges from being complete.
\end{proof}

\bigskip

\subsubsection{Cheap or Expensive?}
\label{sec:cheap-or-expensive}

We will from now on assume that all our input graphs $G = (V,E)$ are split
graphs provided with a split partition $(C,I)$, and that we are to solve \STE{},
that is, we have to respect the split partitioning.  We are allowed to do this
with subexponential time overhead, as per the previous section and specifically
Lemma~\ref{lemma:list-relevant-split-partitions}.  In addition, we assume that
$|V(G)| = O(k^2)$.

Given an instance~$(G,k)$ and a solution~$F$, we define the \emph{editing
  number} of a vertex~$v$, denoted~$\en^F_G(v)$, to be the number of edges
in~$F$ incident to a vertex~$v$.
When~$G$ and~$F$ are clear from the context, we will simply write~$\en(v)$.  A
vertex~$v$ will be referred to as \emph{cheap} if~$\en(v) \leq 2\sqrt{k}$ and
\emph{expensive} otherwise.  We will call a set of vertices~$U \subseteq V$
\emph{small} provided that~$|U| \leq 2\sqrt{k}$ and \emph{large} otherwise.

\begin{definition}
  \label{def:cheap}
  Given an instance $(G,k)$ with solution~$F$, we call a vertex~$v$ \emph{cheap}
  if $\en(v) \leq 2\sqrt k$.
\end{definition}

The following observation will be used extensively.
\begin{observation}
  \label{obs:cheap-vs-small}
  If $U \subseteq V(G)$ is a large set, then there exists a cheap vertex in~$U$, or
  contrapositively: if a set $U \subseteq V(G)$ has only expensive vertices,
  then~$U$ is small.  Specifically it follows that in any yes instance $(G,k)$
  where~$F$ is a solution, there are at most~$2\sqrt k$ expensive vertices.
\end{observation}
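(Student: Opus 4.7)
The plan is to prove the observation by a direct double-counting (handshake) argument on the edges in the solution~$F$. The key identity is $\sum_{v \in V(G)} \en^F_G(v) = 2|F|$, since each edge of $F$ is counted once for each of its two endpoints. Combined with the assumption $|F| \leq k$, this yields the global budget $\sum_{v \in V(G)} \en(v) \leq 2k$.

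Now suppose, for the contrapositive, that $U \subseteq V(G)$ consists entirely of expensive vertices, i.e., $\en(v) > 2\sqrt{k}$ for every $v \in U$. Summing over~$U$ and restricting the global budget to this subset gives
\[
|U| \cdot 2\sqrt{k} < \sum_{v \in U}\en(v) \leq \sum_{v \in V(G)}\en(v) \leq 2k.
\]
Dividing through by $2\sqrt{k}$ yields $|U| < \sqrt{k} \leq 2\sqrt{k}$, so $U$ is small. Equivalently, any large set must contain at least one vertex~$v$ with $\en(v) \leq 2\sqrt{k}$, which is exactly the statement that a large set contains a cheap vertex.

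For the final ``specifically'' clause, simply instantiate $U$ as the set of all expensive vertices in the yes-instance $(G,k)$. By the argument above, $|U| < \sqrt{k}$, and in particular at most $2\sqrt{k}$, which gives the claimed bound on the total number of expensive vertices. No step presents a real obstacle—the observation is essentially the contrapositive of the handshake bound applied in the obvious way, and the only mild subtlety is noticing that the bound actually obtained, $|U| < \sqrt{k}$, is strictly stronger than ``small,'' so the claim as stated holds comfortably.
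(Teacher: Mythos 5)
Your proof is correct and uses exactly the counting argument the paper treats as immediate for this observation: each of the at most $k$ edits has two endpoints, so $\sum_v \en(v) \leq 2k$, and a set of vertices each with editing number exceeding $2\sqrt{k}$ must have size below $\sqrt{k} \leq 2\sqrt{k}$. Nothing further is needed.
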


This gives the following win-win situation: If a set~$X$ is small, then we can
``guess'' it, which means that we can in subexponential time enumerate all
candidates, and otherwise, we can guess a cheap vertex inside the set and its
``correct'' neighborhood.  In particular, since the set of expensive vertices is
small, we can guess it in the beginning.  For the remainder of the proof we will
assume that the graph~$G$ is a labeled graph, where some vertices are labeled as
cheap and others as expensive.  There will never be more than~$2\sqrt{k}$
vertices labeled expensive, however a vertex labeled expensive might very well
not be expensive in~$G$ and vice versa.  The idea is that we guess the expensive
vertices at the start of the algorithm and then bring this information along
when we recurse on subgraphs.

\subsubsection{Splitting Pairs and Unbreakable Segments}
\label{sec:splitt-pairs-unbr}

\begin{definition}[Splitting pair]
  Let~$G$ be a graph,~$k$ an integer,~$F$ a solution of $(G,k)$ and
  $(\mathcal{C}, \mathcal{I})$ a threshold decomposition of $G \triangle F$.  We
  then say that the vertices $u \in I_a$ and $v \in C_b$ is a \emph{splitting
    pair} if
  \begin{itemize}
  \item $a < b$,
  \item $u$ and $v$ are cheap,
  \item $\cup_{a < i < b} L_i$ consists of only expensive vertices.  Recall from
    Definition~\ref{def:threshold-partition} that~$L_i = C_i \cup I_i$.
  \end{itemize}
\end{definition}

\begin{definition}[Unbreakable]
  Let~$G$ be a graph,~$k$ an integer,~$F$ a solution of $(G,k)$ and
  $(\mathcal{C}, \mathcal{I})$ a threshold decomposition of $G \triangle F$.  We
  then say that a sequence of levels $(C_a, I_a), (C_{a+1}, I_{a+1}), \ldots,
  (C_b, I_b)$ is an \emph{unbreakable segment} if there is no splitting pair in
  the vertex set $\cup_{i \in [a, b]} (C_i \cup I_i)$.
  
  Furthermore, we say that an instance $(G,k)$ is \emph{unbreakable} if there
  exists an optimal solution~$F$ and a threshold decomposition $(\mathcal{C},
  \mathcal{I})$ of~$G \triangle F$ such that the entire decomposition is an
  unbreakable segment.  We also say that such a decomposition is a
  \emph{witness} of~$G$ being unbreakable.
\end{definition}

\begin{definition}
  Let~$G$ be a graph and $(\mathcal{C}, \mathcal{I})$ a threshold decomposition
  of~$G \triangle F$ for some solution~$F$.  Then we say that~$i$ is a
  \emph{transfer level} if
  \begin{itemize}
  \item for every $j > i$ it holds that~$C_j$ contains no cheap vertices and
  \item for every $j < i$ it holds that~$I_j$ contains no cheap vertices.
  \end{itemize}
\end{definition}

\begin{lemma}
  \label{lemma:unbreakable-graph-divind-level}
  Let $(G,k)$ be a yes instance of \STE{} with solution~$F$ such that~$G$ is
  unbreakable and $(\mathcal{C}, \mathcal{I})$ a witness.  Then there is a
  transfer level in $(\mathcal{C}, \mathcal{I})$.
\end{lemma}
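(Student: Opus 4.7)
The plan is to proceed by contradiction: assuming there is no transfer level in the given witness decomposition, I will exhibit a splitting pair, contradicting unbreakability. The key numerical quantities to track are
\[
b^{*} = \max\{\, j : C_j \text{ contains a cheap vertex}\,\} \qquad \text{and} \qquad a_{*} = \min\{\,j : I_j \text{ contains a cheap vertex}\,\},
\]
with the conventions $b^* = 0$ and $a_* = t+1$ when the corresponding sets are empty. Reading off the definition of a transfer level, an index $i$ is a transfer level exactly when $b^* \leq i \leq a_*$. Hence the statement reduces to proving the inequality $b^* \leq a_*$.

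I would assume the contrary, $a_* < b^*$, and work inside the ``gap'' between a witnessing cheap independent vertex at the lowest level $a_*$ and the cheap clique vertices lying strictly above it. Define $b_1$ as the \emph{smallest} level strictly greater than $a_*$ whose clique fragment contains a cheap vertex; this exists because $b^* > a_*$, and satisfies $a_* < b_1 \leq b^*$. By the minimality of $b_1$, no $C_j$ with $a_* < j < b_1$ contains a cheap vertex, so all the ``clique side'' obstructions to $(I_{a_*}, C_{b_1})$ being a splitting pair have been eliminated.

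It remains to deal with the independent side. If no $I_j$ with $a_* < j < b_1$ contains a cheap vertex either, then choosing a cheap $u \in I_{a_*}$ and a cheap $v \in C_{b_1}$ directly yields a splitting pair, contradicting unbreakability. Otherwise let $a_1$ be the \emph{largest} level in $(a_*, b_1)$ with a cheap independent vertex. By the maximality of $a_1$ and the minimality of $b_1$, every level strictly between $a_1$ and $b_1$ consists solely of expensive vertices, so a cheap $u' \in I_{a_1}$ together with a cheap $v \in C_{b_1}$ form a splitting pair, again contradicting the fact that $(\mathcal{C},\mathcal{I})$ is a witness to $G$ being unbreakable.

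The argument is essentially a one-page case analysis and there is no real obstacle; the only thing that needs care is the bookkeeping of ``strictly between'' in the definition of a splitting pair, which is why both extremal choices ($b_1$ minimal, $a_1$ maximal) are needed to guarantee that the entire interval of levels separating the two cheap endpoints is indeed expensive-only.
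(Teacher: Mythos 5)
Your proof is correct and follows essentially the same route as the paper's: both argue by contradiction, observing that a transfer level exists iff the highest cheap clique level is at most the lowest cheap independent level, and then tighten the gap via extremal choices of levels to exhibit a splitting pair, contradicting that $(\mathcal{C},\mathcal{I})$ witnesses unbreakability. Your two-step extremal selection (minimal $b_1$, then maximal $a_1$) is just a cleaner phrasing of the paper's increment/decrement procedure.
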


\begin{proof}
  Suppose for a contradiction that the lemma is false.  Let~$a$ be maximal such
  that~$C_a$ contains a cheap vertex and~$b$ minimum such that~$I_b$ contains a
  cheap vertex.  Since~$i = a$ clearly satisfies the first condition, it must be
  the case that~$b < a$.  Increment~$b$ as long as~$b+1 < a$ and there is a
  cheap vertex in~$\cup_{i \in (b, a)} I_i$.  Then decrement~$a$ as long as~$b+1
  < a$ and there is a cheap vertex in~$\cup_{i \in (b, a)} C_i$.  Let~$u$ be a
  cheap vertex in~$C_a$ and~$v$ a cheap vertex in~$C_b$.  It follows from the
  procedure that they both exist.  Observe that~$u, v$ is indeed a splitting
  pair, which is a contradiction to~$G$ being unbreakable and $(\mathcal{C},
  \mathcal{I})$ being a witness.
\end{proof}

\begin{lemma}
  \label{lemma:unbreakable-graph-few-levels}
  Let $(G,k)$ be an instance of \STE~such that~$G$ is unbreakable and
  $(\mathcal{C}, \mathcal{I})$ a witness of this.  Then the number of levels in
  $(\mathcal{C}, \mathcal{I})$ is at most $2\sqrt{k} + 1$.
\end{lemma}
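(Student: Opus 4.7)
The plan is to combine the transfer level produced by Lemma~\ref{lemma:unbreakable-graph-divind-level} with the global bound on the number of expensive vertices recorded in Observation~\ref{obs:cheap-vs-small}. The intuition is that once a transfer level $i^*$ is fixed, every level strictly above it is forced to be ``expensive on the clique side'' and every level strictly below it is forced to be ``expensive on the independent side'', so the total number of levels is controlled by the small number of expensive vertices that any solution can create.

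First, I would invoke Lemma~\ref{lemma:unbreakable-graph-divind-level} on the witness decomposition $(\mathcal{C},\mathcal{I})$ to obtain a transfer level $i^* \in [1,t]$. By the definition of transfer level, $C_j$ contains no cheap vertices for every $j > i^*$, and $I_j$ contains no cheap vertices for every $j < i^*$. Since $C_j$ and $I_j$ are (non-empty) twin classes of the threshold partition, for each $j \in (i^*,t]$ the bag $C_j$ contributes at least one expensive vertex, and for each $j \in [1,i^*)$ the bag $I_j$ contributes at least one expensive vertex. These chosen vertices are pairwise distinct because they lie in pairwise different levels, yielding a collection of $(t - i^*) + (i^* - 1) = t - 1$ distinct expensive vertices of $G$ with respect to $F$.

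Then, by Observation~\ref{obs:cheap-vs-small}, any solution of size at most $k$ has at most $2\sqrt{k}$ expensive vertices, so combining with the count above gives $t - 1 \leq 2\sqrt{k}$, i.e., $t \leq 2\sqrt{k}+1$, as claimed.

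The only subtlety is justifying the non-emptiness of the relevant bags. This follows from the strict inclusions imposed by Definition~\ref{def:threshold-partition}: if $C_j$ were empty for some interior index $j$, the neighborhoods of $I_{j-1}$ and $I_j$ in the clique would coincide, collapsing them into a single twin class and contradicting the structure of the threshold partition; a symmetric argument rules out empty $I_j$ in interior positions. Any remaining slack at the two extremal bags is absorbed by the ``$+1$'' in the final bound, and since Observation~\ref{obs:cheap-vs-small} in fact gives the even stronger bound of roughly $\sqrt{k}$ expensive vertices, the stated inequality comes out with margin to spare.
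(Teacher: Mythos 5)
Your proof is correct and matches the paper's argument: take the transfer level from Lemma~\ref{lemma:unbreakable-graph-divind-level}, observe that every other level contributes a distinct expensive vertex (via $C_j$ above and $I_j$ below the transfer level), and conclude from the $2\sqrt{k}$ bound on expensive vertices that there are at most $2\sqrt{k}+1$ levels. The only difference is that you explicitly justify the non-emptiness of the relevant fragments, a point the paper leaves implicit (cf.\ the caption of Figure~\ref{fig:threshold-partition}); the potentially empty extremal fragments are in fact never the ones used, so your remark about the ``$+1$'' absorbing slack is unnecessary but harmless.
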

\begin{proof}
  Let~$i$ be the transfer level in $(\mathcal{C}, \mathcal{I})$.  It is
  guaranteed to exist by Lemma~\ref{lemma:unbreakable-graph-divind-level}.
  Observe that for every~$j > i$ it holds that~$C_i$ consists of expensive
  vertices and for every~$j < i$ it holds that~$I_i$ consists of expensive
  vertices.  It follows immediately that every level besides~$i$ contains at
  least one expensive vertex.  As there are at most~$2\sqrt{k}$ such vertices
  the result follows immediately.
\end{proof}

\begin{lemma}
  \label{lemma:unbreakable-graph-cheap-complete}  
  Let $(G,k)$ be an instance of \STE~such that~$G$ is unbreakable,
  $(\mathcal{C}, \mathcal{I})$ is a witness of this and~$F$ a corresponding
  solution.  If~$X$ is the set of cheap vertices in~$G$ then $(G \triangle
  F)[X]$ forms a complete split graph.
\end{lemma}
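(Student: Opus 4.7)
The plan is to leverage the transfer level produced by Lemma~\ref{lemma:unbreakable-graph-divind-level} together with the complete split property built into the definition of a threshold partition (Definition~\ref{def:threshold-partition}). Write $H = G \triangle F$, and let $(\mathcal{C}, \mathcal{I})$ be the given witness threshold partition of $H$. Split $X$ into $X_C = X \cap \bigcup \mathcal{C}$ and $X_I = X \cap \bigcup \mathcal{I}$; since $\bigcup \mathcal{C}$ is a clique of $H$ and $\bigcup \mathcal{I}$ an independent set of $H$, the set $X_C$ already induces a clique in $H$ and $X_I$ induces an independent set. What remains is to show the cross adjacencies: every vertex of $X_I$ is adjacent in $H$ to every vertex of $X_C$.

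First I would invoke Lemma~\ref{lemma:unbreakable-graph-divind-level} to obtain a transfer level $i$ in $(\mathcal{C}, \mathcal{I})$. By definition of a transfer level, no $C_j$ with $j > i$ contains a cheap vertex and no $I_j$ with $j < i$ contains a cheap vertex. Hence any $u \in X_C$ satisfies $\lev(u) \leq i$, and any $v \in X_I$ satisfies $\lev(v) \geq i$. In particular, for such $u$ and $v$ we have $\lev(v) \geq \lev(u)$, so $v \in I_{\geq \lev(u)}$.

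Now I apply the last bullet of Definition~\ref{def:threshold-partition} to the level $j = \lev(u)$: the pair $(C_j, I_{\geq j})$ is a complete split partition of the subgraph of $H$ induced by $C_j \cup I_{\geq j}$. Since $u \in C_j$ and $v \in I_{\geq j}$, this forces $uv \in E(H)$. As $u \in X_C$ and $v \in X_I$ were arbitrary, every cheap independent vertex is adjacent in $H$ to every cheap clique vertex. Combined with the clique/independence observation above, $(G \triangle F)[X]$ is a complete split graph with split partition $(X_C, X_I)$.

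The argument is essentially a direct unpacking of definitions once the transfer level is in hand; the only place that could trip us up is a degenerate case where $X_C$ or $X_I$ is empty, but in either case the statement holds vacuously since a clique and an independent set are themselves (trivially) complete split graphs. Thus no real obstacle arises and the proof is short.
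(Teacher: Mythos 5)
Your proof is correct and follows the paper's argument: invoke the transfer level from Lemma~\ref{lemma:unbreakable-graph-divind-level}, observe that cheap clique vertices lie at levels at most the transfer level and cheap independent vertices at levels at least it, and conclude adjacency from the structure of the threshold partition. The only difference is that you spell out the appeal to the last bullet of Definition~\ref{def:threshold-partition} and the trivial clique/independent-set observations, which the paper leaves implicit.
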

\begin{proof}
  Let~$t$ be the transfer level of the decomposition,~$u$ a cheap vertex
  in~$C_i$ and~$v$ a cheap vertex in~$I_j$ for some~$i$ and~$j$.  By the
  definition of~$t$ it holds that $i \leq t \leq j$.  It follows immediately
  that~$u$ and~$v$ are adjacent in $G \triangle F$ and the proof is complete.
\end{proof}

\newcommand{\unbreakAlg}{{\texttt{unbreakAlg}}}

We will now describe the algorithm \unbreakAlg.  It takes as input an instance
$(G, (C,I), k)$ of \STE, with the assumption that~$G$ is unbreakable and has
split partition $(C,I)$, and returns either an optimal solution~$F$ for~$(G, k)$
where~$|F| \leq k$ or correctly concludes that~$(G, k)$ is a no-instance.
Assume that~$(G, k)$ is a yes-instance.  Then there exists an optimal
solution~$F$ and a threshold decomposition~$(\mathcal{C}, \mathcal{I})$ of~$G
\triangle F$ that is a witness of~$G$ being unbreakable.
First, we guess the number of levels~$\ell$ in the decomposition, and by
Lemma~\ref{lemma:unbreakable-graph-few-levels}, we have that~$\ell \in [0,
2\sqrt{k}+1]$ and the transfer level $t \in [0, \ell]$.  Then we guess where the
at most~$2\sqrt{k}$ vertices that are expensive in~$G$ are positioned in
$(\mathcal{C}, \mathcal{I})$.  Observe that from this information we can obtain
all edges between expensive vertices in~$F$.
Finally, we put every cheap vertex in the level that minimizes the cost of
fixing its adjacencies into the expensive vertices while respecting that~$t$ is
the transfer level.  From this information we can obtain all adjacencies between
cheap and expensive vertices in~$F$.  Since the cheap vertices induces a
complete split graph, we reconstructed~$F$ and hence we return it.

\begin{lemma}
  \label{lemma:solving-unbreakable-segments}
  Given an instance $(G,k)$ of \STE~with~$G$ being unbreakable,
  \textup{\unbreakAlg{}} either gives an optimal solution or correctly concludes
  that $(G,k)$ is a no-instance in time $2^{O(\sqrt{k} \log{k})}$.
\end{lemma}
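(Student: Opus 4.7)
The plan is to establish correctness first and then bound the running time. Fix an optimal solution $F$ and a threshold decomposition $(\mathcal{C},\mathcal{I})$ of $G\triangle F$ witnessing unbreakability. By Lemma~\ref{lemma:unbreakable-graph-few-levels} the number of levels $\ell$ is at most $2\sqrt{k}+1$, and by Lemma~\ref{lemma:unbreakable-graph-divind-level} there is a transfer level $t\in[0,\ell]$, so some iteration of the outer enumeration correctly guesses $(\ell,t)$. The at most $2\sqrt{k}$ expensive vertices (cf.~Observation~\ref{obs:cheap-vs-small}) are then each placed into one of at most $2\sqrt{k}+1$ levels, giving $(2\sqrt{k}+1)^{2\sqrt{k}}=2^{O(\sqrt{k}\log k)}$ possibilities; one of them coincides with the restriction of $(\mathcal{C},\mathcal{I})$ to the expensive vertices. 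For this matching iteration, every edge of $F$ incident to two expensive vertices is already forced by the threshold pattern induced by the guessed partition.

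The heart of correctness is the greedy placement of cheap vertices. The plan is to invoke Lemma~\ref{lemma:unbreakable-graph-cheap-complete}: the cheap vertices induce a complete split graph in $G\triangle F$, independently of which specific levels they occupy. Thus the editing cost splits as a fixed cheap--cheap contribution plus, for each cheap vertex $v$, a level-dependent cost measuring the edits needed to align $v$'s adjacencies with the already-placed expensive vertices into the nested pattern dictated by $\lev(v)$. These per-vertex costs are mutually independent, so minimizing them individually, subject to $\lev(v)\leq t$ for $v\in C$ and $\lev(v)\geq t$ for $v\in I$, produces a reconstruction of cost at most $|F|$. If no iteration assembles a solution of cost at most $k$, the algorithm safely declares a no-instance.

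For the running time, the outer enumeration has $O(k)\cdot 2^{O(\sqrt{k}\log k)} = 2^{O(\sqrt{k}\log k)}$ iterations, and the inner greedy step is polynomial in $k$ since the kernel has $O(k^2)$ vertices and each cheap vertex has $O(\sqrt{k})$ candidate levels. The main obstacle will be formalizing the greedy step: one must verify that independently chosen levels for the cheap vertices do not break nestedness of neighborhoods (equivalently, that the assembled partition is still a valid threshold decomposition) and that the per-vertex optima indeed sum to at most $|F|$. Both should follow from the forced complete-bipartite structure of Lemma~\ref{lemma:unbreakable-graph-cheap-complete} combined with the transfer-level constraints, which decouple the decisions across cheap vertices; checking these consistency conditions carefully is where the bulk of the technical work lies, while the remaining counting is routine.
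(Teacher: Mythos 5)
Your proposal is correct and follows essentially the same route as the paper's proof: enumerate the number of levels and the transfer level (Lemmata~\ref{lemma:unbreakable-graph-divind-level} and~\ref{lemma:unbreakable-graph-few-levels}), enumerate the placement of the at most $2\sqrt{k}$ expensive vertices, and then place each cheap vertex greedily, with correctness resting on Lemma~\ref{lemma:unbreakable-graph-cheap-complete} making the cheap--cheap contribution fixed so that only the level-dependent cost against the already-placed expensive vertices matters. The consistency check you flag at the end (that independent per-vertex level choices subject to the transfer-level constraint still yield a valid threshold decomposition of cost at most $|F|$) is exactly the point the paper also settles, equally briefly, via the complete split structure of the cheap part.
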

\begin{proof}
  Since the algorithm goes through every possible value for~$\ell$ and~$t$
  (according to Lemmata~\ref{lemma:unbreakable-graph-divind-level}
  and~\ref{lemma:unbreakable-graph-few-levels}), and every possible placement of
  the expensive vertices, the only thing remaining to ensure is that the cheap
  vertices are placed correctly.  However, since the cheap vertices form a
  complete split graph (according to
  Lemma~\ref{lemma:unbreakable-graph-cheap-complete}), the only cost associated
  with a cheap vertex is the number of expensive vertices in the opposite side
  it is adjacent to.  However, their placement is fixed, so we simply greedily
  minimize the cost of the vertex by putting it in a level that minimizes the
  number of necessary edits.
  
  If we get a solution from the above procedure, this solution is optimal.  On
  the other hand, if in every branch of the algorithm we are forced to edit more
  than~$k$ edges, then either $(G,k)$ is a no-instance, or~$G$ is not
  unbreakable.  Since the assumption of the algorithm is that~$G$ is
  unbreakable, we conclude that the algorithm is correct.
\end{proof}

\subsubsection{Divide and Conquer}
\label{sec:divide-conquer}

\newcommand{\solveAlg}{{\texttt{solveAlg}}}

We now explain the main algorithm.  The algorithm takes as input a graph~$G$,
together with a split partition $(C,I)$ and a budget~$k$.  In addition, it takes
a vertex set~$S$ which the algorithm is supposed to find an optimal solution
for.  The algorithm is recursive and either finds a splitting pair, in which it
recurses on a subset of~$S$, and if there is no splitting pair, then $G[S]$ is
unbreakable, and thus it simply runs \unbreakAlg{} on~$S$.
To avoid unnecessary recomputations, it uses memoization to solve already
computed inputs.

\medskip

The algorithm $\solveAlg(G, (C, I), k, S)$ returns an optimal solution for the
instance $(G[S],k)$, respecting the given split partition~$(C,I)$ in the
following manner:
\begin{enumerate}[(1)]
\item Run $\unbreakAlg(G[S],(C \cap S,I \cap S),k)$.
\item For every pair of cheap vertices $u \in I$ and $v \in C$, together with their
  correct neighborhoods~$N_u$ and~$N_v$, and every pair of subsets $C_X
  \subseteq C$ and $I_X \subseteq I$ of expensive vertices we do the following:
  Let $X = I_X \cup C_X$, $R_C = N_u$, $U_I = N_v \cap I$, $R_I = I \setminus (X \cup U_I)$ and
  $U_C = S \setminus (X \cup R_C \cup U_I \cup R_I)$.  Now, $U = U_I \cup U_C$
  is the unbreakable segment,~$X$ is the set of expensive vertices between the
  splitting pair, and $R = R_I \cup R_C$ is the remaining vertices.  We now
  \begin{enumerate}[(a)]
  \item Run $\unbreakAlg(G[U],(C \cap U,I \cap U),k)$ yielding a solution~$F_U$,
  \item solve $G[X]$ optimally by brute force since it has size at most $2 \sqrt
    k$, giving a solution~$F_X$, and
  \item recursively call $\solveAlg(G, (C, I), k, R)$ to solve the instance
    corresponding to the remaining vertices yielding~$F_R$.
  \end{enumerate}
  Finally we return~$F$, the union of~$F_U$,~$F_X$, and~$F_R$ together with all
  edges from $C \cap R$ and $I \cap (X \cup U)$, and all edges from $C \cap X$
  to $I \cap U$.
\end{enumerate}

\begin{figure}[t]
  \centering
  \begin{tikzpicture}[every node/.style={circle, draw, scale=.8}, scale=.4]
      \node (c1) at  (0,0)  {};
      \node (c2) at  (0,1)  {};
      \node (c3) at  (0,2)  {};
      \node (c4) at  (0,3)  {};
      \node (c5) at  (0,4)  {};
      \node (c6) at  (0,5)  {};
      \node (c7) at  (0,6)  {};
      \node[rectangle] (c8) at  (0,7)  {};
      \node (c9) at  (0,8)  {};
      \node (c10) at (0,9)  {};
      \node (c11) at (0,10) {};
      \node (c12) at (0,11) {};

      \node (i0) at (3,-1) {};
      \node (i1) at (3,0) {};
      \node (i2) at (3,1) {};
      \node (i3) at (3,2) {};
      \node[rectangle] (i4) at (3,3) {};
      \node (i5) at (3,4) {};
      \node (i6) at (3,5)  {};
      \node (i7) at (3,6)  {};
      \node (i8) at (3,7)  {};
      \node (i9) at (3,8)  {};
      \node (i10) at (3,9)  {};
      \node (i11) at (3,10) {};

      \node[draw=none] (CL) at (0, 12) {$C$};
      \node[draw=none] (IL) at (3, 12) {$I$};

      \draw (c3) -- (i4) -- (c4);
      \draw (c1) -- (i4) -- (c2);

      \draw (i9) -- (c8) -- (i10);
      \draw (i8) -- (c8) -- (i11);

      \draw [decorate,decoration={brace,mirror,amplitude=6pt}]
      (-0.5,11.3) -- (-0.5,6.7) node [draw=none,midway,xshift=-1cm] {$U$};

      \draw [decorate,decoration={brace,mirror,amplitude=4pt}]
      (-0.5,6.3) -- (-0.5,3.7) node [draw=none,midway,xshift=-1cm] {$X$};

      \draw [decorate,decoration={brace,mirror,amplitude=6pt}]
      (-0.5,3.3) -- (-0.5,-1.3) node [draw=none,midway,xshift=-1cm] {$R$};

    \end{tikzpicture}
    \caption{The partitioning of the vertex sets according to \solveAlg.  The
      square bags are the bags containing the splitting pair, $U$ is an
      unbreakable segment and the bags of $X$ contains exclusively expensive
      vertices.  The edges drawn indicates the neighborhoods of the splitting
      pair across the partitions.}
  \label{fig:solvealg}
\end{figure}
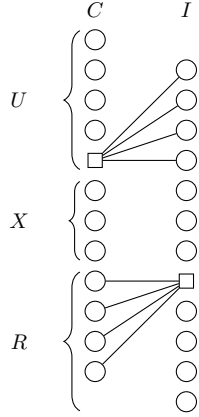

In \textit{(1)} we consider the option that there are no splitting pairs in~$G$.
In \textit{(2)} (see Figure~\ref{fig:solvealg}) we guess the uppermost splitting
pair in the partition and the neighborhood of these two vertices.  Then we guess
all of the expensive vertices that live in between the two levels of the
splitting pair.  Observe that these expensive vertices together with the
splitting pair partition the levels into three consecutive sequences.
The upper one,~$U$ is an unbreakable segment, the middle,~$X$ are the expensive
vertices and the lower one,~$R$ is simply the remaining graph.
When we apply \unbreakAlg{} on the upper part, brute force the middle one and
recurse with \solveAlg{} on the lower part, we get individual optimal solutions
for each three, finally we may merge the solutions and add all the remaining
edges (see end of \textit{(2)}).

\begin{lemma}
  \label{lemma:subept-correctness}
  Given a split graph $G=(V,E)$ with split partition $(C,I)$,
  \textup{\solveAlg{}} either returns an optimal solution for \STE{} on input
  $(G,(C,I),k,V)$, or correctly concludes that $(G,k)$ is a no-instance.
\end{lemma}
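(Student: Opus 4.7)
I plan to prove correctness by strong induction on $|S|$, with a dichotomy on whether $G[S]$ is unbreakable under some optimal solution. The base case $|S|\leq 2$ is immediate, and for the no-instance direction I would argue: whenever $(G[S],k)$ has no solution of size at most $k$, every branch of step (2) recurses on a hopeless subinstance or invokes \unbreakAlg{} on an unbreakable subgraph with no admissible solution, so by Lemma~\ref{lemma:solving-unbreakable-segments} and the inductive hypothesis all branches (and step (1)) report cost exceeding $k$.

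For the yes-instance direction I would fix an optimal $F^\star$ and a threshold decomposition $(\mathcal{C},\mathcal{I})$ of $G[S]\triangle F^\star$ with cheap/expensive labels induced by $F^\star$. In the unbreakable case, Lemma~\ref{lemma:solving-unbreakable-segments} directly gives optimality of step (1). In the breakable case, I would take the uppermost splitting pair $(u,v)$ with $u\in I_a$, $v\in C_b$ and $a<b$, and let $X=\bigcup_{a<i<b}L_i$, which by the splitting-pair definition consists entirely of expensive vertices and hence has $|X|\leq 2\sqrt k$. Setting $U=\bigcup_{i\geq b}L_i$ and $R=\bigcup_{i\leq a}L_i$ then yields a disjoint partition $S=U\cup X\cup R$ with $v\in U$, $u\in R$, and both $|U|,|R|<|S|$.

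The linchpin of the argument is a forcing observation: the threshold structure pins down every crossing adjacency among $U$, $X$, $R$ in $G[S]\triangle F^\star$. All pairs from $C\cap R$ to $I\cap(X\cup U)$ and from $C\cap X$ to $I\cap U$ must be edges, while all pairs from $C\cap U$ to $I\cap(X\cup R)$ and from $C\cap X$ to $I\cap R$ must be non-edges. Hence the restriction of $F^\star$ to cross-pairs is a uniquely determined set $F^\star_{\mathrm{cross}}$, matching exactly what step (2) appends at the end. By enumerating all cheap splitting pairs together with their $C$- and $I$-neighborhoods and all subsets $C_X\subseteq C$, $I_X\subseteq I$ of expensive vertices, step (2) visits the branch corresponding to $(u,v,X)$; on that branch $G[U]$ is unbreakable (else a splitting pair above $(u,v)$ would exist, contradicting uppermostness), so \unbreakAlg{} returns an optimal $F_U$, brute force on $G[X]$ returns an optimal $F_X$, and the recursive call returns an optimal $F_R$ by induction. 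Since induced subgraphs of threshold graphs are threshold, the restrictions of $F^\star$ to $U$, $X$, $R$ are feasible editing sets for $G[U],G[X],G[R]$ respectively, so $|F_U|+|F_X|+|F_R|+|F^\star_{\mathrm{cross}}|\leq|F^\star|$, and taking the minimum over all branches (including step (1)) yields an optimum.

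The main obstacle I expect is the forcing observation together with the additive decomposition of cost it enables: pinning down from Definition~\ref{def:threshold-partition} exactly which crossings between $U$, $X$, $R$ must be edges versus non-edges, and then confirming that the in-part edits left to the three subroutines are independent of the cross-part edits and of one another. Once this is in place, the strict decrease $|R|<|S|$ driving the induction, the optimality of \unbreakAlg{} on the unbreakable $G[U]$, and the fact that the enumeration in step (2) sweeps the required configuration all slot in routinely.
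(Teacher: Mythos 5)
Your proposal follows essentially the same route as the paper's own argument: split $S$ at the uppermost splitting pair into $U$, $X$, $R$, observe that the cross-adjacencies between these parts are forced by the threshold decomposition, solve $U$ with \unbreakAlg{}, brute-force $X$, recurse on $R$, and rely on the exhaustive enumeration of cheap pairs, their neighborhoods, and the expensive sets to guarantee the correct branch is visited. In fact your write-up (the explicit induction on $|S|$, the forcing observation, and the additive cost decomposition that also yields optimality of the restricted solution on $G[U]$) spells out details that the paper's brief proof leaves implicit, so it is the same approach carried out with more care.
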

\begin{proof}
  If $(G,k)$ with split partition $(C,I)$ is a yes instance of \STE{} there is a
  solution~$F$ with threshold decomposition~$(\C,\I)$ and a sequence of pairs
  $(u_1,v_1), (u_2,v_2), \dots, (u_t,v_t)$ such that $u_1,v_1$ is the splitting
  pair highest in $(\C,\I)$, and $u_2,v_2$ in the highest splitting pair in the
  graph induced by the vertices in and below the level of~$v_1$, etc.  Since we
  in a state $(G, (C,I), k, S)$ try every possible pair of such cheap vertices
  and every possible neighborhood and set of expensive vertices, we exhaust all
  possibilities for any threshold editing of~$S$ of at most~$k$ edges.  Hence,
  if there is a solution, an optimal solution is returned.
  
  Thus, if ever an~$F$ is constructed of size $|F| > k$, we can safely conclude
  that there is no editing set $F^\star \subseteq C \times I$ of size at most~$k$ such that $G
  \triangle F^\star$ is a threshold graph.
\end{proof}

\begin{lemma}
  \label{lemma:subept-time}
  Given a split graph $G = (V,E)$ with split partition $(C,I)$ and an
  integer~$k$ with $|V(G)| = O(k^2)$, the algorithm \textup{\solveAlg{}}
  terminates in time $2^{O(\sqrt k \log k)}$ on input $(G,(C,I),k,V)$.
\end{lemma}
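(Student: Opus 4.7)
The plan is to bound the total running time as the product of two quantities: (a) the work performed at a single call to $\solveAlg$ excluding its recursive descendants, and (b) the number of distinct vertex subsets $S$ ever passed as the fourth argument of $\solveAlg$. I show both quantities are $2^{O(\sqrt k \log k)}$, so their product matches the claimed bound.

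For (a), step~(1) invokes $\unbreakAlg$, which costs $2^{O(\sqrt k \log k)}$ by Lemma~\ref{lemma:solving-unbreakable-segments}. Step~(2) enumerates at most $|V|^2 = O(k^4)$ cheap pairs $(u,v)$; for each cheap vertex the candidate ``correct'' neighborhoods differ from $N_G$ by at most $2\sqrt k$ edges, giving $2^{O(\sqrt k \log k)}$ choices by Lemma~\ref{lem:enum-small}. The total supply of expensive vertices is at most $2\sqrt k$ by Observation~\ref{obs:cheap-vs-small}, so the candidate expensive subsets $X$ number $2^{O(\sqrt k)}$. Within each branch we run $\unbreakAlg$ once more on $G[U]$ (again $2^{O(\sqrt k \log k)}$) and brute-force $G[X]$; since $|X| \le 2\sqrt k$, enumerating all $(2\sqrt k)!$ threshold orderings of $G[X]$ and selecting the cheapest takes $2^{O(\sqrt k \log k)}$ time. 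Multiplying all factors yields $2^{O(\sqrt k \log k)}$ non-recursive work per call.

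For (b), the key structural observation is that every $S$ passed into a recursive call has the form $R = R_C \cup R_I$ with both pieces drawn from bounded pools. Whenever $\solveAlg$ recurses, $R_C = N_u$ for some cheap vertex $u$ and some guessed neighborhood at editing distance at most $2\sqrt k$ from $N_G(u) \cap C$; ranging over all $u \in V$ yields at most $|V| \cdot 2^{O(\sqrt k \log k)} = 2^{O(\sqrt k \log k)}$ distinct possibilities for $R_C$. A dual argument handles $R_I$: by the complete-split property of threshold decompositions, $R_I$ is essentially the $I$-side non-neighborhood $I_S \setminus (N_v \cap I)$ of the upper splitting-pair vertex $v$, perturbed only by the small expensive set $X$ (with $|X| \le 2\sqrt k$), and hence also lies in a pool of size $|V| \cdot 2^{O(\sqrt k \log k)} \cdot 2^{O(\sqrt k)} = 2^{O(\sqrt k \log k)}$. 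Consequently the memoization table contains at most $2^{O(\sqrt k \log k)}$ entries, and each memoized-hit lookup takes only polynomial time.

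Combining (a) and (b), the total time is $2^{O(\sqrt k \log k)} \cdot 2^{O(\sqrt k \log k)} = 2^{O(\sqrt k \log k)}$, as required. The main technical obstacle is establishing (b) rigorously: the $R_I$ computed at a recursion step is defined relative to the current $I_S$ rather than the global $I$, and so along a long ancestry chain the removed set $I \setminus R_I$ is the union of many small ``expensive'' and ``non-neighborhood'' contributions. One must argue by induction on recursion depth that, despite this accumulation, the resulting $R_I$ always collapses to a description of the form ``the $I$-side non-neighborhood of a single cheap $v$, minus a small expensive $X$'', ensuring the claimed bound on the number of distinct $R_I$ sets.
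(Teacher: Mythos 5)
Your overall strategy is the same as the paper's: charge each call its non-recursive work, bound that work per call by $2^{O(\sqrt k\log k)}$, and use memoization to argue that only $2^{O(\sqrt k\log k)}$ distinct fourth arguments $S$ ever occur. Part (a) of your argument is fine and matches the paper. The problem is part (b), which is the heart of the lemma: you do not prove it. You explicitly defer the key claim (``one must argue by induction on recursion depth that \dots the resulting $R_I$ always collapses to a description of the form \dots''), and without that argument the bound on the number of distinct recursion arguments is unsupported. Under your reading, where $R_I$ is computed relative to the current $I\cap S$, the a priori bound on the number of distinct sets multiplies by the number of configurations at every level of the recursion, and since the depth can be polynomial in $k$ this only gives a bound of the shape $2^{O(\sqrt k\log k)\cdot \mathrm{poly}(k)}$, which is far too weak. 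So, as written, the proposal has a genuine gap exactly at the counting step.

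The paper closes this step without any induction, by exploiting how \solveAlg{} is defined: the split partition $(C,I)$ is fixed and passed unchanged through every call (only $S$ shrinks), and the recursion argument is $R=R_C\cup R_I$ with $R_C=N_u$ and $R_I=I\setminus\bigl(X\cup(N_v\cap I)\bigr)$. Hence $R$ is a function of the tuple $(u,v,N_u,N_v,X)$ alone, independent of $S$ and of the call's ancestry. The number of such tuples is at most $O(k^4)$ choices for $u,v$ (since $|V(G)|=O(k^2)$), times $2^{O(\sqrt k\log k)}$ choices for each of $N_u$ and $N_v$ (they differ from the actual neighborhoods by at most $2\sqrt k$ pairs, Lemma~\ref{lem:enum-small}), times $2^{O(\sqrt k)}$ choices of $X$ among the at most $2\sqrt k$ expensive-labelled vertices (Observation~\ref{obs:cheap-vs-small}); this is $2^{O(\sqrt k\log k)}$ in total, so at most that many sets are ever charged, with no accumulation issue. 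To repair your proof you should either adopt this convention (the splitting formulas refer to the global partition, not to $I\cap S$) or actually carry out the collapsing induction you postponed; in its current form the argument is incomplete.
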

\begin{proof}
  By charging a set~$S$ for which \solveAlg{} is called with input
  $(G,(C,I),k,S)$ every operation except the recursive call, we need to
  \textit{(i)} show that there are at most $2^{O(\sqrt k \log k)}$ many sets $S
  \subseteq V$ for which \solveAlg{} is called, and \textit{(ii)} that the work
  done inside one such call is at most $2^{O(\sqrt k \log k)}$.
  
  For Case \textit{(i)}, we simply
  note that when \solveAlg{} is called with a set~$S$, the sets~$R$ on which we
  recurse are uniquely defined by $u,v,N_u,N_v,X$, and there are at most $O(k^4)
  \cdot 2^{O(\sqrt k \log k)^3} = 2^{O(\sqrt k \log k)}$ such configurations, so
  at most $2^{O(\sqrt k \log k)}$ sets are charged.
  Case \textit{(ii)} follows from the fact that we guess two vertices,~$u$
  and~$v$ and three sets,~$N_u$,~$N_v$ and~$X$.  For each choice we run
  \unbreakAlg{}, which runs in time $2^{O(\sqrt k \log k)}$ by
  Lemma~\ref{lemma:solving-unbreakable-segments}, and the brute force solution
  takes time $2^{O(\sqrt k \log(\sqrt k))}$.  The recursive call is charged to a
  smaller set, and merging the solutions into the final solution we return,~$F$,
  takes polynomial time.
  
  The two cases show that we charge at most $2^{O(\sqrt k \log k)}$ sets with
  $2^{O(\sqrt k \log k)}$ work, and hence \solveAlg{} completes after
  $2^{O(\sqrt k \log k)}$ steps.
\end{proof}

\bigskip

To conclude we observe that Theorem~\ref{thm:t-e-subept} follows directly from
the above exposition.  Given an input $(G,k)$ to \TE{}, from the previous
section we can in polynomial time obtain an equivalent instance with at most
$O(k^2)$ vertices.  Furthermore, by
Lemma~\ref{lemma:list-relevant-split-partitions} we may in time $2^{O(\sqrt k
  \log k)}$ time assume we are solving the problem \STE{}.  Finally, by
Lemmata~\ref{lemma:subept-correctness} and~\ref{lemma:subept-time}, the theorem
follows.

\subsection{Editing to Chain Graphs}
\label{sec:subept-chain}

We finally describe which steps are needed to change the algorithm above into an
algorithm correctly solving \pname{Chain Editing} in subexponential time.

The main difference between \pname{Chain Editing} and \pname{Threshold Editing}
is that it is far from clear that the number of bipartitions is subexponential,
that is, is there a bipartite equivalent of the bound of the potential split
partitions as in Lemma~\ref{lemma:list-relevant-split-partitions}?
If we were able to enumerate all such ``potential bipartitions'' in
subexponential time, we could simply run a very similar algorithm to the one
above on the problem \pname{Bipartite Chain Editing}, where we are asked to
respect the bipartition (see Section~\ref{sec:np-hardness-chain} for the
definition of this problem).

It turns out that we indeed are able to enumerate all such potential
bipartitions within the allowed time:

\begin{lemma}
  \label{lem:potential-bipartitions}
  There is an algorithm which, given an instance $(G,k)$ for \pname{Chain
    Editing}, enumerates $\binom{|V|}{O(\sqrt k)} = 2^{O(\sqrt k \log |V|)}$
  bipartite graphs $H = (A,B,E')$ with $\left| E \triangle E' \right| \leq k$
  such that if $(G,k)$ is a yes instance, then one output $(H,k)$ will be a yes
  instance for \pname{Bipartite Chain Editing}, and furthermore is any yes
  instance $(H,k)$ is output, then $(G,k)$ is a yes instance.
  This also holds for the deletion and completion versions.
\end{lemma}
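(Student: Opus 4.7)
My plan is to adapt the argument of Lemma~\ref{lemma:list-relevant-split-partitions} to bipartitions rather than split partitions. Given a chain editing $F$ of $(G,k)$ with resulting chain graph $G \triangle F$ having bipartition $(A^*, B^*)$, I want to enumerate a list of candidate bipartitions guaranteed to contain $(A^*, B^*)$. From each candidate $(A,B)$ I construct the bipartite graph $H := (A, B, E \cap (A \times B))$ by removing the intra-side edges of $G$, and hand $(H, k - |E \setminus E(H)|)$ to the \pname{Bipartite Chain Editing} algorithm.

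The key observation, paralleling the cheap/expensive dichotomy of Section~\ref{sec:cheap-or-expensive}, is as follows. By (the chain analogue of) Observation~\ref{obs:cheap-vs-small} there are at most $O(\sqrt{k})$ expensive vertices in any solution $F$; if this exhausts $V$ then $|V| = O(\sqrt{k})$ and I enumerate every bipartition of $V$ outright. Otherwise, pick any cheap vertex $v$: a vertex $u \neq v$ lies on the same side as $v$ in $(A^*, B^*)$ exactly when $uv \notin E(G \triangle F)$, which agrees with $uv \notin E(G)$ up to the at most $2\sqrt{k}$ edges of $F$ incident to $v$. Consequently the \emph{natural bipartition} $(A_v, B_v) := (\{v\} \cup (V \setminus N_G[v]),\ N_G(v))$ is within $2\sqrt{k}$ vertex swaps of $(A^*, B^*)$.

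The enumeration step therefore tries, for every $v \in V$ and every subset $S \subseteq V$ with $|S| \leq 2\sqrt{k}$, the bipartition $(A_v \triangle S,\ B_v \triangle S)$. The number of candidates is $|V| \cdot \binom{|V|}{2\sqrt{k}} = 2^{O(\sqrt{k}\log|V|)}$, which after the $O(k^2)$-vertex kernelization of Theorem~\ref{theorem:chain-kernel} yields the claimed $2^{O(\sqrt{k}\log k)}$ bound. For each candidate I build $H$ and discard it whenever $|E \setminus E(H)| > k$. When $(A,B) = (A^*, B^*)$ the discarded edges are precisely the intra-side edits of $F$, so the test passes, and the remaining inter-side edits of $F$ form a BCE solution for $H$ of size at most $k - |E \setminus E(H)|$.

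The main subtlety will be the budget bookkeeping. Any chain editing $F$ of $G$ decomposes uniquely, once the bipartition of $G \triangle F$ is fixed, into an intra-side part (absorbed into the edit distance $|E \triangle E(H)|$) and an inter-side part (which must serve as the BCE solution on $H$); their sizes sum to $|F| \leq k$. Once this coupling is in place the equivalence is mechanical: a BCE solution $F'$ on an enumerated $H$ lifts back via $F := F' \cup (E \setminus E(H))$ with $G \triangle F = H \triangle F'$ and $|F| = |F'| + |E \triangle E(H)|$, while conversely any chain editing of $G$ yields a bipartition recoverable from the natural bipartition at any cheap vertex it certifies. The deletion and completion variants go through verbatim, since the intra-side and inter-side parts of a pure-deletion (resp.\ pure-completion) editing are themselves pure deletions (resp.\ completions).
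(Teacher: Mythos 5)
Your proposal breaks at its central claim. You assert that for a cheap vertex $v$, a vertex $u$ lies on the same side as $v$ in $(A^*,B^*)$ exactly when $uv\notin E(G\triangle F)$; only the forward direction is true. A chain graph is bipartite, not complete bipartite: a vertex $v\in A^*$ is adjacent only to the $B^*$-vertices of sufficiently high level, so every $B^*$-vertex of lower level (and every isolated vertex of $G\triangle F$) is a non-neighbour of $v$ sitting on the \emph{opposite} side, and there can be far more than $O(\sqrt k)$ of them. This is precisely where the analogy with Lemma~\ref{lemma:list-relevant-split-partitions} fails: for split partitions the clique side is complete, so non-adjacency to a clique vertex genuinely forces a vertex out of the clique, whereas the bipartition of a bipartite graph is not even approximately determined by the neighbourhood of one vertex. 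Consequently your ``natural bipartition'' $(\{v\}\cup(V\setminus N_G[v]),\,N_G(v))$ need not be within $2\sqrt k$ swaps of $(A^*,B^*)$ for an arbitrary cheap $v$, and even choosing $v$ extremal on one side only recovers that side's picture: the cheap $A^*$-vertex with the largest neighbourhood still misses all $B^*$-vertices whose levels lie below it (those attached only to the few expensive extremal $A^*$-vertices), and the symmetric choice in $B^*$ misses the analogous $A^*$-vertices; both sets can simultaneously be much larger than $\sqrt k$. So the enumerated family is not guaranteed to contain a workable bipartition, and the budget bookkeeping in your final paragraph never gets started. (A smaller mismatch: the lemma outputs pairs $(H,k)$, not $(H,k-|E\setminus E(H)|)$, but that is cosmetic by comparison.)

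The paper's proof is organised around exactly this obstacle. It guesses the extremal cheap vertices $v_A\in A$ and $v_B\in B$, their neighbourhoods in $H$, and the two all-expensive extremal sets $A_X$ and $B_X$ beyond them (each of size $O(\sqrt k)$); what remains is the genuinely ambiguous set $Z$ of $A$-vertices beyond $\lev(v_B)$ and $B$-vertices beyond $\lev(v_A)$, which may be large. The key observations are that $Z$ is an independent set in $H$ whose neighbours lie entirely in $A_X\cup B_X$, so after additionally guessing the internal level structure of these two small sets, each vertex of $Z$ can be placed greedily and independently on whichever side minimises its cost. Your proposal contains no mechanism for these vertices, and some such mechanism is required: their placement does matter once they have edges of $G$ into $A_X\cup B_X$ or among themselves, so one cannot simply absorb them into an $O(\sqrt k)$-size swap set. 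Repairing your argument essentially amounts to reproducing this part of the paper's proof.
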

\begin{proof}
  We first mention that it is trivial to change the below proof into the proofs
  for the deletion and completion versions; One simply disallow one of the
  operations.  So we will prove only the editing version.  Furthermore, it is
  clear to see that if any output instance $(H,k)$ is a yes instance for
  \pname{Bipartite Chain Editing}, then $(G,k)$ was a yes instance for
  \pname{Chain Editing}.
  
  Consider any solution $H = (A,B,E')$ for an input instance $(G,k)$.  If either
  $\min\{|A|,|B|\} \leq 5\sqrt k$, then we can simply guess every such in
  subexponential time.  Hence, we assume that both sides of~$H$ are large.  But
  this means, by Observation~\ref{obs:cheap-vs-small}, that both $A$ and $B$
  have cheap vertices.  Let $v_A$ be a cheap vertex as low as possible in $A$
  and $v_B$ be a cheap vertex as high as possible in $B$.  It immediately
  follows from the same observation that the set of vertices below $v_A$, $A_X$
  is a set of expensive vertices, and the same for the vertices above $v_B$,
  $B_X$.  Since $v_A$ and $v_B$, we know that we can in subexponential time
  correctly guess their neighborhoods in $H$ and we can similarly guess $A_X$
  and $B_X$.
  
  Now, since we know $v_A$, $v_B$, $N_H(v_A)$ and $N_H(v_B)$, as well as $A_X$
  and $B_X$, the only vertices we do now know where to place, are the vertices
  in $A$ which are in the levels above $\lev(v_B)$, call them $A_Y$, and the
  vertices in $b$ which are in the levels below $\lev(v_A)$.  However, we know
  which set this is, that is, we know $Z = A_Y \cup B_Y$.  Define now $A_M = A
  \setminus ( A_Y \cup A_X \cup \{v_A\})$ and similarly $B_M = B \setminus ( B_Y
  \cup B_X \cup \{v_B\})$.  These are the vertices living in the middle of $A$
  and $B$, respectively.
  
  We now know that the vertices of $Z$ should form an independent set.  This
  follows from the fact that $A_M$ and $B_M$ are both non-empty.  Hence, the
  vertices of $A_Y$ are in higher levels than all of $B_Y$, and since there are
  no edges going from a vertex in $A$ to a vertex lower in $B$, and each of $A$
  and $B$ are independent sets, $Z$ must be an independent set.
  
  The following is the crucial last step.  We can in subexponential time guess
  the partitioning of levels of both $A_X$ and of $B_X$, since they are both of
  sizes at most $2 \sqrt k$.  When knowing these levels, we can greedily insert
  each vertex in $Z$ into either $A$ and $B$ by pointwise minimizing the cost; A
  vertex $z \in Z$ can safely be places in the level of $A$ or $B$ which
  minimizes the cost of making it adjacent to only the vertices of $B_X$ above
  its level, or by making it adjacent to only the vertices below its level in
  $A_X$.
\end{proof}

Given the above lemma, we may work on the more restricted problem,
\pname{Bipartite Chain Editing}.  The rest of the algorithm actually goes
through without any noticeable changes:
\begin{theorem}
  \label{thm:chain-editing-subept}
  \pname{Chain Editing} is solvable in time~$2^{O(\sqrt k \log k)} + \poly(n)$.
\end{theorem}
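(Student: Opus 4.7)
The plan is to mirror the subexponential algorithm for \TE{} from Section~\ref{sec:subept-threshold}, replacing the split-partition enumeration with the bipartition enumeration of Lemma~\ref{lem:potential-bipartitions} and replacing the split/threshold-specific subroutines with their chain analogues. First I would invoke Theorem~\ref{theorem:chain-kernel} to reduce, in polynomial time, the input instance $(G,k)$ to an equivalent instance on $O(k^2)$ vertices; this is where the additive $\poly(n)$ in the running time is absorbed. From this point on, every vertex count is polynomial in $k$, so any step running in time $2^{O(\sqrt k \log k)}$ in the kernel size remains $2^{O(\sqrt k \log k)}$ in $k$.

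Next, I would call Lemma~\ref{lem:potential-bipartitions} on the kernel, obtaining at most $2^{O(\sqrt k \log k)}$ candidate bipartite graphs $H = (A,B,E')$ with $|E \triangle E'| \leq k$, such that $(G,k)$ is a \pname{Chain Editing} yes-instance iff at least one pair $(H,k)$ is a \pname{Bipartite Chain Editing} yes-instance. For each candidate I would run the bipartite chain variant of the \solveAlg{}/\unbreakAlg{} framework from Section~\ref{sec:subept-threshold}, where the roles of clique side / independent side of a threshold decomposition $(\mathcal{C},\mathcal{I})$ are played by $\mathcal{A}$ and $\mathcal{B}$ in the corresponding chain decomposition, and the splitting-pair / unbreakable-segment / transfer-level definitions carry over verbatim after substituting $C \leftrightarrow A$ and $I \leftrightarrow B$. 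Because the obstructions differ only between threshold and chain graphs (as recorded in Proposition~\ref{prop:chain-defs}), and because we are now working with a fixed bipartition that must be respected, the correctness and running-time analyses of Lemmata~\ref{lemma:solving-unbreakable-segments},~\ref{lemma:subept-correctness}, and~\ref{lemma:subept-time} go through with only cosmetic modifications.

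The main place to be careful will be the unbreakable-segment subroutine: I need the chain analogues of Lemmata~\ref{lemma:unbreakable-graph-divind-level}, \ref{lemma:unbreakable-graph-few-levels}, and \ref{lemma:unbreakable-graph-cheap-complete}. The transfer level is still well-defined because a chain decomposition has the same level structure as a threshold decomposition; the bound of $2\sqrt k + 1$ levels follows from the observation that above the transfer level every $A$-fragment consists solely of expensive vertices and below it every $B$-fragment does, and there are at most $2\sqrt k$ expensive vertices in total (Observation~\ref{obs:cheap-vs-small}). The analogue of Lemma~\ref{lemma:unbreakable-graph-cheap-complete} becomes: the cheap vertices form a chain graph whose nested structure is completely determined by the transfer level, so their mutual adjacencies in $H$ are forced. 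Given these, the unbreakable algorithm simply guesses the number of levels, the transfer level, and the placement of the $O(\sqrt k)$ expensive vertices into levels, and then greedily assigns each cheap vertex to the level minimizing the cost of its edges to the already-placed expensive vertices on the opposite side, which gives $2^{O(\sqrt k \log k)}$ time per branch.

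Finally, the recursive \solveAlg{} wrapper is unchanged: for each candidate uppermost splitting pair $(u,v)$, each guess of the cheap vertices' correct neighborhoods $N_u, N_v$, and each guess of the expensive vertices $X$ lying between the two levels of the splitting pair, the kernel partitions into an unbreakable top segment $U$ (solved by the chain \unbreakAlg{}), the small middle $X$ of size at most $2\sqrt k$ (solved by brute force in $2^{O(\sqrt k \log k)}$ time), and a residual bottom segment $R$ (solved by recursion). The charging argument of Lemma~\ref{lemma:subept-time} bounds the number of distinct recursive subcalls by $2^{O(\sqrt k \log k)}$, each doing $2^{O(\sqrt k \log k)}$ work, so the total is $2^{O(\sqrt k \log k)}$. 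Multiplying by the $2^{O(\sqrt k \log k)}$ candidate bipartitions from Lemma~\ref{lem:potential-bipartitions} and adding the polynomial kernelization cost yields the claimed $2^{O(\sqrt k \log k)} + \poly(n)$ bound.
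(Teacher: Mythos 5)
Your proposal is correct, and its skeleton (kernelize via Theorem~\ref{theorem:chain-kernel}, enumerate candidate bipartitions via Lemma~\ref{lem:potential-bipartitions}, then solve the fixed-bipartition problem in time $2^{O(\sqrt k \log k)}$ per candidate) matches the paper. The difference is in the last step: you re-derive chain analogues of the splitting-pair, unbreakable-segment and transfer-level machinery and rerun the adapted recursion natively on the bipartite graph, whereas the paper never reopens that machinery --- for each candidate bipartition $(A,B)$ it simply completes $A$ into a clique and invokes the already-proven \STE{} algorithm as a black box, justified by Proposition~\ref{prop:chain-defs}: since in both \pname{Bipartite Chain Editing} and \STE{} the editing set is confined to $A \times B$ (resp.\ $C \times I$), the edited bipartite graph is a chain graph if and only if its cliquification is a threshold graph, so the two instances are equivalent with the same parameter. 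Your native adaptation does go through --- the proofs of Lemmata~\ref{lemma:unbreakable-graph-divind-level}, \ref{lemma:unbreakable-graph-few-levels} and~\ref{lemma:unbreakable-graph-cheap-complete} use only the level structure, the cheap/expensive dichotomy and the fact that edits avoid pairs internal to one side, and your observation that the analogue of Lemma~\ref{lemma:unbreakable-graph-cheap-complete} forces a complete bipartite graph between cheap $A$-vertices and cheap $B$-vertices is the right one --- but it obliges you to actually verify each ``cosmetic modification,'' which is precisely the re-checking the paper's one-line cliquification reduction is designed to avoid. What your route buys is a self-contained bipartite algorithm; what the paper's route buys is brevity and verbatim reuse of Lemmata~\ref{lemma:subept-correctness} and~\ref{lemma:subept-time}.
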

\begin{proof}
  On input $(G,k)$ we first run the kernelization algorithm from
  Section~\ref{sec:adapt-kern-modif}, and then we enumerate every potential
  bipartition according to Lemma~\ref{lem:potential-bipartitions}.  Now, for
  each bipartition $(A,B)$ we make $A$ into a clique, and run the \STE{}
  algorithm from Section~\ref{sec:subept-threshold} (see also
  Proposition~\ref{prop:chain-defs}).

  Now, $(G,k)$ is a yes instance if and only if there is a bipartition $(A,B)$
  such that when making $A$ into a clique, the resulting instance is a yes
  instance for \STE.
\end{proof}

\begin{corollary}
  \pname{Chain Deletion} and \pname{Chain Completion} are solvable in
  time~$2^{O(\sqrt k \log k)} + \poly(n)$.
\end{corollary}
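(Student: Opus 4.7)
The plan is to mirror the proof of Theorem~\ref{thm:chain-editing-subept} step by step, replacing each subroutine with its restricted variant. First, the kernelization described in Section~\ref{sec:adapt-kern-modif} (Theorem~\ref{theorem:chain-kernel}) already produces kernels with $O(k^2)$ vertices for all three variants, so the $\poly(n)$ additive term is inherited directly. Second, Lemma~\ref{lem:potential-bipartitions} explicitly states that the subexponential enumeration of potential bipartitions continues to work for the deletion and completion versions; this reduces \pname{Chain Deletion} and \pname{Chain Completion} to the bipartition-respecting variants, and in turn (by making $A$ into a clique as in the proof of Theorem~\ref{thm:chain-editing-subept} and Proposition~\ref{prop:chain-defs}) to the \emph{deletion-only} and \emph{completion-only} restrictions of \STE{}, where only edits between $C$ and $I$ of the allowed type are permitted.

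The remaining task is to show that \unbreakAlg{} and \solveAlg{} solve these restricted variants of \STE{} within the same $2^{O(\sqrt k \log k)}$ budget. The notions of cheap and expensive vertex, splitting pair, and unbreakable segment depend only on the number (not sign) of edits incident to a vertex, so Lemmata~\ref{lemma:unbreakable-graph-divind-level}, \ref{lemma:unbreakable-graph-few-levels} and~\ref{lemma:unbreakable-graph-cheap-complete} all carry over verbatim. In \unbreakAlg{}, the only place where the \emph{type} of edit matters is the final greedy placement of each cheap vertex into the level minimizing its adjacency cost to the expensive vertices: in the restricted version we simply forbid the disallowed direction, minimizing over only feasible levels, and declare the branch infeasible if no feasible level exists. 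In \solveAlg{}, the recursion structure and the subexponential counting argument are completely unaffected; the brute-force step on the at most $2\sqrt{k}$ expensive vertices in $X$ enumerates only edit sets of the allowed type, and the guessed neighborhoods~$N_u$ and~$N_v$ of the splitting pair must be consistent with the allowed operation.

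The main (and only) obstacle is a bookkeeping one: when we turn $A$ into a clique to invoke the threshold algorithm, we must not charge these ``free'' within-side additions against the budget $k$, because in the chain problem only bipartite edits are permitted in the first place. This is already handled by the fact that we are solving the bipartition-respecting variant (\BCE{} and its deletion/completion analogues), so the editing set $F$ is by definition confined to $A \times B$ and the within-$A$ clique edges of the auxiliary split graph are never edited. Combining the restricted \solveAlg{} with the enumeration of bipartitions from Lemma~\ref{lem:potential-bipartitions} and the $O(k^2)$-vertex kernel from Theorem~\ref{theorem:chain-kernel} yields the claimed $2^{O(\sqrt k \log k)} + \poly(n)$ running time for both \pname{Chain Deletion} and \pname{Chain Completion}.
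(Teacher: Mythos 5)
Your proposal is correct and matches the route the paper intends: the paper leaves this corollary without its own proof block, expecting the reader to mirror Theorem~\ref{thm:chain-editing-subept}, and your argument does exactly that, correctly invoking Theorem~\ref{theorem:chain-kernel} for the kernel, the ``deletion and completion'' clause of Lemma~\ref{lem:potential-bipartitions} for the bipartition enumeration, and the observation that cheap/expensive, splitting pairs, and unbreakable segments depend only on edit counts so that \unbreakAlg{} and \solveAlg{} carry over once the greedy level placement and the brute-force step on $X$ are restricted to the permitted edit direction. You have in fact supplied slightly more justification than the paper gives, which is to the good.
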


\section{Conclusion}
\label{sec:conclusion}

In this paper we showed that the problems of editing edges to obtain a threshold
graph and editing edges to obtain a chain graph are \NP-complete.  The latter
solves a conjecture in the positive from Natanzon et
al.~\cite{natanzon2001complexity} and both results answer open questions from
Sharan~\cite{sharan2002graph}, Burzyn et al.~\cite{burzyn2006np}, and
Mancini~\cite{mancini2008graph}.

On the positive side, we show that both \TE{} and \CE{} admit quadratic kernels,
i.e., given a graph $(G,k)$, we can in polynomial time find an equivalent
instance $(G',k)$ where $|V(G')| = O(k^2)$, and furthermore,~$G'$ is an induced
subgraph of~$G$.
We also show that these results hold for the deletion and completion variants as
well, and these results answer open questions by Liu et al.\ in a recent survey
on kernelization complexity of graph modification
problems~\cite{liu2014overview}.

Finally we show that both problems admit subexponential algorithms of time
complexity $2^{O(\sqrt{k} \log k)} + \poly(n)$.  This answers a recent open
question by Liu et al.~\cite{liu2015edge}.

\bigskip

In addition, we give a proof for the \NP-hardness of \pname{Chordal Editing}
which has been announced several places but which the authors have been unable
to find.
However, our \NP-completeness proof for \pname{Chordal Editing} suffers a
quadratic blow-up from \pname{3Sat}, i.e., $k = \Theta (|\phi|^2)$, so we cannot
get better than $2^{o(\sqrt k)} \cdot \poly(n)$ lower bounds from this
technique.  The current best algorithm for \pname{Chordal
  Editing}\footnote{Here, the authors take \pname{Chordal Editing} to allow
  vertex deletions.} runs in time $2^{O(k \log k)} \cdot
\poly(n)$~\cite{cao2014chordal}, and so this leaves a big gap.  It would be
interesting to see if we can achieve tighter lower bounds, e.g., $2^{o(k)} \cdot
\poly(n)$ time lower bounds for \pname{Chordal Editing} assuming ETH together
with a $2^{O(k)} \cdot \poly(n)$ time algorithm.

\bigskip

\paragraph{Acknowledgment.}  The authors would like to express their gratitude
to Ulrik Brandes and Mehwish Nasim for helpful comments on an early draft of
this paper.

\medskip\noindent
The research leading to these results has received funding from the Research
Council of Norway, Bergen Research Foundation under the project Beating Hardness
by Preprocessing and the European Research Council under the European Union's
Seventh Framework Programme (FP/2007-2013) / ERC Grant Agreement n.~267959.

\smallskip\noindent
Blair D.\ Sullivan supported in part by the Gordon \& Betty Moore Foundation as
a DDD Investigator and the DARPA GRAPHS program under SPAWAR Grant
N66001-14-1-4063.  Any opinions, findings, and conclusions or recommendations
expressed in this publication are those of the author(s) and do not necessarily
reflect the views of DARPA, SSC Pacific, or the Moore Foundation.


\begin{thebibliography}{10}

\bibitem{alon2009fast}
Noga Alon, Daniel Lokshtanov, and Saket Saurabh.
\newblock Fast {FAST}.
\newblock In {\em ICALP}, volume 5555 of {\em Lecture Notes in Computer
  Science}, pages 49--58. Springer, 2009.

\bibitem{brandes2014invited}
Ulrik Brandes.
\newblock Social network algorithmics.
\newblock ISAAC, 2014.

\bibitem{brandstadt1999graph}
Andreas Brandst\"{a}dt, {Van Bang} Le, and Jeremy~P. Spinrad.
\newblock {\em Graph Classes. {A} Survey}.
\newblock SIAM Monographs on Discrete Mathematics and Applications. SIAM,
  Philadelphia, USA, 1999.

\bibitem{burzyn2006np}
Pablo Burzyn, Flavia Bonomo, and Guillermo Dur{\'a}n.
\newblock Np-completeness results for edge modification problems.
\newblock {\em Discrete Applied Mathematics}, 154(13):1824--1844, 2006.

\bibitem{cai1996fixed}
Leizhen Cai.
\newblock Fixed-parameter tractability of graph modification problems for
  hereditary properties.
\newblock {\em Information Processing Letters}, 58(4):171--176, 1996.

\bibitem{cao2014chordal}
Yixin Cao and D{\'{a}}niel Marx.
\newblock Chordal editing is fixed-parameter tractable.
\newblock In Ernst~W. Mayr and Natacha Portier, editors, {\em STACS}, volume~25
  of {\em LIPIcs}, pages 214--225, 2014.

\bibitem{chen2003computing}
Zhi-Zhong Chen, Tao Jiang, and Guohui Lin.
\newblock Computing phylogenetic roots with bounded degrees and errors.
\newblock {\em SIAM Journal on Computing}, 32(4):864--879, 2003.

\bibitem{damaschke2006parameterized}
Peter Damaschke.
\newblock Parameterized enumeration, transversals, and imperfect phylogeny
  reconstruction.
\newblock {\em Theoretical Computer Science}, 351(3):337--350, 2006.

\bibitem{dehne2006cluster}
Frank Dehne, Michael~A Langston, Xuemei Luo, Sylvain Pitre, Peter Shaw, and Yun
  Zhang.
\newblock The cluster editing problem: Implementations and experiments.
\newblock In {\em Parameterized and Exact Computation}, pages 13--24. Springer,
  2006.

\bibitem{DrangeFPV14}
P{\aa}l~Gr{\o}n{\aa}s Drange, Fedor~V.\ Fomin, Micha{\l} Pilipczuk, and Yngve
  Villanger.
\newblock Exploring subexponential parameterized complexity of completion
  problems.
\newblock In {\em STACS}, volume~25 of {\em LIPIcs}, pages 288--299, Dagstuhl,
  Germany, 2014.

\bibitem{drange2014apolynomial}
P{\aa}l~Gr{\o}n{\aa}s Drange and Micha{\l} Pilipczuk.
\newblock A polynomial kernel for trivially perfect editing.
\newblock {\em CoRR}, abs/1412.7558, 2014.

\bibitem{feder2009approximating}
Tom{\'a}s Feder, Heikki Mannila, and Evimaria Terzi.
\newblock Approximating the minimum chain completion problem.
\newblock {\em Information Processing Letters}, 109(17):980--985, 2009.

\bibitem{flum2006parameterized}
J{\"o}rg Flum and Martin Grohe.
\newblock {\em Parameterized complexity theory}.
\newblock Springer-Verlag New York Inc, 2006.

\bibitem{fomin2012subexponential}
Fedor~V. Fomin and Yngve Villanger.
\newblock Subexponential parameterized algorithm for minimum fill-in.
\newblock {\em SIAM J. Comput.}, 42(6):2197--2216, 2013.

\bibitem{ghosh2013faster}
Esha Ghosh, Sudeshna Kolay, Mrinal Kumar, Pranabendu Misra, Fahad Panolan,
  Ashutosh Rai, and M.S. Ramanujan.
\newblock Faster parameterized algorithms for deletion to split graphs.
\newblock {\em Algorithmica}, pages 1--18, 2013.

\bibitem{golumbic-book}
Martin~Charles Golumbic.
\newblock {\em Algorithmic Graph Theory and Perfect Graphs}.
\newblock Academic Press, New York, 1980.

\bibitem{guo2007problem}
Jiong Guo.
\newblock Problem kernels for {NP}-complete edge deletion problems: Split and
  related graphs.
\newblock In {\em ISAAC}, volume 4835 of {\em Lecture Notes in Computer
  Science}, pages 915--926. Springer, 2007.

\bibitem{hammer1981splittance}
Peter~L Hammer and Bruno Simeone.
\newblock The splittance of a graph.
\newblock {\em Combinatorica}, 1(3):275--284, 1981.

\bibitem{hammer1981threshold}
PL~Hammer, T~Ibaraki, and B~Simeone.
\newblock Threshold sequences.
\newblock {\em SIAM Journal on Algebraic Discrete Methods}, 2(1):39--49, 1981.

\bibitem{liu2014overview}
Yunlong Liu, Jianxin Wang, and Jiong Guo.
\newblock An overview of kernelization algorithms for graph modification
  problems.
\newblock {\em Tsinghua Science and Technology}, 19(4):346--357, 2014.

\bibitem{liu2012complexity}
Yunlong Liu, Jianxin Wang, Jiong Guo, and Jianer Chen.
\newblock Complexity and parameterized algorithms for cograph editing.
\newblock {\em Theoretical Computer Science}, 461:45--54, 2012.

\bibitem{liu2015edge}
Yunlong Liu, Jianxin Wang, Jie You, Jianer Chen, and Yixin Cao.
\newblock Edge deletion problems: Branching facilitated by modular
  decomposition.
\newblock {\em Theoretical Computer Science}, 573:63--70, 2015.

\bibitem{mahadev1995threshold}
Nadimpalli~VR Mahadev and Uri~N Peled.
\newblock {\em Threshold graphs and related topics}, volume~56.
\newblock Elsevier, 1995.

\bibitem{mancini2008graph}
Federico Mancini.
\newblock {\em Graph modification problems related to graph classes}.
\newblock PhD thesis, University of Bergen, 2008.

\bibitem{nastos2013familial}
James Nastos and Yong Gao.
\newblock Familial groups in social networks.
\newblock {\em Social Networks}, 35(3):439--450, 2013.

\bibitem{natanzon1999complexity}
Assaf Natanzon.
\newblock {\em Complexity and approximation of some graph modification
  problems}.
\newblock PhD thesis, Tel Aviv University, 1999.

\bibitem{natanzon2001complexity}
Assaf Natanzon, Ron Shamir, and Roded Sharan.
\newblock Complexity classification of some edge modification problems.
\newblock {\em Discrete Applied Mathematics}, 113(1):109--128, 2001.

\bibitem{schoch2015stars}
David Schoch and Ulrik Brandes.
\newblock Stars, neighborhood inclusion, and network centrality.
\newblock In {\em SIAM Workshop on Network Science}, 2015.

\bibitem{shamir2004cluster}
R.~Shamir, R.~Sharan, and D.~Tsur.
\newblock Cluster graph modification problems.
\newblock {\em Discrete Applied Mathematics}, 144(1):173--182, 2004.

\bibitem{sharan2002graph}
R.~Sharan.
\newblock {\em Graph modification problems and their applications to genomic
  research}.
\newblock PhD thesis, Tel-Aviv University, 2002.

\bibitem{yannakakis1981computing}
Mihalis Yannakakis.
\newblock Computing the minimum fill-in is {NP}-complete.
\newblock {\em SIAM Journal on Algebraic and Discrete Methods}, 2(1):77--79,
  1981.

\end{thebibliography}
\end{document}